\documentclass[11pt]{article}
\usepackage{fullpage}

\usepackage{times}
\usepackage{epsfig}
\usepackage{amsmath}
\usepackage{amssymb}
\usepackage{amstext}
\usepackage{amsmath}
\usepackage{xspace}
\usepackage{theorem}

\newtheorem{theorem}{Theorem}
\newtheorem{definition}{Definition}

\newtheorem{lemma}[theorem]{Lemma}

\newcommand{\qed}{\mbox{\ \ \ }\rule{6pt}{7pt} \bigskip}

\newcommand{\comment}[1]{}
\newenvironment{proof}{\noindent{\em Proof:}}{\hfill\qed}

\newenvironment{proofof}[1]{\noindent{\em Proof of #1:}}{\hfill\qed}

\newcommand{\Path}{{\mathcal P}}

\newcommand{\C}{{\mathcal C}}
\newcommand{\A}{{\mathcal A}}

\newcommand{\M}{{\mathcal M}}
\newcommand{\E}{{\mathcal E}}
\newcommand{\G}{{\mathcal G}}
\newcommand{\ld}[1]{{\ell_e^{#1}}}
\newcommand{\lef}{{\ld{\C}}}
\newcommand{\lei}{{\ld{\A}}}
\newcommand{\OPT}{{\textrm {OPT}}}

\title{Packing multiway cuts in capacitated graphs\footnote{The conference version of this paper is to appear at SODA 2009. This is the full version.}}
\author{Siddharth Barman\thanks{Computer Sciences Dept., University of
    Wisconsin - Madison, {\tt sid@cs.wisc.edu}. Supported in part by
    NSF award CCF-0643763.} \and Shuchi Chawla\thanks{Computer
    Sciences Dept., University of Wisconsin - Madison, {\tt
      shuchi@cs.wisc.edu}. Supported in part by NSF awards
    CCF-0643763 and CCF-0830494.}}  

\date{\today}

\begin{document}
\maketitle{}

\thispagestyle{empty}

\begin{abstract}
We consider the following ``multiway cut packing" problem in
undirected graphs: we are given a graph $G=(V,E)$ and $k$ commodities,
each corresponding to a set of terminals located at different vertices
in the graph; our goal is to produce a collection of cuts
$\{E_1,\cdots,E_k\}$ such that $E_i$ is a multiway cut for commodity
$i$ and the maximum load on any edge is minimized. The load on an edge
is defined to be the number of cuts in the solution crossing the
edge. In the capacitated version of the problem edges have capacities
$c_e$ and the goal is to minimize the maximum {\em relative} load on
any edge -- the ratio of the edge's load to its capacity. We present
the first constant factor approximations for this problem in arbitrary
undirected graphs. The multiway cut packing problem arises in the
context of graph labeling problems where we are given a partial
labeling of a set of items and a neighborhood structure over them,
and, informally stated, the goal is to complete the labeling in the
most consistent way. This problem was introduced by Rabani, Schulman,
and Swamy (SODA'08), who developed an $O(\log n/\log \log n)$
approximation for it in general graphs, as well as an improved
$O(\log^2 k)$ approximation in trees. Here $n$ is the number of nodes
in the graph.

We present an LP-based algorithm for the multiway cut packing problem
in general graphs that guarantees a maximum edge load of at most
$8\OPT+4$. Our rounding approach is based on the observation that
every instance of the problem admits a laminar solution (that is, no
pair of cuts in the solution crosses) that is near-optimal. For the
special case where each commodity has only two terminals and all
commodities share a common sink (the ``common sink $s$-$t$ cut
packing" problem) we guarantee a maximum load of $\OPT+2$. Both of
these variants are NP-hard; for the common-sink case our result is
nearly optimal.
\end{abstract}

\newpage
\setcounter{page}{1}

\section{Introduction}

We study the {\em multiway cut packing} problem (MCP) introduced by
Rabani, Schulman and Swamy~\cite{RSS08}. In this problem, we are given
$k$ instances of the multiway cut problem in a common graph, each
instance being a set of terminals at different locations in the graph.
Informally, our goal is to compute nearly-disjoint multiway cuts for
each of the instances. More precisely, we aim to minimize the maximum
number of cuts that any single edge in the graph belongs to. In the
weighted version of this problem, different edges have different
capacities; the goal is to minimize the maximum relative load of any
edge, where the relative load of an edge is the ratio of the number of
cuts it belongs to and its capacity.

The multiway cut packing problem belongs to the following class of
graph labeling problems. We are given a partially labeled set of $n$
items along with a weighted graph over them that encodes similarity
information among them. An item's label is a string of length $k$
where each coordinate of the string is either drawn from an alphabet
$\Sigma$, or is undetermined. Roughly speaking, the goal is to
complete the partial labeling in the most consistent possible
way. Note that completing a single specific entry (coordinate) of each
item label is like finding what we call a ``set multiway cut''---for
$\sigma\in\Sigma$ let $S^i_{\sigma}$ denote the set of nodes for which
the $i$th coordinate is labeled $\sigma$ in the partial labeling, then
a complete and consistent labeling for this coordinate is a partition
of the items into $|\Sigma|$ parts such that the $\sigma^{\text{th}}$
part contains the entire set $S^i_{\sigma}$. The cost of the labeling
for a single pair of neighboring items in the graph is measured by the
Hamming distance between the labels assigned to them. The overall cost
of the labeling can then be formalized as a certain norm of the vector
of (weighted) edge costs.

Different choices of norms for the overall cost give rise to different
objectives. Minimizing the $\ell_1$ norm, for example, is the same as
minimizing the sum of the edge costs. This problem decomposes into
finding $k$ minimum set multiway cuts. Each set multiway cut instance
can be reduced to a minimum multiway cut instance by simply merging
all the items in the same set $S_{\sigma}$ into a single node in the
graph, and can therefore be approximated to within a factor of
$1.5$~\cite{CKR00}. On the other hand, minimizing the $\ell_{\infty}$
norm of edge costs (equivalently, the maximum edge cost) becomes the
set multiway cut packing problem. Formally, in this problem, we are
given $k$ set multiway cut instances $S^1, \cdots, S^k$, where each
$S^i=S^i_1\times S^i_2\times \cdots\times S^i_{|\Sigma|}$. The goal is
to find $k$ cuts, with the $i$th cut separating every pair of
terminals that belong to sets $S^i_{j_1}$ and $S^i_{j_2}$ with $j_1\ne
j_2$, such that the maximum (weighted) cost of any edge is
minimized. When $|S^i_j|=1$ for all $i\in [k]$ and $j\in\Sigma$, this
is the multiway cut packing problem.

To our knowledge Rabani et al.~\cite{RSS08} were the first to consider
the multiway cut packing problem and provide approximation algorithms
for it. They used a linear programming relaxation of the problem along
with randomized rounding to obtain an $O(\frac{\log n}{\log \log n})$
approximation, where $n$ is the number of nodes in the given
graph\footnote{Rabani et al. claim in their paper that the same
approximation ratio holds for the set multiway cut packing problem
that arises in the context of graph labelings. However their approach
of merging nodes with the same attribute values (similar to what we
described above for minimizing the $\ell_1$ norm of edge costs) does
not work in this case. Roughly speaking, if nodes $u$ and $v$ have the
same $i$th attribute, and nodes $v$ and $w$ have the same $j$th
attribute, then this approach merges all three nodes, although an
optimal solution may end up separating $u$ from $w$ in some of the
cuts. We are not aware of any other approximation preserving reduction
between the two problems.}. This approximation ratio arises from an
application of the Chernoff bounds to the randomized rounding process,
and improves to an $O(1)$ factor when the optimal load is $\Omega(\log
n)$. When the underlying graph is a tree, Rabani et al. use a more
careful deterministic rounding technique to obtain an improved
$O(\log^2 k)$ approximation. The latter approximation factor holds
also for a more general multicut packing problem (described in more
detail below). One nice property of the latter approximation is that
it is independent of the size of the graph, and remains small as the
graph grows but $k$ remains fixed. Then, a natural open problem
related to their work is whether a similar approximation guarantee
independent of $n$ can be obtained even for general graphs.

{\bf Our results \& techniques.} We answer this question in the
positive. We employ the same linear programming relaxation for this
problem as Rabani et al., but develop a very different rounding
algorithm. In order to produce a good integral solution our rounding
algorithm requires a fractional collection of cuts that is not only
feasible for the linear program but also satisfies an additional good
property---the cut collection is laminar. In other words, when
interpreted appropriately as subsets of nodes, no two cuts in the
collection ``cross'' each other. Given such an input the rounding
process only incurs a small additive loss in performance---the final
(absolute) load on any edge is at most $3$ more than the load on that
edge of the fractional solution that we started out with. Of course
the laminarity condition comes at a cost -- not every fractional
solution to the cut packing LP can be interpreted as a laminar
collection of cuts (see, e.g., Figure~\ref{fig:lam-gap}). We show that
for the multiway cut problem any fractional collection of cuts can be
converted into a laminar one while losing only a multiplicative factor
of $8$ and an additive $o(1)$ amount in edge loads. Therefore, for
every edge $e$ we obtain a final edge load of $8\ld{\OPT}+4$, where
$\ld{\OPT}$ is the optimal load on the edge. We only load edges with
$c_e\ge 1$ and since the optimal cost is at least $1$ our algorithm
also obtains a purely multiplicative $12$ approximation.

Our laminarity based approach proves even more powerful in the special
case of {\em common-sink $s$-$t$ cut packing} problem or CSCP. In this
special case every multiway cut instance has only two terminals and
all the instances share a common sink $t$. We use these properties to
improve both the rounding and laminarity transformation algorithms,
and ensure a final load of at most $\ld{\OPT}+1$ for every edge
$e$. The CSCP is NP-hard (see Section~\ref{sec:NP-hard}) and so our
guarantee for this special case is the best possible.

In converting a fractional laminar solution to an integral one we use
an iterative rounding approach, assigning an integral cut at each
iteration to an appropriate ``innermost'' terminal. Throughout the
algorithm we maintain a partial integral cut collection and a partial
fractional one and ensure that these collections together are feasible
for the given multiway cut instances. As we round cuts, we ``shift''
or modify other fractional cuts so as to maintain bounds on edge
loads. Maintaining feasibility and edge loads simultaneously turns out
to be relatively straightforward in the case of common-sink $s$-$t$
cut packing -- we only need to ensure that none of the cuts in the
fractional or the integral collection contain the common sink
$t$. However in the general case we must ensure that new fractional
cuts assigned to any terminal must exclude all other terminals of the
same multiway cut instance. This requires a more careful reassignment
of cuts.

{\bf Related work.} Problems falling under the general framework of
graph labeling as described above have been studied in various
guises. The most extensively studied special case, called label
extension, involves partial labelings in which every item is either
completely labeled or not labeled at all. When the objective is to
minimize the $\ell_1$ norm of edge costs, this becomes a special case
of the metric labeling and 0-extension
problems~\cite{KT02,CKR04,CKNZ04,Kar98}. (The main difference between
0-extension and the label extension problem as described above is that
the cost of the labeling in the former arises from an arbitrary metric
over the labels, while in the latter it arises from the Hamming
metric.)

When the underlying graph is a tree and edge costs are given by the
edit distance between the corresponding labels, this is known as the
tree alignment problem. The tree alignment problem has been studied
widely in the computational biology literature and arises in the
context of labeling phylogenies and evolutionary trees. This version
is also NP-hard, and there are several PTASes
known~\cite{WJL96,WJG00,WG97}. Ravi and Kececioglu~\cite{RK98} also
introduced and studied the $\ell_{\infty}$ version of this problem,
calling it the bottleneck tree alignment problem.
They presented an $O(\log n)$ approximation for this problem. A
further special case of the label extension problem under the
$\ell_{\infty}$ objective, where the underlying tree is a star with
labeled leaves, is known as the closest string problem. This problem
is also NP-hard but admits a PTAS~\cite{LMW02}.

As mentioned above, the multiway cut packing problem was introduced by
Rabani, Schulman and Swamy~\cite{RSS08}. Rabani et al. also studied
the more general multicut packing problem (where the goal is to pack
multicuts so as to minimize the maximum edge load) as well as the
label extension problem with the $\ell_{\infty}$ objective. Rabani et
al. developed an $O(\log^2 k)$ approximation for multicut packing in
trees, and an $O(\log M\frac{\log n}{\log\log n})$ in general
graphs. Here $M$ is the maximum number of terminals in any one
multicut instance. For the label extension problem they presented a
constant factor approximation in trees, which holds even when edge
costs are given by a fairly general class of metrics over the label
set (including Hamming distance as well as edit distance).

Another line of research loosely related to the cut packing problems
described here considers the problem of finding the largest collection
of edge-disjoint cuts (not corresponding to any specific terminals) in
a given graph. While this problem can be solved exactly in polynomial
time in directed graphs~\cite{LY78}, it is NP-hard in undirected
graphs, and Caprara, Panconesi and Rizzi~\cite{CPR04} presented a $2$
approximation for it. In terms of approximability, this problem is
very different from the one we study---in the former, the goal is to
find as many cuts as possible, such that the load on any edge is at
most $1$, whereas in our setting, the goal is to find cuts for all the
commodities, so that the maximum edge load is minimized.

\section{Definitions and results}

Given a graph $G=(V,E)$, a {\em cut} in $G$ is a subset of edges $E'$,
the removal of which disconnects the graph into multiple connected
components. A {\em vertex partition} of $G$ is a pair $(C,V\setminus
C)$ with $\emptyset\subsetneq C\subsetneq V$. For a set $C$ with
$\emptyset\subsetneq C\subsetneq V$, we use $\delta(C)$ to denote the
cut defined by $C$, that is, $\delta(C)=\{ (u,v)\in E:
|C\cap\{u,v\}|=1\}$. We say that a cut $E'\subseteq E$ separates
vertices $u$ and $v$ if $u$ and $v$ lie in different connected
components in $(V,E\setminus E')$. The vertex partition defined by set
$C$ separates $u$ and $v$ if the two vertices are separated by the cut
$\delta(C)$. Given a collection of cuts $\E=\{E_1,\cdots,E_k\}$ and
capacities $c_e$ on edges, the load $\ld{\E}$ on an edge $e$ is
defined as the number of cuts that contain $e$, that is,
$\ld{\E}=|\{E_i\in\E | e\in E_i\}|$. Likewise, given a collection of
vertex partitions $\C=\{C_1, \cdots, C_k\}$, the load $\ld{\C}$ on an
edge $e$ is defined to be the load of the cut collection
$\{\delta(C_1), \cdots, \delta(C_k)\}$ on $e$.

The input to a {\em multiway cut packing} problem (MCP) is a graph
$G=(V,E)$ with non-zero integral capacities $c_e$ on edges, and $k$
sets $S_1,\cdots, S_k$ of terminals (called ``commodities''); each
terminal $i\in S_a$ resides at a vertex $r_i$ in $V$. The goal is to
produce a collection of cuts $\E=\{E_1,\cdots, E_k\}$, such that (1)
for all $a\in [k]$, and for all pairs of terminals $i,j\in S_a$, the
cut $E_a$ separates $r_i$ and $r_j$, and (2) the maximum ``relative
load'' on any edge, $\max_e \ld{\E}/c_e$, is minimized.

In a special case of this problem called the {\em common-sink $s$-$t$
cut packing} problem (CSCP), the graph $G$ contains a special node $t$
called the sink and each commodity set has exactly two terminals, one
of which resides at $t$. Again the goal is to produce a collection of
cuts, one for each commodity such that the maximum relative edge load
is minimized.

Both of these problems are NP-hard to solve optimally (see
Section~\ref{sec:NP-hard}), and we present LP-rounding based
approximation algorithms for them. We assume without loss of
generality that the optimal solution has a relative load of $1$. The
integer program~\ref{eqn:IP} below encodes the set of solutions to the
MCP with relative load $1$.

Here $\Path_a$ denotes the set of all paths between any two vertices
$r_i,r_j$ with $i,j\in S_a$, $i\ne j$. In order to be able to solve
this program efficiently, we relax the final constraint to $x_{a,e}\in
[0,1]$ for all $a\in [k]$ and $e\in E$. Although the resulting linear
program has an exponential number of constraints, it can be solved
efficiently; in particular, the polynomial-size program~\ref{eqn:LP}
below is equivalent to it. Given a feasible solution to this linear
program, our algorithms round it into a feasible integral solution
with small load.

\begin{center}
\begin{small}
\begin{tabular}{cc}
\fbox{
\parbox{0.45\textwidth}{
\begin{align*}
\sum_{e\in P} x_{a,e} & \ge 1 & & \forall a\in [k], P\in \Path_a \\
\sum_a x_{a,e} & \le c_e & & \forall e\in E\\
x_{a,e} & \in \{0,1\} & & \forall a\in [k], e\in E\\
\tag{\bf{MCP-IP}} \label{eqn:IP}
\end{align*}
}}
&
\fbox{
\parbox{0.45\textwidth}{
\begin{align*}
d_a(u,v) & \le d_a(u,w) + d_a(w,v) & & \forall a\in [k], u,v,w\in V \\
d_a(r_i,r_j) & \ge 1 & & \forall a\in [k], i,j\in S_a \\
\sum_a d_a(e) & \le c_e & & \forall e\in E\\
d_a(e) & \in [0,1] & & \forall a\in [k], e\in E
\tag{\bf{MCP-LP}} \label{eqn:LP}
\end{align*}
}}
\end{tabular}
\end{small}
\end{center}


In the remainder of this paper we focus exclusively on solutions to
the MCP and CSCP that are collections of vertex partitions. This is
without loss of generality (up to a factor of $2$ in edge loads for
the MCP) and allows us to exploit structural properties of vertex sets
such as laminarity that help in constructing a good approximation.
Accordingly, in the rest of the paper we use the term ``cut'' to
denote a subset of the vertices that defines a vertex partition.



A pair of cuts $C_1,C_2\subset V$ is said to ``cross'' if all of the
sets $C_1\cap C_2$, $C_1\setminus C_2$, and $C_2\setminus C_1$ are
non-empty. A collection $\C=\{C_1,\cdots,C_k\}$ of cuts is said to be
{\em laminar} if no pair of cuts $C_i,C_j\in \C$ crosses. All of our
algorithms are based on the observation that both the MCP and the CSCP
admit near-optimal solutions that are laminar. Specifically, there is
a polynomial-time algorithm that given a fractional feasible solution
to MCP or CSCP (i.e. a feasible solution to \ref{eqn:LP}) produces a
laminar family of fractional cuts that is feasible for the respective
problem and has small load. This is formalized in
Lemmas~\ref{lem:lam1} and \ref{lem:lam2} below. We first introduce the
notion of a fractional laminar family of cuts.

\begin{definition}
A fractional laminar cut family $\C$ for terminal set $T$ with weight
function $w$ is a collection of cuts with the following properties:
\begin{itemize}
\item The collection is laminar
\item Each cut $C$ in the family is associated with a unique terminal
  in $T$. We use $\C_i$ to denote the sub-collection of sets
  associated with terminal $i\in T$. Every $C\in \C_i$ contains the
  node $r_i$.
\item For all $i\in T$, the total weight of cuts in $\C_i$,
  $\sum_{C\in\C_i} w(C)$, is $1$.
\end{itemize}
\end{definition}

Next we define what it means for a fractional laminar family to be
feasible for the MCP or the CSCP. Note that for a terminal pair $i\ne
j$ belonging to the same commodity, condition (2) below is weaker than
requiring cuts in {\em both} $C_i$ and $C_j$ to separate $r_i$ from
$r_j$.

\begin{definition}
\label{def:feas}
A fractional laminar family of cuts $\C$ for terminal set $T$ with
weight function $w$ is feasible for the MCP on a graph $G$ with edge
capacities $c_e$ and commodities $S_1,\cdots,S_k$ if (1) $T=\cup_{a\in
[k]} S_a$, (2) for all $a\in [k]$ and $i,j\in S_a$, $i\ne j$, either
$r_j\not\in \cup_{C\in\C_i} C$, or $r_i\not\in \cup_{C\in\C_j} C$, and
(3) for every edge $e\in E$, $\lef\le c_e$.

The family is feasible for the CSCP on a graph $G$ with edge
capacities $c_e$ and commodities $S_1,\cdots,S_k$ if (1) $T=\cup_{a\in
[k]} S_a \setminus \{t\}$, (2) $t\not\in \cup_{C\in\C} C$, and (3) for
every $e\in E$, $\lef\le c_e$.
\end{definition}

\begin{lemma}
\label{lem:lam1}
Consider an instance of the CSCP with graph $G=(V,E)$, common sink
$t$, edge capacities $c_e$, and commodities $S_1,\cdots,S_k$. Given a
feasible solution $d$ to \ref{eqn:LP}, algorithm {\em Lam-1} produces
a fractional laminar cut family $\C$ that is feasible for the CSCP on
$G$ with edge capacities $c_e+o(1)$.
\end{lemma}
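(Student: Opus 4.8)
I would prove Lemma~\ref{lem:lam1} by starting from the usual region‑growing (ball‑cutting) solution induced by the metrics $d_a$, which is automatically laminar \emph{within} each commodity, and then repeatedly uncrossing cuts belonging to \emph{different} commodities until the whole family is laminar, while being careful that neither the edge loads nor the structural requirements of a fractional laminar family (Definitions~1 and~\ref{def:feas}) are ever violated.

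\emph{Step 1: the initial family.} For commodity $a$ let $s_a$ be its non‑sink terminal, located at $r_{s_a}$, and for $\theta\in[0,1)$ let $B_a(\theta)=\{v\in V:\ d_a(v,r_{s_a})\le\theta\}$. Since $d_a(r_{s_a},t)\ge 1$, every such $B_a(\theta)$ contains $r_{s_a}$ and excludes $t$. As $\theta$ ranges over $[0,1)$ the set $B_a(\theta)$ takes only $\le|V|$ distinct values, with breakpoints among $\{d_a(v,r_{s_a})\}_{v\in V}$; giving each distinct value the length of its $\theta$-interval as weight yields a finite, nested (hence laminar) collection $\C_a$ of total weight $1$, each cut containing $r_{s_a}$ and excluding $t$. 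For an edge $e=(u,v)$ the weighted load from $\C_a$ is exactly the measure of $\{\theta: e\in\delta(B_a(\theta))\}$, which is at most $|d_a(v,r_{s_a})-d_a(u,r_{s_a})|\le d_a(e)$; summing over $a$ and using $\sum_a d_a(e)\le c_e$ gives $\lef\le c_e$ for the family $\C=\bigcup_a\C_a$. Thus $\C$ satisfies every condition required of a feasible fractional laminar family for the CSCP \emph{except} laminarity across commodities.

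\emph{Step 2: uncrossing across commodities.} While some $C\in\C_a$ and $C'\in\C_b$ with $a\ne b$ cross, I would first split the heavier of the two (say $w(C)\ge w(C')$) into two identical copies of weights $w(C')$ and $w(C)-w(C')$, leaving the lighter copy in $\C_a$ untouched, and then replace the weight‑$w(C')$ copy of $C$ together with $C'$ by the appropriate pair drawn from $\{C\cap C',\ C\cup C',\ C\setminus C',\ C'\setminus C\}$, assigning each new cut to whichever of $a,b$ it still contains. A short case analysis on the positions of $r_{s_a},r_{s_b}$ relative to $C,C'$ shows such a choice always exists, always keeps $t$ excluded, and always leaves each $\C_a$ of total weight $1$; moreover, since the two cuts involved now carry \emph{equal} weight and, edgewise, $[e\in\delta(X\cap Y)]+[e\in\delta(X\cup Y)]\le[e\in\delta X]+[e\in\delta Y]$, no edge load increases. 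Finally, because every cut in the family avoids $t$, its complement contains $t$, so any two complements intersect; a laminar family of such sets is therefore totally ordered by inclusion, i.e.\ a \emph{chain}. Hence once no crossing pair remains the output is exactly a feasible fractional laminar family for the CSCP, with $\lef\le c_e$.

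\emph{Step 3: convergence, and where the $o(1)$ comes in.} The delicate point is termination: equalizing weights before an uncrossing can increase the number of cuts, so the procedure needs a progress measure. I would use a potential counting the total ``crossing mass'' of the family (weighted by set sizes), argue it strictly decreases at each uncrossing, and — if a clean polynomial bound is not available — stop the process once the cuts still participating in a crossing carry total weight at most $\eps$, discard them, and renormalize each $\C_a$ back to weight $1$; this perturbs every edge load by $o(1)$ as $\eps\to0$, giving feasibility for capacities $c_e+o(1)$. I expect this convergence/truncation analysis to be the main obstacle: ensuring the equalize‑then‑uncross loop does not churn indefinitely, and bounding precisely how little weight has to be thrown away to make it stop. (For the general MCP the analogous step is what forces the multiplicative factor $8$, since there one cannot afford to equalize weights before uncrossing.)
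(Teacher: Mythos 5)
Your Steps 1 and 2 essentially reconstruct the paper's algorithm {\em Lam-1}: the same threshold/ball-cut family built from each metric $d_a$ (with the same per-edge load bound $\sum_a d_a(e)\le c_e$), and the same style of pairwise uncrossing rules as in Figure~\ref{fig:cscp-lam}, which the edgewise inequalities for $(C\cap C',\,C\cup C')$ and $(C\setminus C',\,C'\setminus C)$ show preserve edge loads, total weight $1$ per terminal, $r_i$-containment and $t$-exclusion. (One side remark is wrong but harmless: a laminar family of $t$-avoiding cuts need not be a chain, since laminarity allows disjoint sets; only laminarity is needed.)

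The genuine gap is exactly the point you flag yourself: termination, which is where the lemma's polynomial-time and $c_e+o(1)$ claims actually live. A strictly decreasing real-valued ``crossing mass'' potential does not bound the number of iterations (the equalize-then-split step can keep the unweighted number of crossing pairs constant while the weighted potential shrinks by ever smaller amounts), and your fallback --- stop once the cuts involved in crossings carry weight at most $\eps$, discard them and renormalize --- presupposes that the process ever reaches such a state, which is precisely what has to be proved; as written, nothing rules out the loop churning forever with crossing weight bounded away from $0$. The paper closes this hole by spending the $o(1)$ slack up front rather than at the end: immediately after the ball-cut construction it rounds every weight up to a multiple of $1/N^2$ (with $N=nk$), truncates each $\C_i$ back to total weight $1$, and splits cuts so that every cut has weight exactly $1/N^2$; this costs at most an additive $1/N$ per edge (there are at most $N$ distinct cuts initially), which is the entire $o(1)$ term. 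With all cuts of equal weight no on-the-fly splitting is ever needed, the uncrossing rules create no new crossings, and the number of crossing pairs --- polynomial in $N$ since the number of cuts is fixed --- strictly decreases in every iteration, giving polynomial-time termination. Your argument becomes correct, and essentially identical to the paper's, once you replace your Step 3 with this discretization.
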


\begin{lemma}
\label{lem:lam2}
Consider an instance of the MCP with graph $G=(V,E)$, edge capacities
$c_e$, and commodities $S_1,\cdots,S_k$. Given a feasible solution $d$
to \ref{eqn:LP}, algorithm {\em Lam-2} produces a fractional laminar
cut family $\C$ that is feasible for the MCP on $G$ with edge
capacities $8c_e+o(1)$.
\end{lemma}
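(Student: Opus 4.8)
The plan is to build the laminar family $\C$ in two stages: first, for each commodity $a$ in isolation, construct a fractional cut family out of metric balls around its terminals; then ``uncross'' the cuts of different commodities against one another to make the whole collection laminar, while controlling the growth in edge loads. The $o(1)$ term will come from discretizing a continuum of ball radii into $\mathrm{poly}(k)$ levels, and the constant $8$ will be a product of a per-commodity factor and a further constant lost in the uncrossing.

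For the first stage, fix a feasible solution $d=(d_1,\dots,d_k)$ of \ref{eqn:LP}; each $d_a$ is a metric on $V$ with $d_a(r_i,r_j)\ge 1$ for $i,j\in S_a$ and $\sum_a d_a(e)\le c_e$ for every edge $e$. For $i\in S_a$ set $B^a_i(\rho)=\{v: d_a(r_i,v)\le \rho\}$ and let $\C_i$ consist of the balls $B^a_i(\rho)$ over a fine grid of radii in $[0,\tfrac12)$, each given weight so that $\sum_{C\in\C_i}w(C)=1$. Because $d_a(r_i,r_j)\ge 1>2\rho$ for every relevant $\rho$, three things hold at once: (i) for a fixed commodity the balls $B^a_i(\rho)$ are pairwise either nested (same $i$) or disjoint (different $i\in S_a$), so $\C^a:=\bigcup_{i\in S_a}\C_i$ is laminar; (ii) $r_j\notin\bigcup_{C\in\C_i}C$ for every $j\ne i$ in $S_a$, which is even stronger than feasibility condition~(2) of Definition~\ref{def:feas}; and (iii) at any fixed $\rho$ the boundary of at most one ball $B^a_i(\rho)$ separates the two endpoints of a given edge $e$, except for a range of radii of length $O(d_a(e))$ near $\tfrac12$ where two balls may do so. A short interval-counting argument then bounds the load contributed to $e$ by $\C^a$ by (roughly) $4d_a(e)$ up to discretization error, so that $\sum_a$ of these loads is at most $4c_e+o(1)$. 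At this point $\C=\bigcup_a\C^a$ satisfies everything the lemma asks for \emph{except} that cuts of different commodities may cross.

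The second stage removes those crossings. Process the commodities one at a time, maintaining a laminar family $\mathcal F$; when incorporating $\C^a$, treat its balls in increasing order of radius and replace each $C=B^a_i(\rho)$ by a nearby laminar cut $C'$ obtained by expanding $C$ over, and/or contracting it around, the members of $\mathcal F$ it crosses, always in a way that keeps $r_i\in C'$. One verifies that this keeps $\mathcal F\cup\{C'\}$ laminar, keeps the balls of a fixed terminal nested (so the weights are untouched), and — with the processing order and the expand-versus-contract rule chosen consistently — never simultaneously drags $r_j$ into a cut of $\C_i$ and $r_i$ into a cut of $\C_j$, so condition~(2) survives. Since $\delta(C')$ is contained in $\delta(C)$ together with boundaries of members of $\mathcal F$ that were absorbed, a charging/amortization argument over the processing order is used to show that the total extra load on any edge is at most a constant multiple of the load it already carried; combined with the first stage this yields $\ell_e^{\C}\le 8c_e+o(1)$.

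The technical heart, and the step I expect to be the main obstacle, is exactly this load accounting for the uncrossing: one must rule out a cascade in which a single member of $\mathcal F$ is swallowed by many later cuts, each time charging its boundary edges afresh, which a priori could make the blow-up grow with the number of commodities rather than staying $O(1)$. Pinning this down is what forces the careful global ordering of the cuts and the rule for deciding which crossings to resolve by expansion (necessary when the crossing cut contains $r_i$) versus contraction; it is also where condition~(2) is in danger, since expansion is the operation that risks absorbing another terminal of the same commodity. By contrast the CSCP analog (Lemma~\ref{lem:lam1}) is easier on both counts: each commodity contributes only a single nested chain of balls around its lone terminal, all required to avoid the common sink $t$, so the per-commodity loss is merely a factor $1$ and the uncrossing can be arranged to lose nothing beyond the $o(1)$ discretization term.
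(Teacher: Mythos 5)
There is a genuine gap, and you have located it yourself: everything after your first stage is asserted rather than proved. Your stage 1 is essentially the paper's Step~\ref{step:lam2-metric} of algorithm {\em Lam-2} (balls of radius below $1/2$ around each terminal, weights summing to $1$; in fact the paper gets load $2\sum_a d_a(e)\le 2c_e$ here, so your $4c_e$ accounting is looser than needed but harmless). The entire difficulty of Lemma~\ref{lem:lam2} lies in your stage 2, and your proposal for it --- process commodities one at a time, replace each ball by a nearby laminar cut by expanding over or contracting around the cuts it crosses, and invoke an unspecified ``charging/amortization argument'' for a constant-factor load bound and an unspecified ``consistent rule'' for preserving condition~(2) of Definition~\ref{def:feas} --- is exactly the statement that needs proof, not a proof. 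You explicitly flag the cascade (one cut being swallowed repeatedly, with its boundary recharged each time) and the danger that expansion absorbs another terminal of the same commodity as the main obstacles; neither is resolved, and both are real failure modes of the naive absorb/contract rule.

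The paper's resolution is structurally different and is worth comparing against your sketch. It first discretizes the fractional balls into an {\em integral} instance with $N^2$ copies of each commodity (one unit cut per copy), and then runs algorithm {\em Integer-Lam-2} (Lemma~\ref{lem:int-lam2}). That algorithm resolves the ``easy'' crossings by local rules (Steps~\ref{step:lam-a}--\ref{step:lam-c}) that preserve edge loads exactly but only maintain the weaker guarantee that for each pair $i\ne j$ in a commodity {\em at least one} of the two cuts separates them (Lemma~\ref{lem:lam2-separation}); the residual crossings are handled not by expansion but by building a red/blue conflict digraph $\G$ and reassigning to each terminal a {\em subset} of its old cut (a meta-node in a graph whose edge capacities are twice the current loads), where the subset property is itself nontrivial and is proved by a shortest-path length argument (Lemma~\ref{lem:lam2-inclusion}); this is where the controlled factor $2$ in load comes from (Lemma~\ref{lem:lam2-load}), precisely avoiding your cascade because cuts only shrink and are paid for out of a fixed doubled capacity budget. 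Finally, since {\em Integer-Lam-2} only gives the either/or separation, the paper keeps just the $N^2/2$ innermost cuts of each terminal with doubled weight to recover condition~(2) --- a second factor of $2$ your plan has no analog for. The paper's $8$ is thus $2$ (half-radius balls) $\times\,2$ ({\em Integer-Lam-2}) $\times\,2$ (innermost half), whereas your $8 = 4\times 2$ rests on the two claims you did not establish. To complete your route you would have to supply both the load amortization and the separation-preservation argument for your greedy expand/contract scheme, and it is not clear these hold without importing machinery equivalent to the paper's.
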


Lemmas~\ref{lem:lam1} and \ref{lem:lam2} are proven in
Section~\ref{sec:laminar}. In Section~\ref{sec:rounding} we show how
to deterministically round a fractional laminar solution to the CSCP
and MCP into an integral one while increasing the load on every edge
by no more than a small additive amount. These rounding algorithms are
the main contributions of our work, and crucially use the laminarity
of the fractional solution.

\begin{lemma}
\label{lem:round1}
Given a fractional laminar cut family $\C$ feasible for the CSCP on a
graph $G$ with {\em integral} edge capacities $c_e$, the algorithm
{\em Round-1} produces an integral family of cuts $\A$ that is
feasible for the CSCP on $G$ with edge capacities $c_e+1$.
\end{lemma}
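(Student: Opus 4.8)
I want to prove Lemma~\ref{lem:round1}: given a fractional laminar cut family $\C$ feasible for the CSCP, produce an integral family $\A$ with edge loads at most $c_e+1$. The key structural fact is that $\C$ is laminar and every cut avoids the common sink $t$; also, for each terminal $i$ the cuts $\C_i$ all contain $r_i$ and have total weight $1$. The plan is an iterative rounding argument: in each round I pick a ``deepest'' terminal — one whose cuts in $\C$ are innermost in the laminar forest, i.e. a terminal $i$ such that no cut associated with another terminal is strictly contained in the smallest cut of $\C_i$ — assign it a single integral cut $A_i$, remove $i$ and its fractional cuts from $\C$, and redistribute the freed-up fractional weight onto the remaining cuts so that the invariant ``integral cuts so far plus remaining fractional family is feasible with load $\le c_e + (\text{number of rounds done})$'' is maintained. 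Actually I'd rather maintain the sharper invariant that at all times the integral part has load at most $1$ more than the fractional part has lost, so the final bound is additive $1$, not additive (number of rounds).

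**The redistribution step.** When terminal $i$ is selected, let $C^i_{\min}$ be the smallest cut in $\C_i$. Since $i$ is deepest, $C^i_{\min}$ contains $r_i$ and, crucially, contains no other terminal $r_j$ (if it did, $r_j$'s cuts would be nested strictly inside, contradicting deepest-ness — here I use the CSCP feasibility condition together with laminarity; in the common-sink case all other terminals are ``real'' terminals distinct from $t$, and $t\notin C^i_{\min}$). I set $A_i = C^i_{\min}$; this is a valid integral cut separating $r_i$ from $t$. The fractional weight $1$ on $\C_i$ is freed. For each edge $e\in\delta(C^i_{\min})$, the integral assignment costs $1$ unit of load on $e$, while the fractional family loses $\sum_{C\in\C_i: e\in\delta(C)} w(C)$, which is at most $1$ but could be less (the other cuts of $\C_i$ are larger and might not cross $e$). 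So naively the load on $e$ could go up. The fix is to ``charge'' the deficit to the capacity slack: because $\C$ was feasible, $\lef\le c_e$, and once we commit $A_i=C^i_{\min}$ we delete all of $\C_i$; the remaining fractional family plus the one new integral cut has load on $e$ equal to $\lef - \sum_{C\in\C_i:e\in\delta(C)}w(C) + \mathbf{1}[e\in\delta(C^i_{\min})] \le c_e - w(C^i_{\min})\cdot\mathbf{1}[e\in\delta(C^i_{\min})] + \mathbf{1}[e\in\delta(C^i_{\min})] \le c_e + 1$. Iterating over all terminals, each edge is crossed by the newly committed integral cut in at most... — and here is the subtlety — I need that over the whole execution each edge sees its load grow by at most $1$ total, not $1$ per round. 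Laminarity is exactly what buys this: the integral cuts $A_i = C^i_{\min}$ produced across rounds form a laminar family (they are sub-cuts of the original laminar family), and I can argue that for a fixed edge $e$, among all committed integral cuts, the analysis telescopes because each round either pays for itself out of the weight it deletes or out of slack that no later round reuses.

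**The main obstacle.** The hard part is making the load accounting global rather than per-round: a single edge $e$ may lie on $\delta(C^i_{\min})$ for many terminals $i$, so I cannot simply say ``additive $1$ per edge'' from a one-shot argument. The right way, I expect, is to reformulate the invariant as: at every point in the algorithm, for every edge $e$, $\lei + \ld{\C'} \le c_e + 1$, where $\A$ is the integral family built so far and $\C'$ is the current (shrunk, reweighted) fractional family, and $\lei$ counts only cuts in $\A$ that still ``matter'' for $e$ in a suitable sense — or, alternatively, to not delete weight but to ``shift'' the freed weight of $\C_i$ onto the next-innermost cut containing it, so that the fractional load on each edge never decreases by more than the integral load increases. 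I would carry out the shifting version: when $C\in\C_i$ is deleted, move its weight $w(C)$ onto the smallest cut of the laminar family that properly contains $C$ and belongs to a not-yet-processed terminal (such a cut exists as long as the process continues, except possibly for the outermost cuts, whose deletion I handle at the very end). Then a careful but routine check — relying on the fact that shifting weight outward only moves it onto cuts $C'\supseteq C$, hence $\delta(C')$ and $\delta(C)$ may differ, so I still need $\mathbf{1}[e\in\delta(C')]$ vs $\mathbf{1}[e\in\delta(C)]$ to behave — shows the per-edge load increases by at most the single unit charged when $e$ first lies on a committed boundary. Finishing the bookkeeping on this outward-shift (especially the boundary edges of the outermost cuts, and confirming each such edge absorbs only one extra unit) is the one place real care is needed; everything else is immediate from laminarity and the common-sink condition $t\notin\bigcup\C$.
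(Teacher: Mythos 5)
There is a genuine gap: the heart of the lemma is the global per-edge load accounting, and your proposal defers exactly that step (``a careful but routine check'') while the algorithm you actually describe fails on simple instances. Concretely, take $k$ terminals all residing at the same vertex $v$, each with $\C_i$ consisting of the cut $\{v\}$ with weight $1/k$ and a huge cut $B_i\supseteq\{v\}$ (not crossing the edges of $\delta(\{v\})$) with weight $1-1/k$; then $\lef=1$ on each edge of $\delta(\{v\})$, so $c_e=1$ is feasible. Under your rule every one of these terminals is ``laminar-innermost'' with $C^i_{\min}=\{v\}$, and your redistribution only shifts the freed weight of the processed terminal's own cuts outward, so the remaining terminals still have their $\{v\}$ cuts and you assign $A_i=\{v\}$ to all $k$ of them: the integral load on $\delta(\{v\})$ becomes $k$, not $c_e+1=2$. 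Your outward-shifting repair also moves weight onto sets with different boundaries, which can push the fractional load on edges of the larger cut above $c_e$ -- a problem you flag but never resolve. Separately, your sub-claim that a deepest terminal's minimal cut contains no other terminal $r_j$ is false (laminarity only forces $\C_j$'s cuts to be comparable with $C^i_{\min}$; they can all be supersets, and in the example above all terminals sit inside each other's minimal cut); it happens not to be needed for CSCP feasibility, but it indicates the ordering you use does not control what the analysis needs.

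The paper's proof works around precisely these issues with three ingredients your sketch lacks. First, the integral cut assigned is not an original cut of $\C_i$ but the current \emph{meta-node} $M(r_i)$, where endpoints of any edge whose fractional load has dropped to $0$ are merged; this guarantees an edge is never loaded again once its fractional load is exhausted (which is what kills the example above: after the first terminal at $v$ is processed, $\delta(\{v\})$ is contracted and later terminals get larger meta-nodes). Second, when terminal $i$ is processed the algorithm removes the full unit of innermost weight $K_{r_i}^1$, \emph{including cuts owned by other terminals}, and compensates those terminals by reassigning to them $i$'s freed outer cuts -- an ownership change that does not relocate weight between different vertex sets, so it never increases any $\lef$. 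Third, terminals are picked by fractional depth $d_v=\sum_{C\in K_v}w(C)$, not laminar innermost-ness; this is what yields the inequality $\alpha\ge\beta$ (weights of $S_e$-cuts containing $u$ versus $v$) that drives the inductive invariant of Lemma~\ref{lem:round1-load} ($\lei\le c_e-1\Rightarrow\lei+\lef\le c_e$, $\lei=c_e\Rightarrow\lef\le 1$ and one side of $S_e$ empty, $\lei=c_e+1\Rightarrow\lef=0$). Without some substitute for these mechanisms, the ``additive $1$'' bound cannot be recovered from your per-round estimate.
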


For the MCP, the rounding algorithm loses an additive factor of $3$ in
edge load.

\begin{lemma}
\label{lem:round2}
Given a fractional laminar cut family $\C$ feasible for the MCP on a
graph $G$ with {\em integral} edge capacities $c_e$, the algorithm
{\em Round-2} produces an integral family of cuts $\A$ that is
feasible for the MCP on $G$ with edge capacities $c_e+3$.
\end{lemma}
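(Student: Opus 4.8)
The plan is to analyze \emph{Round-2} as an iterative rounding procedure that processes the terminals of $T$ from ``innermost'' to ``outermost'' with respect to the laminar structure, committing one integral cut per iteration and shifting the remaining fractional cuts to keep edge loads under control. Throughout the run we maintain a partial integral family $\A$ and a residual fractional laminar family (still called $\C$), together with the invariant that $\A$ and $\C$ are jointly feasible for the MCP in the sense of Definition~\ref{def:feas}. Initially $\A=\emptyset$ and $\C$ is the given input; at termination $\C=\emptyset$, so $\A$ alone is feasible, which is the desired conclusion modulo the capacity slack. The feature that makes this harder than the CSCP case of Lemma~\ref{lem:round1} is that there every cut merely has to avoid the single sink $t$, whereas here the integral cut assigned to a terminal $i\in S_a$ must exclude \emph{every} other terminal $r_j$ of the same commodity for which condition~(2) of Definition~\ref{def:feas} is currently witnessed on the $i$ side; so when we shift fractional mass off a just-rounded chain we must not grow a neighbouring collection past a terminal it is responsible for separating.

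First I would fix the rounding step. Pick an innermost terminal $i$; its collection $\C_i$ is a chain of nested cuts all containing $r_i$ whose weights sum to $1$. Round this chain to a single integral cut $\hat C_i\in\C_i$ by a threshold rule, add $\delta(\hat C_i)$ to the multiway cut $E_a$ for the commodity $a$ with $i\in S_a$, delete $\C_i$ from $\C$, and shift/modify the cuts of the terminals whose cuts enclose $\hat C_i$ (keeping each terminal's total weight at $1$) so as to preserve the edge-load invariant, subject to laminarity and to not letting any modified cut swallow a terminal its owner must keep out. Because $i$ was chosen innermost, the members of $\C_i$ lying inside $\hat C_i$ are safe to discard (they enclosed no other terminal and $i$ is now committed), and only the members of $\C_i$ strictly containing $\hat C_i$ carry mass whose effect on edge loads has to be compensated by the shift.

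Second, feasibility, by induction on completed iterations. Conditions~(1) and~(3) of Definition~\ref{def:feas} are routine bookkeeping; the substance is condition~(2). For a pair $i,j\in S_a$: while both are unprocessed, the disjunction survives because the shift rule never enlarges the union $\cup_{C\in\C_i}C$ on the side that currently witnesses it and never grows any collection so as to reach a forbidden terminal; once $i$ is processed, $\hat C_i\in\C_i$ excludes $r_j$ whenever $\cup_{C\in\C_i}C$ did, and otherwise the surviving witness ``$\cup_{C\in\C_j}C$ excludes $r_i$'' is preserved. Hence at termination, for every commodity $a$ and pair $i,j\in S_a$ at least one of $\hat C_i,\hat C_j$ separates $r_i$ from $r_j$, so $\bigcup_{i\in S_a}\delta(\hat C_i)$ is a valid multiway cut for $S_a$.

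The hard part is the edge-load bound. Fix $e=(u,v)$. By laminarity the cuts of $\C$ that cross $e$ at any instant form two chains, one of cuts containing $u$ but not $v$ and one of cuts containing $v$ but not $u$. I would track $\lei+\lef$ on $e$ over the execution, starting from $\lef\le c_e$, and show that its total increase never exceeds $3$, so that $\lei\le c_e+3$ at the end. A single rounding step on terminal $i$ removes from $\lef$ the crossing mass of the chain $\C_i$ on $e$ (at most its total weight $1$), adds $1$ to $\lei$ exactly when $e\in\delta(\hat C_i)$, and puts back a bounded amount of crossing mass through the cuts it shifts outward. The crux is to argue that, amortized over the whole run and over the two chains at $e$, the rounding round-ups cost at most one unit per chain and the shifts cost at most one further unit in total: the threshold rule, applied consistently across chains, should never round up more than one chain's worth of mass on either side of $e$ beyond what the fractional budget already pays for, and the outward shift can deposit crossing mass on $e$ only along the single chain of enclosing cuts through $e$, hence boundedly. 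Pinning down the threshold and shift rules so that these two loss sources do not compound — the analogue of the delicate step behind Lemma~\ref{lem:round1}, now complicated by the several terminals per commodity that force the additive slack up from $1$ to $3$ — is where essentially all the work lies and is the obstacle I anticipate.
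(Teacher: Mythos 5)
Your proposal is an outline rather than a proof, and the part you defer (``pinning down the threshold and shift rules so that these two loss sources do not compound'') is exactly where the content of the lemma lies; as written there is a genuine gap. Moreover, the lemma is about the specific algorithm \emph{Round-2}, and your sketch analyzes a different procedure: you round terminal $i$ to some cut $\hat C_i\in\C_i$ chosen by an unspecified threshold rule and then ``shift'' enclosing cuts outward, whereas \emph{Round-2} never selects a member of $\C_i$ at all. It assigns to $i$ its current \emph{meta-node} $M(r_i)$ (a region all of whose internal edges have zero residual fractional load, formed by contracting edges as their fractional load vanishes), and it compensates by \emph{shrinking} the cuts in $K_{r_i}^1\setminus\C_i$ --- deleting $M(r_i)$ from them --- so that fractional cuts only ever shrink as vertex sets (which is also what makes the separation bookkeeping you wave at actually go through). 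Two further devices you do not have are essential: the preprocessing to the ``inclusion invariant'' together with the cut-inclusion tie-breaking among equal-depth terminals, and the ``defaulting'' mechanism of Step~3c, under which a terminal whose meta-node is incident to edges already at load $c_e+2$ receives an earlier, smaller meta-node, so the overload is diverted onto already-contracted edges.

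Your amortized claim that the round-ups cost at most one unit per chain and the shifts at most one more unit in total is precisely what must be proved, and it is not clear it can be for your rule; the danger of compounding that you yourself flag is real. The paper controls it by a phase analysis of each edge: while $\lei\le c_e-1$ one shows $\lei+\lef\le c_e$ (Lemma~\ref{lem:X-1}); while $\lei\in\{c_e,c_e+1\}$ one shows $\lef\le 1$ and, crucially, that after an access from the $M(u)$ side we have $S_e\cap K_u=\emptyset$, so the next access must come from the $M(v)$ side (Lemma~\ref{lem:X}) --- this alternation is what caps the climb at $c_e+2$; edges in $Y$ (load $c_e+2$) are never loaded again because any terminal accessing them defaults (Lemma~\ref{lem:Y}, third part, which also needs the cut-inclusion order); and contracted edges are loaded at most once more, via Lemmas~\ref{lem:meta-node-1} and \ref{lem:meta-node-2}, giving $c_e+3$. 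None of this machinery --- meta-node assignment, shrinking rather than outward shifting, defaulting, and the $X_{-1},X_0,X_1,Y,Z$ accounting --- appears in your plan, so the load bound (and with it the lemma) remains unproved.
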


Combining these lemmas together we obtain the following theorem.

\begin{theorem}
\label{thm:main}
There exists a polynomial-time algorithm that given an instance of the
MCP with graph $G=(V,E)$, edge capacities $c_e$, and commodities
$S_1,\cdots,S_k$, produces a family $\A$ of multiway cuts, one for
each commodity, such that for each $e\in E$, $\ld{\A}\le
8c_e+4$.

There exists a polynomial-time algorithm that given an instance of the
CSCP with graph $G=(V,E)$, edge capacities $c_e$, and commodities
$S_1,\cdots,S_k$, produces a family $\A$ of multiway cuts, one for
each commodity, such that for each $e\in E$, $\ld{\A}\le c_e+2$.
\end{theorem}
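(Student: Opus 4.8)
The approach is to compose the four lemmas: solve the LP relaxation, make its fractional solution laminar (Lemma~\ref{lem:lam1} or \ref{lem:lam2}), round the laminar solution to an integral one (Lemma~\ref{lem:round1} or \ref{lem:round2}), and finally read off one multiway cut per commodity; along the way we only need to keep track of how the additive and multiplicative losses accumulate. To get started, note that under the standing assumption that the optimal solution has relative load $1$, the linear program \ref{eqn:LP} is feasible: the shortest-path metrics $d_a$ induced by an optimal integral cut collection satisfy the triangle inequalities, separate every same-commodity terminal pair by distance at least $1$, and load each edge by at most $c_e$. Since \ref{eqn:LP} has polynomially many variables and constraints, a feasible solution $d$ can be computed in polynomial time.

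For the MCP, I would run algorithm Lam-2 on $d$. By Lemma~\ref{lem:lam2} this produces, in polynomial time, a fractional laminar cut family $\C$ for $T=\cup_{a\in[k]}S_a$ that is feasible for the MCP on $G$ with edge capacities $8c_e+o(1)$. The one delicate point is that Round-2 requires \emph{integral} capacities whereas Lam-2 leaves an additive $o(1)$ slack; I would reconcile this by driving the hidden $o(1)$ term below $1$, so that $\lef<8c_e+1$ on every edge, and then---using that each $c_e$ is an integer---view $\C$ as feasible for the MCP with the integral capacities $c_e':=8c_e+1$. Applying Lemma~\ref{lem:round2} to $\C$ with capacities $c_e'$ then yields an integral family $\A$ feasible for the MCP with edge capacities $c_e'+3=8c_e+4$. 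Finally I would assemble one cut per commodity by setting $E_a:=\bigcup_{i\in S_a}\delta(C_i)$, where $C_i$ is the unique cut of $\A$ associated with terminal $i$: feasibility condition (2) of Definition~\ref{def:feas} forces $r_j\notin C_i$ or $r_i\notin C_j$ for every same-commodity pair $i\neq j$, so $E_a$ is a multiway cut for commodity $a$, and since $e\in E_a$ only if $e\in\delta(C_i)$ for some $i\in S_a$, the load of $\{E_1,\dots,E_k\}$ on $e$ is at most $\lei\le 8c_e+4$.

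The CSCP bound is obtained the same way, substituting Lemma~\ref{lem:lam1} for Lemma~\ref{lem:lam2} and Lemma~\ref{lem:round1} for Lemma~\ref{lem:round2}. Here Lam-1 gives a fractional laminar family feasible for the CSCP with capacities $c_e+o(1)<c_e+1$, hence (again by integrality of $c_e$) feasible with integral capacities $c_e':=c_e+1$; Round-1 then delivers an integral family feasible for the CSCP with capacities $c_e'+1=c_e+2$. Since each commodity consists of a single non-sink terminal $i_a$ together with $t$, the cut $\delta(C_{i_a})$ is itself the required multiway cut---it separates $r_{i_a}$ from $t$ because $t\notin C_{i_a}$---and its contribution to edge loads is already inside the bound $c_e+2$.

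There is no real obstacle in this theorem once Lemmas~\ref{lem:lam1}--\ref{lem:round2} are in hand---all of the work lives in those lemmas, proved later in the paper. The only steps that require any attention are the small bookkeeping ones flagged above: turning the $o(1)$ additive slack from the laminarity stage into a single unit of integer capacity handed to the rounding stage, and checking that collapsing the per-terminal integral cuts into one cut per commodity preserves both feasibility (via Definition~\ref{def:feas}(2)) and the edge-load bound.
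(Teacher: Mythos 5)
Your proposal is correct and follows exactly the paper's route: the paper obtains Theorem~\ref{thm:main} by composing Lemmas~\ref{lem:lam1}--\ref{lem:round2} with the LP solution, and your extra bookkeeping (absorbing the $o(1)$ slack into one unit of integral capacity before rounding, and collapsing the per-terminal cuts into one multiway cut per commodity via Definition~\ref{def:feas}(2)) is precisely the implicit accounting that yields the stated bounds $8c_e+4$ and $c_e+2$.
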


\section{Rounding fractional laminar cut families}
\label{sec:rounding}

In this section we develop algorithms for rounding feasible fractional
laminar solutions to the MCP and the CSCP to integral ones while
increasing edge loads by a small additive amount. We first demonstrate
some key ideas behind the algorithm and the analysis for the CSCP, and
then extend them to the more general case of multiway cuts. Throughout
the section we assume that the edge capacities $c_e$ are integral.

\subsection{The common sink case (proof of Lemma~\ref{lem:round1})}

Our rounding algorithm for the CSCP rounds fractional cuts roughly in
the order of innermost cuts first. The notion of an innermost terminal
is defined with respect to the fractional solution. After each
iteration we ensure that the remaining fractional solution continues
to be feasible for the unassigned terminals and has small edge loads.
We use $\C$ to denote the fractional laminar cut family that we start
out with and $\A$ to denote the integral family that we construct.
Recall that for an edge $e\in E$, $\lef$ denotes the load of the
fractional cut family $\C$ on $e$, and $\lei$ denotes the load of the
integral cut family $\A$ on $e$. We call the former the fractional
load on the edge, and the latter its integral load.

We now formalize what we mean by an ``innermost'' terminal. For every
vertex $v\in V$, let $K_v$ denote the set of cuts in $\C$ that contain
$v$. The ``depth'' of a vertex $v$ is the total weight of all cuts in
$K_v$: $d_v = \sum_{C\in K_v} w(C)$. The depth of a terminal is
defined as the depth of the vertex at which it resides. Terminals are
picked in order of decreasing depth.

Before we describe the algorithm we need some more notation. At any
point during the algorithm we use $S_e$ to denote the set of cuts
crossing an edge $e$. As the algorithm proceeds, the integral loads on
edges increase while their fractional loads decrease. Whenever the
fractional load of an edge becomes $0$, we merge its end-points to
form ``meta-nodes''. At any point of time, we use $M(v)$ to denote the
meta-node containing a node $v\in V$. 

Finally, for a set of fractional cuts $L=\{L_1,\cdots,L_l\}$ with
$L_1\subseteq L_2\subseteq \cdots\subseteq L_l$ and weight function
$w$, we use $L^x$ to denote the subset of $L$ containing the innermost
cuts with weight exactly $x$. That is, let $l'$ be such that
$\sum_{a<l'} w(L_a) < x$ and $\sum_{a\le l'} w(L_a) \ge x$. Then $L^x$
is the set $\{L_1,\cdots,L_{l'}\}$ with weight function $w'$ such that
$w'(L_a)=w(L_a)$ for $a<l'$ and $w'(L_{l'})=x-\sum_{a<l'} w(L_a)$.

\begin{figure*}[t]
  \begin{small} 
  \rule[0.025in]{\textwidth}{0.01in}
  {\bf Input:} Graph $G=(V,E)$ with capacities $c_e$, terminals $T$
  with a fractional laminar cut family $\C$, common sink $t$ with
  $t\not\in \cup_{C\in\C}C$.\\ 
  {\bf Output:} A collection of cuts $\A$, one for each terminal in $T$.\\
  \rule[0.025in]{\textwidth}{0.01in}

  \begin{enumerate}
  \item Initialize $T'=T$, $\A=\emptyset$, and $M(v)=\{v\}$ for all
    $v\in V$. Compute the depths of vertices and terminals.
  \item While there are terminals in $T'$ do:
    \begin{enumerate}
    \item Let $i$ be a terminal with the maximum depth in $T'$. Let
      $A_i=M(r_i)$. Add $A_i$ to $\A$ and remove $i$ from $T'$.
    \item Let $K=K_{r_i}^1$.  Remove cuts in $K\cap \C_i$ from $K$,
      $\C_i$ and $\C$. While there exists a terminal $j\in T'$ with a
      cut $C\in K\cap\C_j$, do the following: let $w=w(C)$; remove $C$
      from $K$, $\C_j$ and $\C$; remove cuts in $\C_i^{w}$ from $\C_i$
      and add them to $\C_j$ (that is, these cuts are reassigned from
      terminal $i$ to terminal $j$).
    \item If there exists an edge $e=(u,v)$ with $\lef=0$, merge the
      meta-nodes $M(u)$ and $M(v)$ (we say that the edge $e$ has been
      ``contracted'').
    \item Recompute the depths of vertices and terminals.
    \end{enumerate}
  \end{enumerate}
  \rule[0.025in]{\textwidth}{0.01in}
  \caption{Algorithm {\em Round-1}---Rounding algorithm for common-sink
  $s$-$t$ cut packing}
  \label{fig:Round1}
  \end{small}
\end{figure*}

The algorithm {\em Round-1} is given in Figure~\ref{fig:Round1}. At
every step, the algorithm picks a terminal, say $i$, with the maximum
depth and assigns an integral cut to it. This potentially frees up
capacity used up by the fractional cuts of $i$, but may use up extra
capacity on some edges that was previously occupied by fractional cuts
belonging to other terminals. In order to avoid increasing edge loads,
we reassign to terminals in the latter set, fractional cuts of $i$
that have been freed up.

Our analysis has two parts. Lemma~\ref{lem:round1-feas} shows that the
family $\C$ continues to remain feasible, that is it always satisfy
the first two conditions in Definition~\ref{def:feas} for the
unassigned terminals. Lemma~\ref{lem:round1-load} analyzes the total
load of the fractional and integral families as the algorithm
progresses.

\begin{lemma}
\label{lem:round1-feas}
Throughout the algorithm, the cut family $\C$ is a fractional laminar
family for terminals in $T'$ with $t\not\in \cup_{C\in\C}C$.
\end{lemma}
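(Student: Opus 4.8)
The plan is to argue by induction on the iterations of the while loop in {\em Round-1}, showing that the invariant ``$\C$ is a fractional laminar cut family for $T'$ with $t\notin\bigcup_{C\in\C}C$'' is preserved by each of the three types of modifications performed in step 2(b): (i) deleting cuts of the just-assigned terminal $i$ from $\C$, (ii) deleting a cut $C\in K\cap\C_j$ for some unassigned $j$, and (iii) reassigning an innermost bundle $\C_i^{w}$ of $i$'s freed cuts to $j$. For the base case, the family we start with is assumed feasible for the CSCP, hence in particular it is a fractional laminar family and $t$ is outside every cut. After each iteration we also remove $i$ from $T'$, so we must check that the surviving cuts still form a valid fractional laminar family for the \emph{smaller} terminal set.

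The three properties to maintain are: laminarity, the ``each cut is associated with a unique terminal of $T'$ and contains that terminal's vertex'' condition, and the total-weight-one condition $\sum_{C\in\C_j}w(C)=1$ for every $j\in T'$. First I would observe that laminarity is trivially preserved: deleting cuts from a laminar family and reassigning a cut to a different terminal do not create any new crossing pair, since the underlying \emph{set} of cuts only shrinks over time. Second, the uniqueness/containment condition: cuts deleted from $\C$ obviously cause no problem; a cut moved from $\C_i$ to $\C_j$ must be shown to contain $r_j$ — here is where I would use laminarity together with the fact that $C\in K\cap\C_j$ is a cut containing $r_j$ and crossing the edge $e$ whose contraction merged $r_i$ into $M(r_i)$, so the reassigned cuts from $\C_i^{w}$, being innermost cuts of $i$, are nested inside $C$ and hence also contain $r_j$. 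Making this nesting argument precise is the technical heart of the step. Third, the total-weight condition: when we remove from $\C_j$ a cut $C$ of weight $w=w(C)$ and add to $\C_j$ the family $\C_i^{w}$, by definition of the $L^x$ operator the added family has total weight exactly $w$, so $\sum_{C'\in\C_j}w(C')$ is unchanged; and $i$ is removed from $T'$ so the depletion of $\C_i$ is harmless.

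The main obstacle I anticipate is carefully justifying the containment claim for reassigned cuts, i.e.\ that every cut in $\C_i^{w}$ lies inside the cut $C\in\C_j$ being removed, so that after reassignment it still contains $r_j$; this requires combining laminarity of $\C$, the fact that $C$ is a cut crossing the contracted edge while $C$'s complement contains $M(r_i)$ only via the meta-node structure, and the ``innermost first'' ordering that guarantees the depth of $i$ was maximal. I would also need to verify that $t$ never enters any cut — but since the family only loses cuts and reassignment moves an existing cut (which already excluded $t$) between sub-collections, no cut ever gains a vertex, so $t\notin\bigcup_{C\in\C}C$ is immediate. Finally I would note the subtlety flagged before Definition~\ref{def:feas} is not an issue here because CSCP feasibility does not impose the asymmetric separation condition of the MCP; only conditions (1) and (2) for CSCP, namely $T'=\bigcup_a S_a\setminus\{t\}$ restricted appropriately and $t\notin\bigcup_{C\in\C}C$, together with the fractional-laminar-family axioms, need to be checked, and all follow from the invariant above.
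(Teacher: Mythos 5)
Your overall structure (induction over iterations; checking laminarity, the weight-one condition via the definition of $L^x$, and the exclusion of $t$) matches the paper's proof, and those parts are fine. But the step you yourself flag as the technical heart --- showing that the cuts reassigned from $\C_i^{w}$ to $j$ contain $r_j$ --- is argued in the wrong direction, and the inference you draw would be invalid even if the nesting you assert were true. You claim the cuts of $\C_i^{w}$ are ``nested inside $C$ and hence also contain $r_j$''; containment of $r_j$ does not follow from being a subset of $C$, since a subset of $C$ can easily miss $r_j$. The correct statement is the opposite inclusion, and it is what the paper proves: by laminarity all cuts containing $r_i$ form a chain under inclusion, $C\in K=K_{r_i}^1$ lies in the innermost weight-$1$ prefix of that chain, and --- crucially --- the cuts of $\C_i$ lying in that prefix (namely $K\cap\C_i$) are deleted \emph{before} any reassignment takes place. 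Hence every cut still present in $\C_i$ at reassignment time, in particular every cut of $\C_i^{w}$ (these are the innermost among the \emph{remaining} cuts of $i$, not the innermost cuts of the original $\C_i$), is a superset of $C$, and therefore contains $r_j\in C$; it also still excludes $t$, since no cut ever gains vertices.

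Your appeal to ``the edge $e$ whose contraction merged $r_i$ into $M(r_i)$'' is also off target: $K_{r_i}^1$ is defined by membership of the vertex $r_i$ in cuts, not by any contracted edge, and the reassignments in Step 2(b) occur before the contraction step of that iteration, so the meta-node structure plays no role in the feasibility argument. As written, then, the proposal has a genuine gap at exactly the point the lemma turns on; the fix is the superset argument above, which is the paper's argument.
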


\begin{proof}
We prove this by induction over the iterations of the algorithm. The
claim obviously holds at the beginning of the algorithm. Consider a
step at which some terminal $i$ is assigned an integral cut. The
algorithm removes all the cuts in $K=K_{r_i}^1$ from $\C$. Some of
these cuts belong to other terminals; those terminals are reassigned
new cuts. Specifically, we first remove cuts in $K\cap\C_i$ from the
cut family. The total weight of the remaining cuts in $K$ as well as
the total weight of those in $\C_i$ is equal at this time.
Subsequently, we successively consider terminals $j$ with a cut $C\in
K \cap \C_j$, and let $w=w(C)$. Then we remove $C$ from the cut family,
and reassign cuts of total weight $w$ in $\C_i^w$ to $j$. Therefore,
the total weight of cuts assigned to $j$ remains $1$. Furthermore, the
newly reassigned cuts contain the cut $C$, and therefore the vertex
$r_j$, but do not contain the sink $t$. Therefore, $\C$ continues to
be a fractional laminar family for terminals in $T'$.
\end{proof}

\begin{lemma}
\label{lem:round1-load}
At any point of time for every edge $e\in E$, $\lei\le c_e-1$ implies
$\lei+\lef\le c_e$, $\lei=c_e$ implies $\lef\le 1$, and $\lei=c_e+1$
implies $\lef=0$. Furthermore, for $e=(u,v)$, $\lei=c_e$ implies that
either $K_u\cap S_e$ or $K_v\cap S_e$ is empty.
\end{lemma}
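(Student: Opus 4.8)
The plan is to prove the invariant by induction on the iterations of Round-1. The base case is immediate: initially $\lei = 0$ for every edge, and feasibility of the fractional laminar family gives $\lef \le c_e$, so the first clause holds vacuously-to-tightly; the ``furthermore'' clause is vacuous since no edge has $\lei = c_e$ yet. For the inductive step I would fix an iteration in which terminal $i$ is processed, and track what happens to $\lei$ and $\lef$ for an arbitrary edge $e=(u,v)$. The key quantitative observation is that adding the integral cut $A_i = M(r_i)$ to $\A$ increases $\lei$ by exactly $1$ if $e \in \delta(M(r_i))$ and by $0$ otherwise; and removing the cuts $K = K_{r_i}^1$ from $\C$ decreases $\lef$ by exactly the total weight of cuts in $K$ that cross $e$ (reassignments in step (b) only move cuts between terminals and, as argued in Lemma~\ref{lem:round1-feas}, keep the nested structure $\C_i^w \subseteq C$, so they never increase any edge's fractional load — each reassigned cut is contained in a cut already counted). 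So the whole step changes $(\lei, \lef)$ on $e$ by $(+1, -\mu_e)$ or $(0, -\mu_e)$ where $\mu_e = \sum_{C \in K \cap S_e} w(C) \ge 0$, except that meta-node contraction in step (c) can only happen once $\lef$ has already dropped to $0$.

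The heart of the argument is to show that whenever $e$ gets a new integral cut (the $(+1,-\mu_e)$ case), we in fact have $\mu_e \ge 1$, i.e. the cuts of $K = K_{r_i}^1$ crossing $e$ carry total weight at least $1$, so that $\lef$ drops by at least $1$ to compensate. Here is where I expect the main obstacle to lie, and where the structure of $K_{r_i}^1$ and the choice of $i$ as a maximum-depth terminal is essential. The relevant facts: $e \in \delta(M(r_i))$ means exactly one endpoint, say $u$, lies in the meta-node $M(r_i)$; by the induction hypothesis edges inside a meta-node have zero fractional load, so all cuts in $K_u$ that are ``active'' (cross $e$) actually separate $u$ from $v$. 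Since $i$ has maximum depth, $d_{r_i} \ge d_v$; combined with the fact that $K_{r_i}^1$ collects the innermost cuts through $r_i$ of total weight exactly $1$, and that $d_{r_i} \ge 1$ (the cuts $\C_i$ alone through $r_i$ have weight $1$), one argues that the subfamily of $K_{r_i}^1$ separating $r_i$ (equivalently $u$) from $v$ has weight at least $d_{r_i} - d_v \ge 0$... which is not yet enough. The correct accounting is: $\mu_e = $ weight of cuts in $K_{r_i}^1$ crossing $e$ $= \min(1, d_u) - (\text{weight of cuts in } K_{r_i}^1 \text{ containing both } u,v)$; cuts containing both $u$ and $v$ also contain the edge $e$ on the ``inside'', but since they are nested with the cuts through $r_i$ and laminar, one shows the innermost weight-$1$ bundle through $r_i=u$ either lies entirely on the $u$-side of $e$ or the relevant tail does, forcing $\mu_e = 1$ exactly — this uses laminarity of $\C$ crucially. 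I would isolate this as the one delicate sub-claim and prove it by a short case analysis on whether $v$'s cuts nest inside or outside $r_i$'s weight-$1$ bundle.

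Given that sub-claim, the three clauses propagate routinely. If before the step $\lei \le c_e - 1$ and $\lei + \lef \le c_e$: after a $(+1,-\mu_e)$ change with $\mu_e \ge 1$ the sum only decreases, and $\lei$ becomes $\le c_e$; if it becomes exactly $c_e$ then $\lef \le \lef^{\text{old}} - 1 \le c_e - \lei^{\text{old}} - 1 = \ldots$, and one checks $\lef \le 1$; a $(0,-\mu_e)$ change trivially preserves everything. If before the step $\lei = c_e$ then $\lef \le 1$; the ``furthermore'' clause says $K_u \cap S_e$ or $K_v \cap S_e$ was empty, so $e$ cannot be in $\delta(M(r_i))$ for the current maximum-depth terminal unless that terminal's bundle is on the empty side, in which case $\mu_e = 0$ forces... — this case needs the ``furthermore'' clause precisely to rule out pushing $\lei$ to $c_e+1$ while $\lef$ is still positive, and to establish $\lei = c_e+1 \Rightarrow \lef = 0$. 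Finally the ``furthermore'' clause itself is re-established: when $e$ first reaches $\lei = c_e$, the step that caused it removed a weight-$1$ bundle of cuts all lying on one side (say through $u$), so after the step every remaining cut through $u$ that crosses $e$ has been removed — hence $K_u \cap S_e = \emptyset$ — and this persists since $\lef$ on $e$ is now $0$ (from clause analysis, $\lei = c_e$ together with the drop gives $\lef \le 1$, and if $\lef$ were positive the surviving crossing cut would have to come from $v$'s side, which one rules out by laminarity against the just-removed bundle). I would close by noting that meta-node merges in step (c) do not affect the invariant because they occur only when $\lef = 0$, at which point the statement only asserts $\lei \le c_e + 1$, already guaranteed.
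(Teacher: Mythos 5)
Your overall skeleton (induction over iterations, observing that reassignments in Step 2(b) never increase any edge's fractional load, so a loading step changes $(\lei,\lef)$ by $(+1,-\mu_e)$ with $\mu_e$ the weight of $K_{r_i}^1\cap S_e$) matches the paper, but the sub-claim you make the heart of the argument --- that $\mu_e\ge 1$ (indeed $=1$) whenever the edge receives a new integral cut --- is false, and no laminarity case analysis can rescue it. Writing $e=(u,v)$ with $r_i\in M(u)$, the cuts of $K_u=K_{r_i}$ form a chain, and those crossing $e$ are exactly the innermost ones, of total weight $\alpha$; hence $\mu_e=\min(1,\alpha)$, which is strictly less than $1$ whenever $\alpha<1$, a situation that genuinely occurs. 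In fact, if your sub-claim held, then $\lei+\lef\le c_e$ would be a global invariant and, since an edge is only loaded while $\lef>0$, Round-1 would always return a solution of load at most $c_e$; this contradicts the integrality-gap instance in Section~\ref{sec:NP-hard}, whose fractional solution is already laminar with load $1$ yet admits no integral solution of load $1$. The three-tier form of the lemma exists precisely because the compensation can fall short of a full unit.

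The paper's proof handles this by comparing $\alpha$ (weight of $K_u\cap S_e$) with $\beta$ (weight of $K_v\cap S_e$): since all cuts off $S_e$ contain both or neither endpoint, $d_v=d_u-\alpha+\beta$, and the max-depth choice of $i$ gives $\alpha\ge\beta$. Then one splits on $\alpha\ge 1$ versus $\alpha<1$: in the first case $K_u^1\subseteq K_u\cap S_e$, the drop is exactly $1$ and the sum $\lei+\lef$ is preserved; in the second case $K_u\cap S_e\subseteq K_u^1$ is removed entirely, so the residual fractional load is at most $\beta\le\alpha<1$ and $K_u\cap S_e=\emptyset$ afterwards --- which is exactly what establishes the $\lei=c_e$ clause and the ``furthermore'' clause, and (using the inductive ``furthermore'' clause to force $\beta=0$) the $\lei=c_e+1$ clause. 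Your treatment of these last cases trails off, and your persistence argument for the ``furthermore'' clause is also incorrect: you assert $\lef=0$ once $\lei=c_e$, which is stronger than the lemma and false in general; persistence of $K_u\cap S_e=\emptyset$ follows simply because Round-1 only removes or reassigns cuts, never modifies them, so $K_u$ and $S_e$ only shrink between accesses to $e$.
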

\begin{proof}
Let $e=(u,v)$. We prove the lemma by induction over time. Note that in
the beginning of the algorithm, we have for all edges $\lef \leq c_e$
and $\lei =0$, so the inequality $\lei + \lef \leq c_e$ holds.

Let us now consider a single iteration of the algorithm and suppose
that the integral load of the edge increases during this
iteration. (If it doesn't increase, since $\lef$ only decreases over
time, the claim continues to hold.) Let $i$ be the commodity picked by
the algorithm in this iteration, then $M(r_i)$ is the same as either
$M(u)$ or $M(v)$. Without loss of generality assume that $r_i\in
M(u)$. Let $\alpha$ denote the total weight of cuts in $K_u\cap S_e$
and $\beta$ denote the total weight of cuts in $K_v\cap S_e$ prior to
this iteration. Then, $\alpha+\beta=\lef$. Moreover, all cuts in
$\C\setminus S_e$ either contain both or neither of $u$ and $v$. So we
can relate the depths of $v$ and $u$ in the following way: $d_v = d_u
- \alpha + \beta$. Since $i$ is the terminal picked during this
iteration, we must have $d_u\ge d_v$, and therefore, $\alpha\ge
\beta$.

We analyze the final edge load depending on the value of $\alpha$. Two
cases arise: suppose first that $\alpha\ge 1$. Then $K_u^1\subseteq
K_u\cap S_e$, and the fractional weight of $e$ reduces by exactly
$1$. On the other hand, the integral load on the edge increases by
$1$, and so the total load continues to be the same as before. On the
other hand, if $\alpha\le 1$, then $K_u\cap S_e\subseteq K_u^1$, and
all the cuts in $K_u\cap S_e$ get removed from $S_e$ in this
iteration.  Therefore the final fractional load is at most
$\beta\le\alpha\le 1$, and at the end of the iteration, $K_u\cap
S_e=\emptyset$. If $\lei\le c_e-1$, we immediately get that the total
load on the edge is at most $c_e$.

If $\lei=c_e$, then prior to this iteration $\lei=c_e-1$, and so
$\lef\le 1$ by the induction hypothesis. Then, as we argued above,
$\alpha\le\lef\le 1$ implies that the new fractional load on the edge
is at most $1$ and at the end of the iteration, $K_u\cap
S_e=\emptyset$.

Finally, if $\lei=c_e+1$, then prior to this iteration, $\lei=c_e$ and
by the induction hypothesis, $\beta$ is zero (as $\alpha\ge\beta$ and
either $K_u\cap S_e$ or $K_v\cap S_e$ is empty). Along with the fact
that $\alpha\le 1$ (by the inductive hypothesis), the final fractional
load on the edge is $\beta=0$.
\end{proof}

The two lemmas together give us a proof of Lemma~\ref{lem:round1}. We
restate the lemma for completeness.

\setcounter{theorem}{3}
\addtocounter{theorem}{-1}
\begin{lemma}
Given a fractional laminar cut family $\C$ feasible for the CSCP on a
graph $G$ with {\em integral} edge capacities $c_e$, the algorithm
{\em Round-1} produces an integral family of cuts $\A$ that is
feasible for the CSCP on $G$ with edge capacities $c_e+1$.
\end{lemma}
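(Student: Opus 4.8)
The plan is to assemble the proof from Lemmas~\ref{lem:round1-feas} and~\ref{lem:round1-load}, which already do the heavy lifting; what remains is to check that the family $\A$ returned by \emph{Round-1} satisfies every requirement of Definition~\ref{def:feas} for the CSCP with capacities $c_e+1$: that $\A$ is a fractional laminar family (with unit weights) assigning exactly one cut to each terminal in $T$, that $t\notin\cup_{C\in\A}C$ --- and, relatedly, that the cut assigned to each commodity separates its two terminals --- and that $\lei\le c_e+1$ for every edge $e$. Laminarity is immediate: meta-nodes only ever merge, so the sets $M(r_i)$ output over the course of the algorithm form a laminar family. That $\A$ assigns one cut per terminal follows because the outer loop removes exactly one terminal from $T'$ per iteration (step~2(a)) and runs until $T'=\emptyset$, and the inner loop of step~2(b) terminates since it strictly shrinks $|K|$ and there are finitely many cuts.

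The crux is the claim that when a terminal $i$ is selected, $t\notin M(r_i)$. I would prove this by observing that two vertices in a common meta-node lie in exactly the same cuts of the current family $\C$: a meta-node arises only by contracting edges of zero fractional load (step~2(c)), fractional loads never increase, and ``lying in the same set of cuts of $\C$'' is an equivalence relation, so any two vertices joined by a chain of contracted edges are presently separated by no cut of $\C$. By Lemma~\ref{lem:round1-feas} the sink $t$ is in no cut of $\C$, hence has depth $0$; but since $i\in T'$ at the moment it is picked, that same lemma gives $\sum_{C\in\C_i}w(C)=1$ with $r_i\in C$ for each $C\in\C_i$, so $r_i$ has depth at least $1$. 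Therefore $r_i$ and $t$ lie in distinct meta-nodes, so $t\notin A_i=M(r_i)$ while $r_i\in A_i$; this gives condition~(2) and shows each commodity's cut separates its terminal at $r_i$ from the sink.

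For the load bound I would combine Lemma~\ref{lem:round1-load} with the contraction mechanism. A single cut crosses any edge at most once and only one cut enters $\A$ per iteration, so $\lei$ increases by at most $1$ in an iteration; moreover, once an edge is contracted its endpoints remain in one meta-node forever, so no later assigned cut $M(r_i)$ crosses it and its integral load is frozen. Hence if some edge $e$ ever attains $\lei=c_e+1$, Lemma~\ref{lem:round1-load} forces $\lef=0$ at the end of that iteration, so step~2(c) contracts $e$ in the same iteration and $\lei$ never grows further. Thus $\lei\le c_e+1$ throughout the execution, establishing condition~(3). Together with laminarity and the one-cut-per-terminal and sink-exclusion properties, this proves the lemma.

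I expect the genuinely delicate point to be the meta-node/depth argument that rules out $t\in M(r_i)$, since this is the one ingredient not directly supplied by Lemmas~\ref{lem:round1-feas} and~\ref{lem:round1-load}; the load argument also requires minor care about the order of events within an iteration, as the integral cut is added (step~2(a)) before zero-load edges are contracted (step~2(c)).
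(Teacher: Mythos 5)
Your proposal is correct and follows essentially the same route as the paper: invoke Lemma~\ref{lem:round1-feas} for sink exclusion, Lemma~\ref{lem:round1-load} plus the contraction of zero-load edges for the $c_e+1$ bound, and the fact that $A_i=M(r_i)$ contains $r_i$. Your depth-comparison argument that $t\notin M(r_i)$ is just a slightly more explicit rendering of the paper's observation that the meta-node $M(r_i)$ is contained in every cut of $\C_i$ (since contracted edges are crossed by no current cut), so there is no substantive difference.
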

\setcounter{theorem}{7}

\begin{proof}
First note that for every $i$, $A_i$ is set to be the meta-node of
$r_i$ at some point during the algorithm, which is a subset of every
cut in $\C_i$ at that point of time. Then $r_i\in A_i$, and by
Lemma~\ref{lem:round1-feas}, $t\not\in A_i$. Second, for any edge $e$,
its integral load $\lei$ starts out at being $0$ and gradually
increases by at most an additive $1$ at every step, while its
fractional load decreases. Once the fractional load of an edge becomes
zero, both its end points belong to the same meta-node, and so the
edge never gets loaded again. Therefore, by
Lemma~\ref{lem:round1-load}, the maximum integral load on any edge $e$
is at most $c_e+1$.
\end{proof}

\subsection{The general case (proof of Lemma~\ref{lem:round2})}

As in the common-sink case, the rounding algorithm for the MCP
proceeds by picking terminals according to an order suggested by the
fractional solution and assigning the smallest cuts possible to them
subject to the availability of capacity on the edges. In the algorithm
{\em Round-1}, we reassign cuts among terminals at every iteration so
as to maintain the feasibility of the remaining fractional
solution. In the case of MCP, this is not sufficient---a simple
reassignment of cuts as in the case of algorithm {\em Round-1} may not
ensure separation among terminals belonging to the same commodity. We
use two ideas to overcome this difficulty: first, among terminals of
equal depth, we use a different ordering to pick the next terminal to
minimize the need for reassigning cuts; second, instead of reassigning
cuts, we modify the existing fractional cuts for unassigned terminals
so as to remain feasible while paying a small extra cost in edge load.

We now define the ``cut-inclusion'' ordering over terminals.  For
every terminal $i\in T$, let $O_i$ denote the largest (outermost) cut
in $\C_i$, that is, $\forall C\in \C_i$, $C\subseteq O_i$. We say that
terminal $i$ dominates (or precedes) terminal $j$ in the cut-inclusion
ordering, written $i>_{CI} j$, if $O_i\subset O_j$ (if $O_i=O_j$ we
break ties arbitrarily but consistently). Cut-inclusion defines a
partial order on terminals. Note that we can pre-process the cut
family $\C$ by reassigning cuts among terminals, such that for all
pairs of terminals $i, j\in T$ with $i>_{CI} j$, and for all cuts
$C_i\in \C_i$ and $C_j\in\C_j$ with $r_i,r_j\in C_i\cap C_j$, we have
$C_i\subseteq C_j$. We call this property the ``inclusion
invariant''. Ensuring this invariant requires a straightforward
pairwise reassignment of cuts among the terminals, and we omit the
details. Note that following this reassignment, for every terminal
$i$, the new outermost cut of $i$, $O_i$, is the same as or a subset
of its original outermost cut.

As the algorithm proceeds we modify the collection $\C$ as well as
build up the collection $\A$ of integral cuts $A_i$ for $i\in T$. For
example, we may split a cut $C$ into two cuts containing the same
nodes as $C$ and with weights summing to that of $C$. As cuts in $\C$
are modified, their ownership by terminals remains unchanged, and we
therefore continue using the same notation for them. Furthermore, if
for two cuts $C_1$ and $C_2$, we have (for example) $C_1\subseteq C_2$
at the beginning of the algorithm, this relationship continues to hold
throughout the algorithm. This implies that the inclusion invariant
continues to hold throughout the algorithm. We ensure that throughout
the execution of the algorithm the cut family $\C$ continues to be a
fractional laminar family for terminals $T'$. At any point of time,
the depth of a vertex or a terminal, as well as the cut-inclusion
ordering is defined with respect to the current fractional family
$\C$.

As before, let $S_e$ denote the set of cuts in $\C$ that cross $e$ ---
$S_e = \{C\in\C | e\in\delta(C)\}$. Recall that $K_v$ denotes the set
of cuts in $\C$ containing the vertex $v$, and of these $K_v^1$
denotes the inner-most cuts with total weight exactly $1$.

The rounding algorithm is given in Figure~\ref{fig:Round-2}. Roughly
speaking, at every step, the algorithm picks a maximum depth terminal
$i$ and assigns the cut $M(r_i)$ to it (recall that $M(r_i)$ is the
meta-node of the vertex $r_i$ where terminal $i$ resides). It ``pays''
for this cut using fractional cuts in $K_{r_i}^1$. Of course some of
the cuts in $K_{r_i}^1$ belong to other commodities, and need to be
replaced with new fractional cuts. The cut-inclusion invariant ensures
that these other commodities reside at meta-nodes other than $M(r_i)$,
so we modify each cut in $K_{r_i}^1\setminus C_i$ by removing $M(r_i)$
from it (see Figure~\ref{fig:rounding-step}). This process potentially
increases the total loads on edges incident on $M(r_i)$ by small
amounts, but on no other edges. Step~\ref{step:c} of the algorithm
deals with the case in which edges incident on $M(r_i)$ are already
overloaded; In this case we avoid loading those edges further by
assigning to $i$ some subset of the meta-node $M(r_i)$.
Lemmas~\ref{lem:meta-node-1} and \ref{lem:meta-node-2} show that this
case does not arise too often.

\begin{figure}[]
\begin{center}
\epsfig{file = 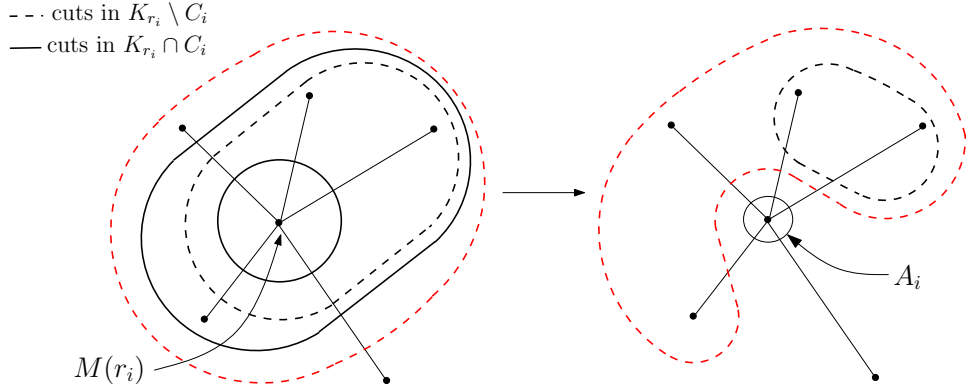, height = 2in}
\caption{An iteration of algorithm {\em Round-2} (Steps 3b \& 3d)}
\label{fig:rounding-step}
\end{center}
\end{figure}

For a terminal $i$ and edge $e$, if at the time that $i$ is picked in
Step~\ref{step:a} of the algorithm $e$ is in $\delta(M(r_i))$, we say
that $i$ accesses $e$. If $e\in E_i$, we say that $i$ defaults on $e$,
and if $e$ is in $\delta(A_i)$ after this iteration, then we say that
$i$ loads $e$.

During the course of the algorithm integral loads on edges increase,
but fractional loads may increase or decrease. To study how these edge
loads change during the course of the algorithm, we divide edges into
five sets. Let $X_{-1}$ denote the set of edges with $\lei\le c_e-1$
and $\lef>0$. For $a\in \{0,1\}$, let $X_a$ denote the set of edges
with $\lei = c_e+a$ and $\lef>0$. $Y$ denotes the set of edges with
$\lei\ge c_e+2$ and $\lef>0$, and $Z$ denotes the set of edges with
$\lef=0$. Every edge starts out with a zero integral load. As the
algorithm proceeds, the edge goes through one or more of the $X_a$s,
may enter the set $Y$, and eventually ends up in the set $Z$. As for
the CSCP, when an edge enters $Z$, we merge the end-points of the edge
into a single meta-node. However, unlike for the CSCP, edges may get
loaded even after entering $Z$. When an edge enters $Y$, we avoid
loading it further (Step~\ref{step:c}), and instead load some edges in
$Z$. Nevertheless, we ensure that edges in $Z$ are loaded no more than
once.


As before our analysis has two components. First we show
(Lemma~\ref{lem:feasible}) that the cuts produced by the algorithm are
feasible. The following lemmas give the desired guarantees on the
edges' final loads: Lemmas~\ref{lem:X-1} and \ref{lem:X} analyze the
loads of edges in $X_a$ for $a\in\{-1,0,1\}$; Lemma~\ref{lem:Y}
analyzes edges in $Y$ and Lemmas~\ref{lem:meta-node-1} and
\ref{lem:meta-node-2} analyze edges in $Z$. We put everything together
in the proof of Lemma~\ref{lem:round2} at the end of this section.

\begin{figure*}[t]
  \begin{small}
  \rule[0.025in]{\textwidth}{0.01in}
  {\bf Input:} Graph $G=(V,E)$ with capacities $c_e$ on edges, a set
  of terminals $T$ with a fractional laminar cut family $\C$.\\
  {\bf Output:} A collection of cuts $\A$, one for each terminal in
  $T$.\\
  \rule[0.025in]{\textwidth}{0.01in}

  \begin{enumerate}
  \item Preprocess the family $\C$ so that it satisfies the inclusion
    invariant.  
  \item Initialize $T'=T$, $\A=\emptyset$, $Y, Z=\emptyset$, and
    $M(v)=\{v\}$ for all $v\in V$.
  \item While there are terminals in $T'$ do:
    \begin{enumerate}
    \item \label{step:a} Consider the set of unassigned terminals with the
      maximum depth, and of these let $i\in T'$ be a terminal that is
      undominated in the cut inclusion ordering. Let $E_i=Y\cap
      \delta(M(r_i))$.
    \item If $E_i=\emptyset$, let $A_i=M(r_i)$.
    \item \label{step:c} If $E_i\ne \emptyset$ (we say that the terminal
      has ``defaulted'' on edges in $E_i$), let $U_i$ denote the set of
      end-points of edges in $E_i$ that lie in $M(r_i)$. If $r_i\in U_i$,
      abort and return error. Otherwise, consider the vertex in $U_i$ that
      entered $M(r_i)$ first during the algorithm's execution, call this
      vertex $u_i$. Set $A_i$ to be the meta-node of $r_i$ just prior to
      the iteration where $M(u_i)$ becomes equal to $M(r_i)$.
    \item Add $A_i$ to $\A$. Remove $\C_i$ from $\C$ and $i$ from
      $T'$. For every $j\in T'$ and $C\in K_{r_i}^1\cap \C_j$, let
      $C=C\setminus\{M(r_i)\}$.
    \item If for some edge $e$, $\lei=c_e+2$ and $\lef>0$, add $e$ to
      $Y$. If there exists an edge $e=(u,v)$ with $\lef=0$, merge the
      meta-nodes $M(u)$ and $M(v)$ (we say that the edge $e$ has been
      ``contracted''.) Add all edges $e$ with $\lef=0$ to $Z$ and remove
      them from $Y$.
    \item Recompute the depths of vertices and terminals.
    \end{enumerate}
  \end{enumerate}
  \rule[0.025in]{\textwidth}{0.01in}
  \caption{Algorithm {\em Round-2}---Rounding algorithm for multiway cut packing}
  \label{fig:Round-2}
  \end{small}
\end{figure*}

\begin{lemma}
\label{lem:feasible}
For all $i$, $r_i\in A_i\subseteq O_i$.
\end{lemma}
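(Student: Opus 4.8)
The plan is to derive both containments from a single structural invariant about \emph{Round-2}: at every step of the algorithm, every cut currently in $\C$ is a union of the current meta-nodes; equivalently, for each current meta-node $M$ and each $C\in\C$, either $M\subseteq C$ or $M\cap C=\emptyset$. Granting this, the lemma is almost immediate. A vertex's meta-node only ever grows (meta-nodes are merged, never split) and always contains that vertex; so in the branch $E_i=\emptyset$ we have $A_i=M(r_i)$, and in the branch $E_i\neq\emptyset$ (Step~\ref{step:c}) $A_i$ is the meta-node of $r_i$ from an \emph{earlier} iteration, hence contained in the meta-node $M(r_i)$ of $r_i$ at the iteration $i$ is picked. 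Either way $r_i\in A_i\subseteq M(r_i)$ at pick time, and it remains only to show $M(r_i)\subseteq O_i$ at that time.

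I would prove the invariant by induction over the operations that change the configuration. It holds after preprocessing, where every meta-node is a singleton. Three operations must be checked. (i) Contracting an edge $e=(u,v)$ with $\lef=0$: since $\C$ contains only positive-weight cuts and $\lef=0$, no cut in $\C$ crosses $e$, so every $C\in\C$ satisfies $u\in C\iff v\in C$; combined with the inductive hypothesis this shows that merging $M(u)$ and $M(v)$ leaves every $C$ a union of meta-nodes. (ii) The modification that replaces $C$ by $C\setminus M(r_i)$ for each $C\in K_{r_i}^1$ owned by another terminal: here $r_i\in C$ and $r_i\in M(r_i)$, so by the inductive hypothesis $M(r_i)\subseteq C$, hence $C\setminus M(r_i)$ is again a union of (the remaining) current meta-nodes. (iii) Splitting a cut into two copies on the same vertex set: this changes no cut's vertex set, so the property is trivially preserved.

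Finally I would combine these. At the iteration where $i$ is picked, $i\in T'$ and, by the invariant that \emph{Round-2} maintains, $\C$ is a fractional laminar family for $T'$; hence $\C_i\neq\emptyset$ and every cut in $\C_i$ contains $r_i$. Fix any current cut $C'\in\C_i$. By the union-of-meta-nodes invariant and $r_i\in C'\cap M(r_i)\neq\emptyset$, we get $M(r_i)\subseteq C'$. On the other hand, after preprocessing the cuts owned by $i$ only shrink (by the modification step) or split (preserving vertex sets), so the vertex set of $C'$ is contained in that of some post-preprocessing cut of $i$, and hence in $O_i$. Therefore $A_i\subseteq M(r_i)\subseteq C'\subseteq O_i$, which with $r_i\in A_i$ proves the lemma.

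The main obstacle is the bookkeeping around operation (ii): one must be certain that deleting the whole meta-node $M(r_i)$ from a cut $C\in K_{r_i}^1\cap\C_j$ does not violate the invariant even when $r_j$ itself lies in $M(r_i)$. This is precisely why the invariant is phrased purely in terms of meta-nodes — it is strictly weaker than ``$\C$ is a fractional laminar family'' and so is preserved unconditionally — while the stronger laminar-family property is the one guaranteed separately by the design of the algorithm and which we simply invoke. A secondary point needing care is the Step~\ref{step:c} branch, where $A_i$ is a meta-node from a past iteration rather than the current $M(r_i)$; monotonicity of meta-nodes is exactly what makes the same argument go through there.
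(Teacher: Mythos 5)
Your proof is correct and takes essentially the same route as the paper's, which simply asserts that $A_i$ is the (current or an earlier) meta-node of $r_i$, so $r_i\in A_i$ and, at assignment time, $A_i\subseteq M(r_i)\subseteq O_i$. Your union-of-meta-nodes invariant, the monotone growth of meta-nodes, and the observation that cuts of $i$ only shrink merely make explicit the facts the paper's two-line proof leaves implicit.
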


\begin{proof}
Each cut $A_i$ is set equal to the meta-node of $r_i$ at some stage of
the algorithm. Therefore, $r_i\in A_i$ for all $i$. Furthermore, at
the time that $i$ is assigned an integral cut, $A_i\subseteq
M(r_i)\subseteq O_i$.
\end{proof}

Next we prove some facts about the fractional and integral loads as an
edge goes through the sets $X_a$. The proofs of the following two
lemmas are similar to that of Lemma~\ref{lem:round1-load}.

\begin{lemma}
\label{lem:X-1}
At any point of time, for every edge $e\in X_{-1}$, $\lei+\lef\le c_e$.
\end{lemma}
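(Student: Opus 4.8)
The plan is to mimic the inductive argument used for Lemma~\ref{lem:round1-load}, tracking a single edge $e=(u,v)$ over the iterations and showing that the invariant $\lei+\lef\le c_e$ is preserved as long as $e$ stays in $X_{-1}$ (i.e.\ as long as $\lei\le c_e-1$ and $\lef>0$). At the start of the algorithm $\lei=0$ and $\lef\le c_e$, so the invariant holds. For the inductive step I would fix an iteration in which terminal $i$ is picked and consider how $\lei$ and $\lef$ change. Since $i$ resides in $M(r_i)$, and $M(r_i)$ equals exactly one of $M(u)$, $M(v)$ (otherwise $e$ would already have been contracted into $Z$, contradicting $e\in X_{-1}$), assume without loss of generality $r_i\in M(u)$. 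If the iteration does not raise $\lei$ then, since $\lef$ can only change by the cut-shrinking in step~3d which on edge $e$ can only \emph{decrease} $\lef$ when $e\in\delta(M(r_i))$ and leaves it unchanged otherwise, the invariant is maintained; so assume $\lei$ increases by $1$, which forces $e\in\delta(M(r_i))=\delta(M(u))$, i.e.\ $e$ is accessed by $i$.

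The heart of the argument is the same depth comparison as in Lemma~\ref{lem:round1-load}. Let $\alpha$ be the total weight of cuts in $K_u\cap S_e$ and $\beta$ the total weight of cuts in $K_v\cap S_e$ just before the iteration, so $\alpha+\beta=\lef$; every cut not crossing $e$ contains both or neither of $u,v$, hence $d_v=d_u-\alpha+\beta$, and since $i$ has maximum depth we get $d_u\ge d_v$, i.e.\ $\alpha\ge\beta$. Now the key point is that when $i$ is assigned $A_i=M(r_i)$, the cuts in $K_{r_i}^1$ that cross $e$ get $M(r_i)$ removed from them, so they stop crossing $e$; the fractional load on $e$ therefore drops by $\min(\alpha,1)$ (if $\alpha\ge 1$ then $K_u^1\subseteq K_u\cap S_e$ and exactly weight $1$ is removed; if $\alpha<1$ then all of $K_u\cap S_e\subseteq K_u^1$ is removed, a drop of $\alpha$). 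In the first case $\lef$ decreases by exactly $1$ while $\lei$ increases by $1$, so $\lei+\lef$ is unchanged and the invariant persists. In the second case, after the iteration $\lef\le\beta\le\alpha<1$, so $\lei+\lef\le (c_e-1)+1=c_e$, using that $e\in X_{-1}$ gave $\lei\le c_e-1$ before, hence $\lei\le c_e$ after — but wait, we need the post-iteration edge to still be in $X_{-1}$ for the statement to apply; if instead $\lei$ became $c_e$ the edge left $X_{-1}$ and there is nothing to prove for this lemma. So in all cases where $e$ remains in $X_{-1}$, $\lei+\lef\le c_e$.

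The main obstacle I anticipate is being careful about step~3c: when $E_i\ne\emptyset$ the algorithm does \emph{not} set $A_i=M(r_i)$ but a smaller meta-node $A_i$, and then in step~3d it still removes $M(r_i)$ from the crossing cuts in $K_{r_i}^1$. I need to check that this only helps on edges in $X_{-1}$: edges in $E_i$ lie in $Y$ (so they are not in $X_{-1}$ and are untouched here), and for an edge $e\in X_{-1}\cap\delta(M(r_i))$ the shrinking still removes weight from $\lef$ exactly as analyzed, while $\lei$ increases by at most $1$ — actually $A_i\subseteq M(r_i)$ could even mean $e\notin\delta(A_i)$, so $\lei$ might not increase at all, which only makes the invariant easier. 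A second, smaller point is to confirm that cut splitting (splitting a cut $C$ into two copies on the same vertex set with weights summing to $w(C)$) changes neither $\lef$ nor $\lei$ on any edge, so it is invisible to this accounting. Once these bookkeeping checks are in place, the induction closes exactly as above.
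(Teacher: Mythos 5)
Your overall plan (induction over iterations, the depth comparison $d_v=d_u-\alpha+\beta$ giving $\alpha\ge\beta$, and the case split on $\alpha\ge 1$ versus $\alpha<1$) is the same as the paper's, but your load accounting has a genuine gap: it ignores that in {\em Round-2} the fractional load on $e$ can \emph{increase} during the very iteration you analyze. When terminal $i$ with $r_i\in M(u)$ is processed, cuts in $K_{r_i}^1\setminus \C_i$ are not removed or reassigned (as in {\em Round-1}); they are modified by deleting $M(r_i)$ from them. A cut in $K_{r_i}^1\setminus S_e$ that contained \emph{both} $M(u)$ and $M(v)$ did not cross $e$ before, but after $M(u)$ is deleted it contains only $M(v)$ and now crosses $e$, i.e.\ it is ``shifted onto'' $e$. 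Consequently your claims that ``the fractional load on $e$ therefore drops by $\min(\alpha,1)$,'' that after the iteration $\lef\le\beta$ in the case $\alpha<1$, and (in your first paragraph) that the step-3d shrinking ``can only decrease $\lef$'' on $e\in\delta(M(r_i))$ are all false in general. The same oversight undermines your treatment of the defaulting case: when $E_i\ne\emptyset$ and $e\notin\delta(A_i)$, $\lei$ indeed does not increase, but $\lef$ can still go up, so that case is not ``easier'' for free.

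The paper closes exactly this hole by bounding the shift-on increase by the weight of $K_u^1\setminus S_e$: if $\alpha\ge 1$ then $K_u^1\subseteq S_e$, so there is \emph{no} increase and the decrease of at least $1$ offsets the unit increase in $\lei$; if $\alpha<1$ the increase is at most $1-\alpha$ and the decrease at least $\alpha$, so the final fractional load is at most $\beta+1-\alpha\le 1$ (here $\alpha\ge\beta$ is actually needed, not just $\beta<1$), which together with $\lei\le c_e-1$ (valid because $e$ is assumed to remain in $X_{-1}$, and covering the defaulting subcase where $\lei$ does not move) yields $\lei+\lef\le c_e$. Your final numeric conclusions happen to survive this correction, but the missing shift-on accounting is precisely the content that distinguishes this lemma from Lemma~\ref{lem:round1-load}, so as written the proof does not go through; you would need to add the $K_u^1\setminus S_e$ bookkeeping (and note $K_u^1\subseteq S_e$ when $\alpha\ge1$) to repair it.
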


\begin{proof}
We prove the claim by induction over time. Note that in the beginning
of the algorithm, we have for all edges $\lef \leq c_e$ and $\lei
=0$, so the inequality $\lei + \lef \leq c_e$ holds.

Let us now consider a single iteration of the algorithm and suppose
that the edge $e$ remains in the set $X_{-1}$ after this step. There
are three events that influence the load of the edge $e=(u,v)$: (1) a
terminal at some vertex in $M(u)$ accesses $e$; (2) a
terminal at $M(v)$ accesses $e$; and, (3) a terminal at some other
meta-node $M\ne M(u), M(v)$ is assigned an integral cut. Let us
consider the third case first, and suppose that a terminal $i$ is
assigned. Since $A_i\subseteq M$ and therefore $e\notin \delta(A_i)$
its integral load does not increase. However, in the event that
$S_e\cap \C_i$ is non-empty, the fractional load on $e$ may decrease
(because cuts in $\C_i$ are removed from $\C$). Therefore, the
inequality continues to hold.

Next we consider the case where a terminal, say $i$, with $r_i\in
M(u)$ accesses $e$ (the second case is similar). Note that
$M(r_i)=M(u)$. In this case the integral load of the edge $e$
potentially increases by $1$ (if the terminal loads the edge). By the
definition of $X_{-1}$, the new integral load on this edge is no more
than $c_e-1$. The fractional load on $e$ changes in three ways:
\begin{itemize}
\item Cuts in $\C_i\cap S_e$ are removed from $\C$, decreasing $\lef$.
\item Some of the cuts in $(K_{r_i}^1\setminus \C_i)\setminus S_e$ get
``shifted'' on to $e$ increasing $\lef$ (we remove the meta-node
$M(r_i)$ from these cuts, and they may continue to contain $M(v)$).
\item Cuts in $(K_{r_i}^1\setminus \C_i)\cap S_e$ get shifted off from
$e$ decreasing $\lef$ (these cuts initially contain $M(r_i)$ but not
$M(v)$, and during this step we remove $M(r_i)$ from these cuts).
\end{itemize}
So the decrease in $\lef$ is at least the total weight of
$K_{r_i}^1\cap S_e=K_u^1\cap S_e$, whereas the increase is at most the
total weight of $K_{r_i}^1\setminus S_e=K_u^1\setminus S_e$.

In order to account for the two terms, let $\alpha$ denote the total
weight of cuts in $K_u\cap S_e$, and $\beta$ denote the total weight
of cuts in $K_v\cap S_e$. Then, $\alpha+\beta=\lef$. As in the proof
of Lemma~\ref{lem:round1-load}, we have $d_v = d_u-\alpha+\beta$, and
therefore $d_u\ge d_v$ implies $\alpha\ge \beta$. Now, suppose that
$\alpha\ge 1$. Then $K_u^1\subseteq S_e$. Therefore, the decrease in
$\lef$ due to the sets $K_u^1\cap S_e = K_u^1$ is at least $1$, and
there is no corresponding increase, so the sum $\lei+\lef$ remains at
most $c_e$.

Finally, suppose that $\alpha<1$. Then $K_u^1$ contains all the cuts
in $K_u\cap S_e$, the weight of $K_u^1\cap S_e$ is exactly $\alpha$,
and so the decrease in $\lef$ is at least $\alpha$. Moreover, the
total weight of $K_u^1\setminus S_e$ is $1-\alpha$, therefore, the
increase in $\lef$ due to the sets in $K_u^1\setminus S_e$ is at most
$1-\alpha$. Since $\lef$ starts out as being equal to $\alpha+\beta$,
its final value after this step is $1-\alpha+\beta \le 1$ as $\beta\le
\alpha$. Noting that $\lei$ is at most $c_e-1$ after the step, we get
the desired inequality.
\end{proof}

\begin{lemma}
\label{lem:X}
For any edge $e=(u,v)$, from the time that $e$ enters $X_0$ to the
time that it exits $X_1$, $\lef\le 1$. Furthermore suppose (without
loss of generality) that during this time in some iteration $e$ is
accessed by a terminal $i$ with $r_i\in M(u)$, then following this
iteration until the next time that $e$ is accessed, we have $S_e\cap
K_u=\emptyset$, and the next access to $e$ (if any) is from a terminal
in $M(v)$.
\end{lemma}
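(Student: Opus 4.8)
The plan is to prove both assertions by induction over the iterations of \emph{Round-2}, reusing the load-accounting developed for Lemma~\ref{lem:X-1}. Fix $e=(u,v)$. First observe that $e$ can only enter $X_0$ from $X_{-1}$, via an iteration in which a terminal $i$ residing in $M(u)$ or $M(v)$ --- say $M(u)$, by symmetry --- accesses and loads $e$, pushing $\lei$ from $c_e-1$ up to $c_e$; just before that iteration $e\in X_{-1}$, so Lemma~\ref{lem:X-1} already gives $\lef\le 1$. As in the proof of Lemma~\ref{lem:X-1}, put $\alpha$ for the total weight of $K_u\cap S_e$ and $\beta$ for that of $K_v\cap S_e$, so $\alpha+\beta=\lef\le1$, and the identity $d_v=d_u-\alpha+\beta$ together with the fact that the algorithm picked a maximum-depth terminal from $M(u)$ (so $d_u\ge d_v$) forces $\alpha\ge\beta$. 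If $\alpha=1$ then $\beta=0$, $K_u^1\subseteq S_e$, the entire fractional weight on $e$ is removed or shifted off, $\lef$ becomes $0$, and $e$ moves into $Z$ rather than $X_0$; hence in fact $\alpha<1$, so $K_u\cap S_e\subseteq K_u^1$, the decrease in $\lef$ is at least $\alpha$, the increase at most $1-\alpha$, and therefore $\lef\le 1+\beta-\alpha\le 1$ right after the iteration. Moreover every cut of $S_e$ that contained $u$ is either removed together with $\C_i$ or has $M(u)$ deleted from it, so $K_u\cap S_e=\emptyset$ afterwards. This settles the base case of both assertions.

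For the inductive step I would distinguish, for an iteration taking place during the window $[\,$enter $X_0$, exit $X_1\,]$, whether the picked terminal lies outside $M(u)\cup M(v)$ or inside one of them. In the first case $e$ is not accessed, no cut is shifted onto $e$ (deleting a meta-node other than $M(u),M(v)$ from a cut does not change whether it crosses $e$), and $\lef$ can only drop, by the removal of $\C_i$; also, since cuts only ever shrink, the property $K_u\cap S_e=\emptyset$, if it held, is preserved. In the second case --- say the terminal lies in $M(u)$, which is exactly an access of $e$ from $M(u)$ --- the same computation as above, now using the inductive hypothesis $\lef\le 1$ to get $\alpha\le 1$, yields $\lef\le 1$ after the iteration, with either $\alpha=1$ (then $\lef\to0$ and $e$ leaves the window into $Z$) or $\alpha<1$ (then $K_u\cap S_e=\emptyset$ afterwards and $\lef=\beta$). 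Throughout the window $M(u)\ne M(v)$, since $\lef>0$ keeps $e$ uncontracted, so every terminal of $M(u)$ or $M(v)$ that gets picked accesses $e$; hence between an access of $e$ from $M(u)$ and the next access of $e$ only iterations of the first type occur, and $K_u\cap S_e$ stays empty. This proves the first assertion and the $S_e\cap K_u=\emptyset$ part of the ``furthermore'' clause.

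What remains --- and this is the one genuinely new step --- is to show that the next access of $e$, if any, comes from $M(v)$. Suppose for contradiction it comes from a terminal $i'$ with $r_{i'}\in M(u)$; then $i'$ has the maximum depth $d_u$ among unassigned terminals. Since $K_u\cap S_e=\emptyset$, every cut containing $u$ also contains $v$, so $K_u\subseteq K_v$ and $S_e=K_v\setminus K_u$ is the innermost segment of the chain $K_v$. Let $C$ be the innermost cut of $S_e$ (it exists since $\lef>0$) and let $j$ be the terminal owning $C$; because $C$ is still present in $\C$, $j$ is unassigned, and because $C\in S_e$ we have $C\cap M(u)=\emptyset$, hence $r_j\notin M(u)$. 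Now any $D\in K_u$ satisfies $D\supseteq M(u)\cup M(v)$, so $D\in K_v$, so $D\supseteq C$ as $C$ is innermost in $K_v$, so $r_j\in D$; thus $K_u\subseteq K_{r_j}$ and in fact $C\in K_{r_j}\setminus K_u$, giving $d_{r_j}\ge d_u+w(C)>d_u$. This contradicts the maximality of $d_u$, so the next access of $e$ must come from $M(v)$.

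Most of this is bookkeeping lifted almost verbatim from Lemma~\ref{lem:X-1}; the step I expect to be the real obstacle is the last paragraph, namely showing that the owner of the innermost cut still crossing $e$ necessarily has depth strictly larger than $d_u$, which is exactly what rules out two consecutive accesses of $e$ from the same endpoint. The two points there that deserve careful statement are that this cut's owner is still unassigned (because the cut is still present in $\C$) and that laminarity forces $K_u\subseteq K_{r_j}$ via the chain structure of $K_v$.
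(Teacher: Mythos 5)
Your proposal is correct and follows essentially the same route as the paper's proof: the transition into $X_0$ (and $X_1$) is handled by the same $\alpha,\beta$ accounting borrowed from Lemma~\ref{lem:X-1}, and iterations that pick terminals outside $M(u)\cup M(v)$ are dismissed because they only shrink $S_e$ and $\lef$. The only difference is your final paragraph, where you justify ``the next access is from $M(v)$'' by exhibiting the (unassigned) owner of the innermost cut of $S_e$, whose depth exceeds $d_u$; this is a more carefully argued version of the paper's one-line assertion that after the access the $M(v)$ side has strictly larger depth, and it is a welcome refinement rather than a different approach.
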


\begin{proof}
First we note that if the lemma holds the first time an edge $e=(u,v)$
enters a set $X_a$, $a\in\{0,1\}$, then it continues to hold while the
edge remains in $X_a$. This is because during this time the integral
load on the edge does not increase, and therefore throughout this time
we assign integral cuts to terminals at meta-nodes different from
$M(u)$ and $M(v)$ --- this only reduces the fractional load on the
edge $e$ and shrinks the set $S_e$.

Consider the first time that an edge $e=(u,v)$ moves from the set
$X_{-1}$ to $X_0$. Suppose that at this step we assign an integral cut
to a terminal $i$ residing at node $r_i\in M(u)$. Prior to this step,
$\lei=c_e-1$, and so by Lemma~\ref{lem:X-1}, $\lef\le 1$. As before
define $\alpha$ to be the total weight of cuts $K_u\cap S_e$, and
$\beta$ to be the total weight of cuts $K_v\cap S_e$. Then following
the same argument as in the proof of Lemma~\ref{lem:X-1}, we conclude
that the final fractional weight on $e$ is at most $\beta+1-\alpha \le
1$. Furthermore, since $K_u\cap S_e\subseteq K_u^1$, we either remove
all these cuts from $\C$ or shift them off of edge $e$. Moreover, any
new cuts that we shift on to $e$ do not contain the meta-node
$M(r_i)=M(u)$, and in particular do not contain the vertex
$u$. Therefore at the end of this step, $S_e\cap K_u=\emptyset$. This
also implies that following this iteration terminals in $M(v)$ have
depth larger than terminals in $M(u)$, and so the next access to $e$
must be from a terminal in $M(v)$.

The same argument works when an edge moves from $X_0$ to $X_1$. We
again make use of the fact that prior to the step the fractional load
on the edge is at most $1$.
\end{proof}

\begin{lemma}
\label{lem:Y}
During any iteration of the algorithm, for any edge $e\in Y$, the
following are satisfied:
\begin{itemize}
\item $\lef\le 1$
\item If the edge $e=(u,v)$ is accessed by a terminal $i$ with $r_i\in
  M(u)$, then following this iteration until the next time that $e$ is
  accessed, we have $S_e\cap K_u=\emptyset$, and the next access to
  $e$ (if any) is from a terminal in $M(v)$.
\item If a terminal $i$ with $r_i\in M(u)$ accesses $e=(u,v)$, then
  $r_i\ne u$, $A_i\cap\{u,v\}=\emptyset$, and so $i$ does not load
  $e$. Also, consider any previous access to the edge by a terminal in
  $M(u)$; then prior to this access, $r_i\not\in M(u)$.
\end{itemize}
\end{lemma}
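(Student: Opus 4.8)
The plan is to prove the three items of the lemma simultaneously, by induction over the iterations of \emph{Round-2} (Figure~\ref{fig:Round-2}). The induction will also carry along the auxiliary fact --- already present in the proofs of Lemmas~\ref{lem:X-1} and~\ref{lem:X} --- that immediately after any \emph{$M(u)$-side access} to an edge $e=(u,v)$ (an access performed by a terminal residing in $M(u)$) one has $S_e\cap K_u=\emptyset$, i.e.\ every cut that still contains $u$ also contains $v$; this uses that at the moment of such an access $\lef\le 1$, so if $\alpha,\beta$ denote the weights of $K_u\cap S_e$ and $K_v\cap S_e$ then $\alpha\le 1$, hence $K_u\cap S_e\subseteq K_u^1$ and all of those cuts are cleared from $S_e$. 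Right after, $\lef=\beta=d_v-d_u$, so $d_v\ge d_u$, strictly whenever $\lef>0$. The base case is the iteration in which $e$ enters $Y$: this happens only when $e\in X_1$ is loaded, so Lemma~\ref{lem:X} applies; with $\alpha\ge\beta$ (the accessing terminal has maximum depth) the usual accounting gives $\lef\le\beta+1-\alpha\le 1$ and $S_e\cap K_u=\emptyset$ after the step, and since $e$ stays in $Y$ only if $\lef>0$ we also get $d_v>d_u$, so the next access to $e$ is from $M(v)$. The third item is vacuous here: at the time of that loading access $e\notin Y$, and $e$ has had no earlier access while in $Y$.

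For the inductive step, fix $e=(u,v)\in Y$ at the start of an iteration. If no terminal of $M(u)\cup M(v)$ is picked, then $e$ is not accessed, $\lef$ can only decrease, and all three items persist. Otherwise, by symmetry assume the picked terminal $i$ has $r_i\in M(u)$, so $i$ accesses $e$ and $e\in E_i$. By the inductive form of the second item, the most recent previous access to $e$ was from $M(v)$ --- an earlier $M(u)$-side access would, by its aftermath, force the next access from $M(v)$. Hence right after that access $S_e\cap K_v=\emptyset$, and since no $M(u)$-side access occurs in between, this persists: at the start of the current iteration $\beta=0$, $\alpha=\lef\le 1$, and $d_u>d_v$. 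The standard accounting now applies: removing $M(r_i)$ from the cuts of $K_{r_i}^1\setminus\C_i$ shifts weight $\alpha$ off $S_e$ and at most $1-\alpha$ onto it, and deleting $\C_i$ only decreases $\lef$; so afterwards $\lef\le 1-\alpha\le 1$ (first item), after the step every cut containing $u$ also contains $v$ so $S_e\cap K_u=\emptyset$, and if $e$ remains in $Y$ then $d_v>d_u$ afterwards, so the next access is from $M(v)$ (second item).

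For the third item I would first reduce ``$i$ does not load $e$'' to ``$r_i\ne u$'': given $r_i\ne u$, since $e\in E_i$ has endpoint $u\in M(r_i)$ we get $u\in U_i$, while $r_i\ne u$ and $v\notin M(r_i)$ give $r_i\notin U_i$, so Step~\ref{step:c} does not abort on account of $e$; and $A_i$ is the meta-node of $r_i$ from just before $u_i\in U_i$ (the endpoint that entered $M(r_i)$ first) merged into it, so $u\notin A_i$ (either $u=u_i$, or $u$ joined later), and $v\notin M(r_i)\supseteq A_i$; hence $A_i\cap\{u,v\}=\emptyset$ and $e\notin\delta(A_i)$, so $i$ does not load $e$. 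The last sentence of the item (that $i$ had not yet joined the relevant meta-node at the time of any earlier $M(u)$-side access to $e$) in turn implies $r_i\ne u$, since at any such access the vertex $u$ already lay in the accessing terminal's meta-node while $r_i$ did not.

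Thus the heart of the proof is this last claim: the terminal $i$ now accessing $e\in Y$ from $M(u)$ joined $M(u)$ only \emph{after} every earlier $M(u)$-side access to $e$. I would argue it by contradiction, examining the most recent such earlier access, say by $i''$ at iteration $t''$, whose meta-node $\tilde M$ contains $u$ and has the maximum depth at time $t''$; if $r_i\in\tilde M$ already at $t''$, then $i$ also had maximum depth at $t''$ and was still unassigned, so by Step~\ref{step:a} it must have been dominated in the cut-inclusion order then (or have lost a consistent tie-break). One then uses that the single intervening $M(v)$-side access is the only event capable of re-creating a cut in $S_e\cap K_u$ after $i''$ emptied it, so the innermost cut in $S_e\cap K_u$ at the current iteration is a (nearly) innermost cut of $r_i$; combining this with laminarity and the inclusion invariant shows that its owner still dominates $i$ in the cut-inclusion order when $i$ is picked, contradicting $i$ being undominated. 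Making this airtight --- the bookkeeping for ties in $>_{CI}$, the monotonicity of vertex depths under cut-removal and meta-node merges, and the degenerate subcase in which $i$ itself owns the innermost cut containing $u$ --- is the main obstacle; everything else mirrors Lemmas~\ref{lem:round1-load}, \ref{lem:X-1}, \ref{lem:X} and~\ref{lem:feasible}.
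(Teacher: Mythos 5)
Your handling of the first two bullets (the same $\alpha,\beta$ accounting as in Lemmas~\ref{lem:round1-load}, \ref{lem:X-1} and \ref{lem:X}, with the observation that defaulting does not change how the fractional load evolves) matches the paper, and so does your reduction of the third bullet to the claim that $r_i$ lay outside $M(u)$ at every earlier $M(u)$-side access to $e$ (whence $r_i\ne u$, $u\in U_i$ with $U_i\cap A_i=\emptyset$, and $v\notin A_i$). But that key claim is precisely where your proposal stops short, and the route you sketch for it has a genuine gap, which you yourself flag as ``the main obstacle''. You try to contradict the choice of $i$ at the \emph{current} iteration by exhibiting an innermost cut of $S_e\cap K_u$ whose owner ``dominates $i$''. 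Two things are missing there. First, laminarity and the inclusion invariant only give you that this owner's outermost cut and $O_i$ are nested (both contain $r_i$); they do not determine the direction of the nesting, so domination of $i$ is not established. Second, even granting it, Step~\ref{step:a} only guarantees that $i$ is undominated among the \emph{maximum-depth} unassigned terminals, and the owner of that cut need not have maximum depth at the current iteration, so no contradiction with the algorithm's choice follows.

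The paper's proof avoids both problems by relocating the contradiction to the earlier iteration $t_1$ of the most recent $M(u)$-side access, made by some terminal $j_1$, with $j_2$ denoting the intervening $M(v)$-side access at $t_2$. It proves the chain $i>_{CI} j_2>_{CI} j_1$ via the invariant that, between consecutive accesses, every terminal owning a cut in $S_e$ dominates the last accessor; this is anchored in the fact that the cuts shifted onto $e$ at an access by $j$ lie inside $O_j$, contain $r_j$, and do not belong to $j$, so the inclusion invariant applies. Then, if $r_i$ were in the meta-node of $u$ just prior to $t_1$, the terminals $i$ and $j_1$ would lie in the same meta-node and hence have the same (maximum) depth at $t_1$, so $i>_{CI} j_1$ directly contradicts the algorithm having picked $j_1$ as undominated at that time; since meta-nodes only grow, this also covers all earlier $M(u)$-side accesses. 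In short, the domination you need to establish is of $j_1$ (and $j_2$) \emph{by} $i$, not of $i$ by some current cut owner, and the contradiction must be placed at $t_1$, where equality of depths is automatic; as written, your plan for the crux does not go through.
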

\begin{proof}
The first two parts of this lemma extend Lemma~\ref{lem:X} to the case
of $e\in Y$, and are otherwise identical to that lemma. The proof for
these claims is analogous to the proof of Lemma~\ref{lem:X}. The only
difference is that terminals accessing an edge $e\in Y$ default on
this edge. However, this does not affect the argument: when a terminal
defaults on the edge, the edge's fractional load changes in the same
way as if the terminal did not default; the only change is in the way
an integral cut is assigned to the terminal. Since these claims depend
only on how the fractional load on the edge changes, they continue to
hold while the edge is in $Y$.

For the third part of the lemma, since $A_i\subseteq M(r_i)=M(u)$ and
$v\not\in M(u)$, $v\not\in A_i$. Next we show that $u\not\in
A_i$. Consider the iterations of the algorithm during which $\lef\le
1$. During this time the edge was accessed at least twice prior to
being accessed by $i$ (once when $e$ moved from $X_0$ to $X_1$, once
when $e$ moved from $X_1$ to $Y$, and possibly multiple times while
$e\in Y$). Let the last two accesses be by the terminals $j_1$ and
$j_2$, at iterations $t_1$ and $t_2$, $t_1\le t_2$. For $a\in
\{0,1\}$, let $M^a(u)$ and $M^a(v)$ denote the meta-nodes of $u$ and
$v$ respectively just prior to iteration $t_a$, and $M(u)$ and $M(v)$
denote the respective meta-nodes just prior to the current
iteration. Then by Lemma~\ref{lem:X} and the second part of this
lemma, we have $r_{j_1}\in M^1(u)$ and $r_{j_2}\in M^2(v)$. We claim
that $i>_{CI} j_2 >_{CI} j_1$. Given this claim, if $r_i\in
M^1(u)=M^1(r_{j_1})$, then since $i$ and $j_1$ have the same depth at
iteration $t_1$, we get a contradiction to the fact that the algorithm
picks $j_1$ before $i$ in Step~\ref{step:a}. Therefore, $r_i\not\in
M(u)$ at any iteration prior to $t_1$, and in particular, $r_i\ne
u$. Finally, since $u\in U_i$ and $U_i\cap A_i=\emptyset$, this also
implies that $u\not\in A_i$.

It remains to prove the claim. We will prove that $j_2>_{CI} j_1$. The
proof for $i>_{CI} j_2$ is analogous. In fact we will prove a stronger
statement: between iterations $t_1$ and $t_2$, all terminals with cuts
in $S_e$ dominate $j_1$ in the cut-inclusion ordering. We prove this
by induction. By Lemma~\ref{lem:X}, prior to iteration $t_1$, $S_e$
does not contain any cuts belonging to terminals at $M(v)$. Following
the iteration, $S_e$ only contains fractional cuts in $K_u^1$ that got
shifted on to the edge $e$. Prior to shifting, these cuts contain
$M^1(u)$, and therefore $r_{j_1}$, but do not belong to $j_1$. Then,
these cuts are subsets of $O_{j_1}$, and so by the inclusion
invariant, they belong to terminals dominating $j_1$ in the
cut-inclusion ordering. Therefore, the claim holds right after the
iteration $t_1$. Finally, following the iteration until the next time
that $e$ is accessed (by $j_2$), the set $S_e$ only shrinks, and so
the claim continues to hold.
\end{proof}

In order to analyze the loading of edges in $Z$, we need some more
notation. Let $\M$ denote the collection of sets of vertices that were
meta-nodes at some point during the algorithm. For any edge $e\in Z$,
let $M_e$ denote the meta-node formed when $e$ enters $Z$; then $M_e$
is the smallest set in $\M$ containing both the end points of
$e$. Note that the collection $\A\cup \M$ is laminar.

\begin{lemma}
\label{lem:meta-node-1}
An edge $e\in Z$ is loaded only if after the formation of $M_e$ a
terminal residing at a vertex in $M_e$ defaults on an edge in
$\delta(M_e)$. (Note that this may happen after $M_e$ has merged with
some other meta-nodes.)
\end{lemma}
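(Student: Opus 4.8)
The plan is to track exactly \emph{when} an edge $e\in Z$ can acquire additional integral load after it has been contracted, and to show this only happens via a default on $\delta(M_e)$. First I would recall the key structural fact: once $e=(u,v)$ enters $Z$, its fractional load is $0$ and both endpoints lie in the common meta-node $M_e$; from that point on, $M_e$ only grows (it may merge with other meta-nodes), so $e$ lies strictly inside every meta-node that contains $M_e$. Consider any later iteration in which a terminal $i$ is picked and $e\in\delta(A_i)$, i.e.\ $i$ loads $e$. Since $A_i\subseteq M(r_i)$ and $e$ has both endpoints inside $M_e\subseteq M(r_i)$ whenever $r_i\in M_e$, in the ``normal'' case $A_i=M(r_i)$ we would have $e\notin\delta(A_i)$ — a contradiction. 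So a load on $e$ can only occur when $A_i$ is set in Step~\ref{step:c}, i.e.\ when $i$ defaults: $A_i$ is then a \emph{proper} sub-meta-node of $M(r_i)$, taken just before the iteration where $M(u_i)$ merged into $M(r_i)$.

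Next I would argue that for this truncated cut $A_i$ to separate the endpoints of $e$, the edge $e$ must straddle the ``cut line'' of the truncation, and that this forces a default on an edge of $\delta(M_e)$ by a vertex inside $M_e$. Concretely: $e\in\delta(A_i)$ with $A_i\subsetneq M(r_i)$ means one endpoint of $e$ is in $A_i$ and the other is in $M(r_i)\setminus A_i$. But both endpoints of $e$ lie in $M_e$; hence $M_e$ is not entirely contained in $A_i$, and since $A_i$ is the meta-node of $r_i$ \emph{just prior} to the merge that brought in $u_i$, the vertex $u_i$ (the first endpoint of a defaulted edge in $E_i$ to enter $M(r_i)$) must have entered $M(r_i)$ no earlier than $M_e$ did — in fact $M_e\not\subseteq A_i$ forces $M_e$ to have been formed at or after the merge defining $A_i$, so $M_e\subseteq M(r_i)$ but $M_e\not\subseteq A_i$. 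Then the defaulted edge in $E_i=Y\cap\delta(M(r_i))$ on which $i$ defaults has an endpoint $u_i\in M(r_i)$; I would show $u_i\in M_e$ (using that $e$'s endpoints are both in $M_e$ and $e$ is cut by $A_i$, so $M_e$ spans the $A_i$-boundary, placing it on the ``outer'' side together with $u_i$), so this is a terminal residing in $M_e$ defaulting on an edge of $\delta(M(r_i))\supseteq$ an edge of $\delta(M_e)$ at that time. That is exactly the conclusion, modulo identifying $\delta(M(r_i))\cap(\text{edges incident to }M_e)$ with $\delta(M_e)$ at the appropriate moment.

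The main obstacle I anticipate is the bookkeeping in the last step: relating the edge of $E_i$ (which lies in $\delta(M(r_i))$, a possibly much larger meta-node than $M_e$) to an edge of $\delta(M_e)$, and making precise the claim that $u_i\in M_e$. The clean way to handle this is to consider the \emph{first} moment after $M_e$ is formed at which some terminal $i$ residing in (a meta-node containing) $M_e$ loads $e$; minimality ensures that up to that point $e$ received no extra load, so the only way $e\in\delta(A_i)$ is the Step~\ref{step:c} truncation, and then $A_i\subsetneq M(r_i)$ with $M_e\not\subseteq A_i$ pins down that the merge defining $A_i$ is precisely the one separating the two endpoints of $e$ inside $M_e$ — impossible, since $M_e$ is \emph{already} a single meta-node, \emph{unless} the truncated $A_i$ cuts through $M_e$, which can only be the $u_i$-merge. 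Chasing this contradiction/identification carefully gives the statement; everything else is the monotonicity of meta-nodes and the definition of defaulting, both already in hand from the algorithm description and Lemma~\ref{lem:feasible}.
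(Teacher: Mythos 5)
Your proposal is correct and follows essentially the same route as the paper's proof: a $Z$-edge can only be cut by a defaulted (truncated) cut $A_i$, the nestedness of meta-nodes over time then forces $A_i\subsetneq M_e$ (so $r_i\in M_e$) and the timing of the $u_i$-merge relative to the formation of $M_e$ gives $u_i\in M_e$, whence the defaulted edge of $E_i$ lies in $\delta(M_e)$. The only blemishes are cosmetic: the clause ``$u_i$ must have entered $M(r_i)$ no earlier than $M_e$ did'' has the direction reversed (your subsequent ``in fact'' sentence states the correct fact, that $M_e$ is formed at or after the $u_i$-merge), and the ``first load after the formation of $M_e$'' minimality device in your last paragraph is unnecessary.
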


\begin{proof}
Let $i$ be a defaulting terminal that loads the edge $e\in Z$. Then
$e\in \delta(A_i)$, and therefore, $A_i\subsetneq M_e$ and $r_i\in
M_e$. Furthermore, since $A_i$ is a strict subset of $M_e$, $U_i\cap
M_e\ne\emptyset$, and therefore, $i$ defaults on an edge $e'\in Y$
with at least one end-point in $M_e$. But if both the end-points of
$e'$ are in $M_e$, then we must have $\ell_{e'}^{\C}=0$ contradicting
the fact that $e'$ is in $Y$. Therefore, $e'\in\delta(M_e)$.
\end{proof}

\begin{lemma}
\label{lem:meta-node-2}
For any meta-node $M\in\M$, after its formation, at most one terminal
residing at a vertex in $M$ can default on edges in $\delta(M)$ (even
after $M$ has merged with other meta-nodes).
\end{lemma}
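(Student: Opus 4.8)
The plan is to argue by contradiction: suppose that after the formation of a meta-node $M\in\M$ two distinct terminals $i_1$ and $i_2$, both residing at vertices inside $M$, default on edges of $\delta(M)$, with $i_1$ being picked before $i_2$ by the algorithm. We will derive a contradiction with the cut-inclusion ordering and the rule in Step~\ref{step:a} that, among terminals of maximum depth, picks one that is undominated. First I would record the basic consequences of $i_1$ defaulting on an edge $e_1\in\delta(M)\cap Y$: by the third part of Lemma~\ref{lem:Y}, $r_{i_1}$ is \emph{not} an endpoint of $e_1$, and $A_{i_1}\subsetneq M(r_{i_1})$ is the meta-node of $r_{i_1}$ at an earlier iteration; in particular $M\subseteq M(r_{i_1})$ at the time $i_1$ is picked, so $M$ has already formed, which is consistent with the hypothesis. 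The key point to extract is that when $i_1$ is picked, all of $M$ lies inside $M(r_{i_1})$ and $A_{i_1}$ is a strict subset of $M(r_{i_1})$ that is \emph{disjoint} from $U_{i_1}$, the endpoints in $M(r_{i_1})$ of the defaulted edges; I would note that $M_{e}\subseteq A_{i_1}$ or $M_e$ disjoint-from considerations pin down where $A_{i_1}$ sits relative to $M$.

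The heart of the argument is to compare the outermost cuts $O_{i_1}$ and $O_{i_2}$ and show $i_2 >_{CI} i_1$, and then separately that they have equal depth at the iteration where the earlier one is picked, so that the algorithm's tie-breaking rule in Step~\ref{step:a} is violated. For the ordering: since both $r_{i_1}$ and $r_{i_2}$ lie in $M$ at the time $M$ forms, and $M$ formed because every edge in $\delta$-cuts internal to it reached fractional load $0$, every cut of $\C$ that still contains $r_{i_1}$ at a later iteration also contains all of $M$ (cuts in $\C\setminus S_e$ contain both or neither endpoint of any internal-to-$M$ edge, so a cut containing one vertex of $M$ contains all of $M$). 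In particular $O_{i_1}\supseteq M$ and $O_{i_2}\supseteq M$ whenever those cuts are still around after $M$'s formation. Using Lemma~\ref{lem:meta-node-1}'s style of reasoning together with the inclusion invariant, I would show that the cut from which $i_2$'s outermost surviving cut is "shifted" (when $i_1$'s cuts in $K^1_{r_{i_1}}$ are trimmed by removing $M(r_{i_1})$) forces $O_{i_2}\subsetneq O_{i_1}$, i.e. $i_2 >_{CI} i_1$; the mechanism is exactly the one used in the proof of the third part of Lemma~\ref{lem:Y}, where shifted cuts are subsets of the outermost cut of the terminal whose meta-node was removed and hence belong to dominating terminals.

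Once $i_2 >_{CI} i_1$ is established, I would finish by showing the depths of $i_1$ and $i_2$ agree at the relevant iteration. Here the point is that after $M$ forms, $r_{i_1}$ and $r_{i_2}$ lie in a common meta-node, hence in a common set of cuts of $\C$, so $d_{r_{i_1}} = d_{r_{i_2}}$ at every subsequent iteration (the depth is the total weight of cuts containing the vertex, and the two vertices are contained in exactly the same cuts). So at the iteration where $i_1$ is picked, $i_2$ is still unassigned, has the same (maximum) depth, and dominates $i_1$ in $>_{CI}$ --- contradicting the choice in Step~\ref{step:a} of an \emph{undominated} maximum-depth terminal. The main obstacle I anticipate is the bookkeeping in the middle paragraph: carefully tracking that $M$ has indeed formed before $i_1$ is picked (this follows because $i_1$ defaults, so $A_{i_1}\subsetneq M(r_{i_1})$ and $M\subseteq A_{i_1}$ is exactly the meta-node guaranteeing the default makes sense), and that the cut-shifting at $i_1$'s iteration is what produces $i_2$'s surviving outermost cut as a proper subset --- pinning this down may require invoking the inclusion invariant together with the observation that $A_{i_1}$, $M$, and the $O$'s are all laminar and all contain $M$.
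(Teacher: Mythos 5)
There is a genuine gap. Your whole contradiction rests on the claim that, at the iteration where the first defaulter $i_1$ is picked, the second defaulter $i_2$ strictly dominates it in cut-inclusion ($O_{i_2}\subsetneq O_{i_1}$), so that Step~\ref{step:a}'s choice of an undominated maximum-depth terminal is violated. You never prove this, and the mechanism you point to (the shifted-cut argument inside the proof of the third part of Lemma~\ref{lem:Y}) does not deliver it as invoked. That mechanism is about the owners of cuts in $S_e$ for a \emph{single edge $e$ accessed by both terminals}, and you never establish that $i_1$ and $i_2$ access a common edge; nothing guarantees that $i_2$ even owns a fractional cut crossing the edge it later defaults on, since defaulting only requires $e\in Y\cap\delta(M(r_{i_2}))$, not $\C_{i_2}\cap S_e\ne\emptyset$. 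There is also a timing problem: the cuts that get ``shifted'' when $M(r_{i_1})$ is removed from cuts in $K^1_{r_{i_1}}$ are modified only \emph{after} $i_1$ has been selected, whereas your contradiction needs the domination $i_2>_{CI} i_1$ to hold at selection time. Finally, since $r_{i_1},r_{i_2}\in M$, the two outermost cuts both contain all of $M$ and are nested, but nothing in your sketch rules out the opposite nesting $O_{i_1}\subseteq O_{i_2}$, in which case $i_1$ is not dominated by $i_2$ and Step~\ref{step:a} is perfectly consistent with picking $i_1$. (Your side remark that $M\subseteq A_{i_1}$ is also unjustified: $A_{i_1}$ is the meta-node of $r_{i_1}$ just before $u_{i_1}$ joined it, and $M$ may have merged in only later.)

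The missing idea is a short containment observation that makes all of this unnecessary. Take any $e\in E_{i_2}\cap\delta(M)$ (nonempty by assumption), and let $M_1\subseteq M_2$ be the meta-nodes containing $M$ just before $i_1$ and $i_2$ default, respectively. Since $e\in\delta(M)\cap\delta(M_2)$ and $M\subseteq M_1\subseteq M_2$, the edge $e$ has one endpoint in $M\subseteq M_1$ and the other outside $M_2\supseteq M_1$, hence $e\in\delta(M_1)$; so $i_1$ already accessed this very edge when it defaulted. Now the last clause of the third part of Lemma~\ref{lem:Y}, applied to $i_2$'s later access to $e$ and to the previous access by $i_1$ from the same side, gives $r_{i_2}\not\in M_1$ prior to that access, contradicting $r_{i_2}\in M\subseteq M_1$. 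In other words, the lemma follows by reducing to the already-proved per-edge statement rather than by re-running the cut-inclusion/depth argument at the level of the two defaulters; as written, your route would essentially have to redo the proof of Lemma~\ref{lem:Y}(3) and still needs the common-edge fact above to get started.
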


\begin{proof}
For the sake of contradiction, suppose that two terminals $i$ and $j$,
both residing at vertices in $M$ default on edges in $\delta(M)$ after
the formation of $M$, with $i$ defaulting before $j$. Let $M_1$
($M_2$) denote the meta-node containing $M$ just before $i$ ($j$)
defaulted. Note that $M\subseteq M_1\subseteq M_2$. Consider an edge
$e\in E_j\cap\delta(M)$ (recall that $E_j$ is the set of edges that
$j$ defaults on, so this set is non-empty by our assumption). Then
$e\in \delta(M)\cap \delta(M_2)\subseteq \delta(M_1)$. Therefore, at
the time that $i$ defaulted, $e$ was accessed by $i$, and by the third
claim in Lemma~\ref{lem:Y}, $r_j\not\in M_1$. This contradicts the
fact that $r_j\in M$.
\end{proof}

Finally we can put all these lemmas together to prove our main result
on algorithm {\em Round-2}.

\setcounter{theorem}{4}
\addtocounter{theorem}{-1}
\begin{lemma}
Given a fractional laminar cut family $\C$ feasible for the MCP on a
graph $G$ with {\em integral} edge capacities $c_e$, the algorithm
{\em Round-2} produces an integral family of cuts $\A$ that is
feasible for the MCP on $G$ with edge capacities $c_e+3$.
\end{lemma}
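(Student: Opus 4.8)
The plan is to assemble the lemmas already established for algorithm \emph{Round-2} into a single load bound. First I would verify the algorithm never aborts, i.e. in Step~\ref{step:c} we always have $r_i\notin U_i$ whenever $E_i\ne\emptyset$. This is exactly the content of the third claim of Lemma~\ref{lem:Y}: if $i$ accesses an edge $e=(u,v)\in Y$ with $r_i\in M(u)$, then $r_i\ne u$, so $r_i$ is not an endpoint of any defaulted edge, hence $r_i\notin U_i$. Thus the algorithm is well-defined and, since $T'$ shrinks by one each iteration, it terminates in polynomial time. Feasibility of the output family $\A$ as a fractional (indeed integral) laminar family for the MCP then follows from Lemma~\ref{lem:feasible}: each $A_i$ satisfies $r_i\in A_i\subseteq O_i$, and since the inclusion invariant guarantees that for any commodity $a$ and terminals $i,j\in S_a$ the outermost cuts are nested, the separation condition (2) of Definition~\ref{def:feas} is inherited from $\C$ — if, say, $r_j\notin\bigcup_{C\in\C_i}C$ then in particular $r_j\notin O_i\supseteq A_i$.

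The core of the argument is the edge-load bound $\lei\le c_e+3$ for every $e\in E$. I would argue by tracking an edge through the five sets $X_{-1},X_0,X_1,Y,Z$ in the order it necessarily traverses them. An edge starts with $\lei=0$ in $X_{-1}$; while in $X_{-1}$ Lemma~\ref{lem:X-1} gives $\lei+\lef\le c_e$, so in particular $\lei\le c_e$ there. When it moves to $X_0$ it has $\lei=c_e$, and Lemma~\ref{lem:X} gives $\lef\le 1$ throughout $X_0\cup X_1$; moreover that lemma shows consecutive accesses alternate between the two endpoints' meta-nodes, so the integral load increases by at most one to leave $X_0$ and at most one more to leave $X_1$, landing the edge in $Y$ with $\lei=c_e+2$. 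While in $Y$, Lemma~\ref{lem:Y} shows that any terminal accessing $e$ has $A_i\cap\{u,v\}=\emptyset$ and therefore does \emph{not} load $e$ — so the integral load of an edge does not grow while it sits in $Y$, and it enters $Z$ still at $\lei\le c_e+2$.

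It remains to bound how much an edge can be loaded after it enters $Z$. Here I would use Lemmas~\ref{lem:meta-node-1} and~\ref{lem:meta-node-2}: by Lemma~\ref{lem:meta-node-1}, $e\in Z$ is loaded only when, after $M_e$ forms, some terminal residing in (a meta-node containing) $M_e$ defaults on an edge of $\delta(M_e)$; by Lemma~\ref{lem:meta-node-2}, at most one terminal can ever default on edges of $\delta(M)$ for any fixed meta-node $M$, in particular for $M=M_e$. Hence $e$ is loaded at most once after entering $Z$, contributing at most $1$ to $\lei$. Combining, every edge finishes with $\lei\le (c_e+2)+1=c_e+3$, which is the claimed bound and completes the proof that $\A$ is feasible for the MCP with capacities $c_e+3$.

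The main obstacle I anticipate is the bookkeeping in the $Z$ phase: one must be careful that ``loaded at most once'' really follows, since $M_e$ may merge with other meta-nodes before any default occurs, and an edge in $\delta(M_e)$ could in principle also lie in $\delta(M')$ for a later meta-node $M'\supsetneq M_e$. Lemma~\ref{lem:meta-node-2} is phrased precisely to survive such merges (``even after $M$ has merged with other meta-nodes''), so the key is simply to apply it to the specific meta-node $M_e$ and note that any default causing $e$ to be loaded is, by Lemma~\ref{lem:meta-node-1}, a default on an edge of $\delta(M_e)$ by a terminal inside $M_e$ — exactly the event Lemma~\ref{lem:meta-node-2} bounds by one. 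Everything else is routine given the quantitative statements of Lemmas~\ref{lem:X-1}--\ref{lem:Y}.
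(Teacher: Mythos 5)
Your proposal is correct and follows essentially the same route as the paper's own proof: no abort via the third part of Lemma~\ref{lem:Y}, feasibility via Lemma~\ref{lem:feasible} (your aside about the inclusion invariant is unnecessary, since $A_i\subseteq O_i$ alone lets the separation condition of Definition~\ref{def:feas} pass to $\A$), the load bound $c_e+2$ upon entering $Z$ from Lemmas~\ref{lem:X-1}, \ref{lem:X} and \ref{lem:Y}, and at most one further unit in $Z$ from Lemmas~\ref{lem:meta-node-1} and \ref{lem:meta-node-2}. Nothing further is needed.
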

\setcounter{theorem}{13}

\begin{proof}
We first note that the third part of Lemma~\ref{lem:Y} implies that
for all $i$, $r_i\not\in U_i$, and therefore the algorithm never
aborts. Then Lemma~\ref{lem:feasible} implies that we get a feasible
cut packing. Finally, note that every edge starts out in the set
$X_{-1}$, goes through one or more of the $X_a$'s, $a\in \{0,1\}$,
potentially goes through $Y$, and ends up in $Z$. An edge $e$ enters
$Y$ when its integral load becomes $c_e+2$. Lemma~\ref{lem:Y} implies
that edges in $Y$ never get loaded, and so at the time that an edge
$e$ enters $Z$, $\lei\le c_e+2$. After this point the edge stays in
$Z$, and Lemmas~\ref{lem:meta-node-1} and \ref{lem:meta-node-2} imply
that it gets loaded at most once. Therefore, the final load on the
edge is at most $c_e+3$.
\end{proof}

\section{Constructing fractional laminar cut packings}
\label{sec:laminar}

We now show that fractional solutions to the program~\ref{eqn:LP} can
be converted in polynomial time into fractional laminar cut families
while losing only a small factor in edge load. We begin with the
common sink case.

\begin{figure*}[t]
  \begin{small} 
  \rule[0.025in]{\textwidth}{0.01in}
  {\bf Input:} Graph $G=(V,E)$ with edge capacities $c_e$, commodities
  $S_1,\cdots,S_k$, common sink $t$, a feasible solution $d$ to the
  program~\ref{eqn:LP}.\\
  {\bf Output:} A fractional laminar family of cuts $\C$ that is
  feasible for $G$ with edge capacities $c_e+o(1)$.\\
  \rule[0.025in]{\textwidth}{0.01in}

  \begin{enumerate}
    \item \label{step:lam1-a} For every $a\in [k]$ and terminal $i\in
      S_a$ do the following: Order the vertices in $G$ in increasing
      order of their distance under $d_a$ from $r_i$. Let this
      ordering be $v_0=r_i, v_1, \cdots, v_n$. Let $\C_i$ be the
      collection of cuts $\{v_0, v_1, \cdots, v_b\}$, one for each
      $b\in [n]$, $d_a(r_i,v_b)<1$, with weights $w(\{v_0, \cdots,
      v_b\}) = d_a(r_i,v_{b+1}) - d_a(r_i,v_b)$. Let $\C$ denote the
      collection $\{\C_i\}_{i\in \cup_a S_a}$.
    \item \label{step:lam1-a-2} Let $N=nk$. Round up the weights of
      all the cuts in $\C$ to multiples of $1/N^2$, and truncate the
      collection so that the total weight of every sub-collection
      $\C_i$ is exactly $1$. Also split every cut with weight more
      than $1/N^2$ into multiple cuts of weight exactly $1/N^2$ each,
      assigned to the same commodity.
    \item \label{step:lam1-b} While there are pairs of cuts in $\C$
      that cross, consider any pair of cuts $C_i,C_j\in \C$ belonging
      to terminals $i\ne j$ that cross each other. Transform these
      cuts into new cuts for $i$ and $j$ according to
      Figure~\ref{fig:cscp-lam}.
  \end{enumerate}

  \rule[0.025in]{\textwidth}{0.01in}
  \begin{center}
  \caption{Algorithm {\em Lam-1}---Algorithm to convert an LP solution
    for the CSCP into a feasible fractional laminar family}
  \end{center}
  \label{fig:Lam1}
  \end{small}
\end{figure*}

\begin{figure}[]
\begin{center}
\epsfig{file = 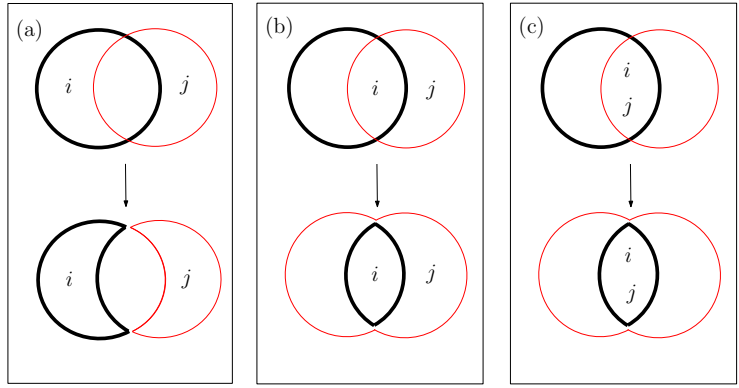, height = 2.0in}
\caption{Rules for transforming an arbitrary cut family into a laminar
  one for the CSCP. The dark cuts in this figure correspond to the
  terminal $i$, and the light cuts to terminal $j$; $t$ lies outside
  all the cuts.}
\label{fig:cscp-lam}
\end{center}
\end{figure}

\subsection{Obtaining laminarity in the common sink case}
\label{sec:lam1}

We prove Lemma~\ref{lem:lam1} in this section. Our algorithm involves
starting with a solution to \ref{eqn:LP}, converting it into a
feasible fractional {\em non-laminar} family of cuts, and then
resolving pairs of crossing cuts one at a time by applying the rules
in Figure~\ref{fig:cscp-lam}. The algorithm is given in
Figure~\ref{fig:Lam1}.

\setcounter{theorem}{1}
\addtocounter{theorem}{-1}
\begin{lemma}
Consider an instance of the CSCP with graph $G=(V,E)$, common sink
$t$, edge capacities $c_e$, and commodities $S_1,\cdots,S_k$. Given a
feasible solution $d$ to \ref{eqn:LP}, algorithm {\em Lam-1} produces
in polynomial time a fractional laminar cut family $\C$ that is
feasible for the CSCP on $G$ with edge capacities $c_e+o(1)$.
\end{lemma}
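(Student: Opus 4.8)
The plan is to verify that each of the three steps of algorithm \emph{Lam-1} preserves feasibility for the CSCP (in the sense of Definition~\ref{def:feas}) while only increasing edge capacities by an additive $o(1)$, and that the whole procedure runs in polynomial time. First I would analyze Step~\ref{step:lam1-a}: the cuts $\{v_0,\dots,v_b\}$ are the sublevel sets of the metric ball around $r_i$ under $d_a$, so they are automatically nested (hence laminar within $\C_i$), each contains $r_i$, and none contains $t$ since $d_a(r_i,r_t)\ge 1$ forces $t$ to come after all vertices at distance $<1$. The weights telescope, so $\sum_{C\in\C_i} w(C) = d_a(r_i, v_{b^*+1})$ where $b^*$ is the last index with $d_a(r_i,v_{b^*})<1$; a standard argument (truncating the last cut's weight so the sum is exactly $1$, as Step~\ref{step:lam1-a-2} does) fixes this. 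The key load computation here is the classical ``cut metric'' identity: for any edge $e=(u,v)$, the total weight of cuts in $\C_i$ that separate $u$ from $v$ equals $|d_a(r_i,u)-d_a(r_i,v)|\le d_a(e)$, and summing over $i$ (using one cut collection per terminal) and then over commodities and invoking $\sum_a d_a(e)\le c_e$ gives $\lef\le c_e$. I should be slightly careful that we get one collection per \emph{terminal} rather than per commodity, but since within a commodity we may take the collection of any single terminal (feasibility condition (2) for CSCP only requires cuts to avoid $t$, which they do), the bound goes through.

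Next I would handle Step~\ref{step:lam1-a-2}: rounding every weight up to a multiple of $1/N^2$ with $N=nk$ adds at most $n$ cuts per terminal, each of weight $1/N^2$, so the extra load on any edge is at most (number of terminals) $\times\, n/N^2 \le k\cdot n/N^2 = 1/N = o(1)$; truncating to make the total weight exactly $1$ only removes weight, and splitting cuts into equal pieces of weight $1/N^2$ changes neither the load nor laminarity. This is where the ``$+o(1)$'' in the statement comes from, and it also ensures all weights are polynomially bounded rationals so the later steps are well-defined and finite. Then for Step~\ref{step:lam1-b} I would appeal to the transformation rules in Figure~\ref{fig:cscp-lam}: I need to check (i) each application strictly reduces some potential (e.g. the number of crossing pairs, or a lexicographic measure based on the sizes of crossings), so the loop terminates in polynomially many steps given that all weights are multiples of $1/N^2$ and there are polynomially many cuts; (ii) each application preserves, for every edge, the load (the uncrossing replaces $C_i\cap C_j$, $C_i\cup C_j$, etc.\ in a way that does not increase $|\delta(\cdot)|$ summed over the two new cuts versus the two old ones — the standard submodularity-of-cuts fact $|\delta(A\cap B)|+|\delta(A\cup B)|\le |\delta(A)|+|\delta(B)|$); (iii) each application keeps both new cuts containing their respective terminals' vertices and avoiding $t$ (since $t$ lies outside all cuts, any Boolean combination used in the rules also excludes $t$); and (iv) it preserves $\sum_{C\in\C_i} w(C)=1$ for each terminal.

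\textbf{Main obstacle.} The delicate part is Step~\ref{step:lam1-b} — arguing that the uncrossing rules in Figure~\ref{fig:cscp-lam} simultaneously (a) reduce the crossing potential so the process terminates in polynomial time, and (b) never increase any edge's load. Point (b) is subtle because a naive uncrossing of $C_i$ (dark) and $C_j$ (light) into, say, $C_i\cap C_j$ and $C_i\cup C_j$ reassigns which terminal owns which piece, and one must check that $r_i$ still lies in its new cut and $r_j$ in its (so the rules must be case-split on the relative positions of $r_i$, $r_j$, and $t$ — this is exactly what the figure encodes). For termination, since all weights are integer multiples of $1/N^2$ and there are at most $O(N^2)$ cuts per terminal and $O(k)$ terminals, the number of distinct (cut, cut) crossing pairs is polynomially bounded, and I would exhibit a monovariant — e.g.\ the total number of unordered crossing pairs, or $\sum_{C,C'} |C\cap C'|\cdot|C\setminus C'|\cdot|C'\setminus C|$ — that strictly decreases with each uncrossing, which is the standard technique. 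Once termination and load-preservation are established, laminarity holds by construction (no crossing pairs remain) and all of Definition~\ref{def:feas}'s conditions for the CSCP have been maintained step by step, completing the proof.
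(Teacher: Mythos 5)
Your proposal is correct and follows essentially the same route as the paper's own proof: bound the load after the metric-threshold construction by $\sum_a d_a(e)\le c_e$, charge the discretization to an extra $N\cdot 1/N^2 = 1/N = o(1)$ per edge, and then argue that each uncrossing rule preserves $r_i$-containment and $t$-avoidance, never increases any edge's load (per-edge sub/posimodularity of cuts), and strictly decreases the number of crossing pairs, giving polynomial termination since all cuts have weight exactly $1/N^2$ after splitting. The points you flag as "to check" are precisely the ones the paper also asserts with a brief appeal to the figure, so no genuine gap remains.
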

\setcounter{theorem}{13}

\begin{proof}
We first note that the family $\C$ is feasible for the given instance
of CSCP at the end of Step~\ref{step:lam1-a-2}, but is not necessarily
laminar. Since the number of distinct cuts in $\C$ after
Step~\ref{step:lam1-a} is at most $nk=N$, at the end of
Step~\ref{step:lam1-a-2}, edge loads are at most $c_e+1/N$. As we
tranform the cuts in Step~\ref{step:lam1-b}, we maintain the property
that no cut $C\in\C$ contains the sink $t$, but every cut $C\in\C_i$
contains the node $r_i$ for terminal $i$. It is also easy to see from
Figure~\ref{fig:cscp-lam} that the load on every edge stays the
same. Finally, in every iteration of this step, the number of pairs of
crossing cuts strictly decreases. Therefore, the algorithm ends after
a polynomial number of iterations.
\end{proof}

\subsection{Obtaining laminarity in the general case}
\label{sec:int-lam2}

Obtaining laminarity in the general case involves a more careful
selection and ordering of rules of the form given in
Figure~\ref{fig:cscp-lam}. The key complication in this case is that
we must maintain separation of every terminal from every other
terminal in its commodity set. We first show how to convert an
integral collection of cuts feasible for the MCP into a feasible
integral laminar collection of cuts. We lose a factor of $2$ in edge
loads in this process (see Lemma~\ref{lem:int-lam2} below). Obtaining
laminarity for an arbitrary fractional solution requires converting it
first into an integral solution for a related cut-packing problem and
then applying Lemma~\ref{lem:int-lam2} (see algorithm {\em Lam-2} in
Figure~\ref{fig:Lam2} and the proof of Lemma~\ref{lem:lam2} following
it).

\begin{lemma}
\label{lem:int-lam2}
Consider an instance of the MCP with graph $G=(V,E)$and commodities
$S_1,\cdots,S_k$, and let $\C^1=\{C^1_i\}_{i\in S_a,a\in [k]}$ be a
family of cuts such that for each $a\in [k]$ and $i\in S_a$, $C^1_i$
contains $i$ but no other $j\in S_a$. Then algorithm {\em
Integer-Lam-2} produces a {\em laminar} cut collection
$\C^2=\{C^2_i\}_{i\in S_a,a\in [k]}$ such that for each $a\in [k]$ and
$i\ne j\in S_a$, either $C^2_i$ or $C^2_j$ separates $i$ from $j$, and
$\ld{\C^1}\le 2\ld{\C^2}$ for every edge $e\in E$.
\end{lemma}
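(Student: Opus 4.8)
The plan is to resolve crossing pairs of cuts one at a time, exactly as in the common-sink case (Figure~\ref{fig:cscp-lam}), but with a more careful choice of which replacement rule to apply so that the ``separation'' property is preserved for every commodity. Given two cuts $C_i^1$ and $C_j^1$ that cross, with $i \in S_a$ and $j \in S_b$, I would distinguish whether $i$ and $j$ belong to the same commodity ($a = b$) or not. If $a \ne b$, the two cuts play symmetric roles and we can replace the crossing pair by an uncrossed pair in the standard way: replace $\{C_i^1, C_j^1\}$ by either $\{C_i^1 \cap C_j^1,\ C_i^1 \cup C_j^1\}$ or by $\{C_i^1 \setminus C_j^1,\ C_j^1 \setminus C_i^1\}$, choosing the assignment of the two resulting sets to $i$ and $j$ so that each new cut still contains its own terminal. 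The point is that for commodity $a$ the only requirement on $C_i$ is that it contains $i$ and excludes the other terminals of $S_a$; taking intersections, unions, or set differences with a set $C_j^1$ that belongs to a \emph{different} commodity cannot introduce a terminal of $S_a$ that was not already present, so the constraint ``$C_i$ contains $i$ but no other $j' \in S_a$'' is preserved (one checks this for each of the four candidate sets and picks one that works; at least one always does because $i$ lies in exactly one of $C_i^1 \cap C_j^1$ and $C_i^1 \setminus C_j^1$).

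The delicate case is $a = b$, i.e. $i$ and $j$ are two terminals of the \emph{same} commodity whose cuts cross. Here we are only required to guarantee that \emph{one} of $C_i^2, C_j^2$ separates $i$ from $j$ — this is precisely the weakened feasibility notion highlighted before Definition~\ref{def:feas}, and it is what buys us room. Since $C_i^1$ contains $i$ but not $j$, and $C_j^1$ contains $j$ but not $i$, I would replace the crossing pair by $\{C_i^1 \setminus C_j^1,\ C_j^1 \setminus C_i^1\}$, assigning $C_i^1 \setminus C_j^1$ to $i$ and $C_j^1 \setminus C_i^1$ to $j$. Then $i \in C_i^1 \setminus C_j^1$ (since $i \notin C_j^1$) and similarly $j$ is in its new cut; moreover these two sets are disjoint, so each of them separates $i$ from $j$, and neither contains any terminal it did not contain before, so the ``no other terminal of $S_a$'' property is maintained. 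Crucially, this is a \emph{subset} operation: each new cut is contained in the corresponding old cut, so an edge is in $\delta(\text{new cut})$ only if it was in $\delta(\text{old cut})$ — replacing $\delta(C_i^1)$ by $\delta(C_i^1 \setminus C_j^1)$ never loads an edge that wasn't already loaded by $C_i^1$.

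The factor-of-$2$ loss in edge load comes from the rule used in the cross-commodity case. When we replace $\{C_i^1, C_j^1\}$ by $\{C_i^1 \cap C_j^1,\ C_i^1 \cup C_j^1\}$ (the ``uncrossing by nesting'' rule), a standard submodularity-style identity gives $\delta(C_i^1 \cap C_j^1) \cup \delta(C_i^1 \cup C_j^1) \subseteq \delta(C_i^1) \cup \delta(C_j^1)$ edgewise, and in fact for every edge $e$, $\mathbf{1}[e \in \delta(C_i^1 \cap C_j^1)] + \mathbf{1}[e \in \delta(C_i^1 \cup C_j^1)] \le \mathbf{1}[e \in \delta(C_i^1)] + \mathbf{1}[e \in \delta(C_j^1)]$, so this rule does not increase load at all. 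The same holds for the ``difference'' rule. Hence no single uncrossing step increases any edge load; the subtlety is only that we may be \emph{forced} into a particular rule to respect same-commodity separation, and when both endpoints of an edge migrate into the symmetric difference configuration the bookkeeping over many steps must be controlled. I would argue termination by the same potential as in Lemma~\ref{lem:lam1} — the number of crossing pairs strictly decreases at each step (uncrossing $C_i^1$ and $C_j^1$ cannot create a new crossing between cuts that were previously nested or disjoint, a routine case check) — so the process halts in polynomially many steps with a laminar family. The bound $\ld{\C^1} \le 2\ld{\C^2}$ should then follow by tracking, for each edge, how its load can only have decreased or stayed equal across the process; the factor $2$ is the slack needed because in the worst case the laminarization of a commodity's \emph{own} pair of crossing cuts can force us to keep both halves of a symmetric difference, doubling that commodity's contribution on an edge relative to what a single well-chosen cut could have achieved.

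I expect the main obstacle to be the interaction between the two cases: verifying that the uncrossing rule chosen to preserve same-commodity separation for one pair does not destroy the load guarantee or re-introduce crossings globally, and in particular carrying the factor-$2$ invariant through an arbitrary interleaving of same-commodity and cross-commodity uncrossing steps. The cleanest route is probably to fix an invariant of the form ``for every edge $e$ and every commodity $a$, the total $\delta$-load contributed by $\{C_i\}_{i \in S_a}$ on $e$ is at most twice its value in a fixed reference laminar family,'' and show each rule preserves it.
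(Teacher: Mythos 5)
Your plan treats the whole conversion as a sequence of local uncrossing steps, but it founders on the one configuration that is actually hard, and the step where you dismiss it contains a false claim. You assert that ``taking intersections, unions, or set differences with a set $C_j^1$ that belongs to a \emph{different} commodity cannot introduce a terminal of $S_a$ that was not already present.'' This is wrong for the union: if $i\in S_a$, $j\in S_b$ with $a\ne b$, and $r_i,r_j\in C_i\cap C_j$ (the only crossing pattern your difference rule cannot handle, since both differences evict both terminals), then the nesting rule must hand $C_i\cup C_j$ to one of the two terminals, and $C_j$ is under no obligation to avoid terminals of $S_a$ --- any $i'\in S_a$ with $r_{i'}\in C_j\setminus C_i$ is absorbed into $i$'s new cut. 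Your check ``at least one rule always works'' only verifies that each new cut contains its own terminal; it never verifies the separation invariant against \emph{third} terminals of the same commodity, which is exactly what breaks. This is not a corner case: it is the reason the paper's Step~\ref{step:lam-c} has its elaborate chain reassignment ($C_i\subset C_{i_1}\subset\cdots\subset C_{i_x}$), why the invariant actually maintained is the weaker $|C_i\cap C_j\cap\{r_i,r_j\}|\le 1$ proved by a delicate induction (Lemma~\ref{lem:lam2-separation}), and why even these rules do not suffice to eliminate all crossings. Relatedly, your same-commodity rule assumes $r_j\notin C_i$ and $r_i\notin C_j$ persist throughout the process, but earlier union steps can destroy exactly that property.

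The second gap is that your argument never actually locates where the factor $2$ is paid. As you note, each local uncrossing rule (union/intersection or differences) is load-nonincreasing edge by edge; if all crossings could be resolved by such rules while preserving separation, you would have proved the lemma with factor $1$, and the paper's Figure~\ref{fig:lam-gap} discussion indicates that lossless laminarization is not available in general, so some loss must be incurred at a concrete place. In the paper the local rules are indeed load-preserving (Lemma~\ref{lem:lam2-load}); the factor $2$ is paid in a separate global phase that your proposal has no analogue of: once no local rule applies, the algorithm builds the red/blue conflict digraph $\G$, destroys blue cycles by intersecting along a shortest cycle, and then, component by component, reassigns to each leaf terminal its meta-node in an auxiliary graph whose edge capacities are set to $p_e=2\ld{A}$; the containment of the new cuts in the old ones is not automatic but is proved by the shortest-path charging argument of Lemma~\ref{lem:lam2-inclusion} (path length at least $2|S_P|+2$), and the capacity budget is precisely what yields the factor $2$. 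Your closing suggestion --- an invariant comparing against ``a fixed reference laminar family'' --- is circular, since the existence of such a family with small load is what the lemma is establishing. To repair the proposal you would need (i) a correct treatment of the doubly-contained crossing case that controls which other terminals the expanded cut may swallow, and (ii) an explicit mechanism, with its own load accounting, for the residual crossings that no local rule resolves.
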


In the remainder of this section we interpret cuts as sets of vertices
as well as sets of terminals residing at those vertices. The algorithm
for laminarity in the integral case is given in
Figure~\ref{fig:Int-Lam2}.

\begin{figure*}[]
  \begin{small} 
  \rule[0.025in]{\textwidth}{0.01in}
  {\bf Input:} Graph $G=(V,E)$ with edge capacities $c_e$, commodities
  $S_1,\cdots,S_k$, a family of cuts $\C$ with one cut for every
  terminal in $\cup_a S_a$, such that the cut for terminal $i\in S_a$
  does not contain any terminal $j\ne i$ in $S_a$.\\
  {\bf Output:} A laminar collection of cuts, one for each terminal in
  $\cup_a S_a$, such that for all $a$ and for all $i, j \in S_a$,
  $i\ne j$, either the cut for $i$ or the cut for $j$ separates $i$
  from $j$.\\
  \rule[0.025in]{\textwidth}{0.01in}

  \begin{enumerate}
    \item \label{step:lam-iter} While there are pairs of cuts in $\C$
      that cross, do (see Figure~\ref{fig:int-lam-cases}):
      \begin{enumerate}
	\item \label{step:lam-a} Consider any pair of cuts $C_i,C_j\in
	  \C$ belonging to terminals $i\ne j$ that cross each other,
	  such that $r_i\in C_i\setminus C_j$ and $r_j\in C_j\setminus
	  C_i$. Reassign $C_i = C_i\setminus C_j$ and
	  $C_j=C_j\setminus C_i$. Return to Step~\ref{step:lam-iter}.
	\item \label{step:lam-a'} Consider any three terminals $i_1,
	  i_2, i_3$ with cuts $C_1, C_2$ and $C_3$ such that
	  $r_{i_1}\in C_1\cap C_2\setminus C_3$, $r_{i_2}\in C_2\cap
	  C_3\setminus C_1$, and $r_{i_3}\in C_3\cap C_1\setminus
	  C_2$. Then, reassign these respective intersections to the
	  three terminals. Return to Step~\ref{step:lam-iter}.
	\item \label{step:lam-b} Consider any pair of cuts $C_i,C_j\in
	  \C$ belonging to terminals $i,j\in S_a$ for some $a$ that
	  cross each other, such that $r_i\in C_i\cap C_j$ and $r_j\in
	  C_j\setminus C_i$.  Reassign $C_i = C_i\cap C_j$ and
	  $C_j=C_i\cup C_j$. Return to Step~\ref{step:lam-iter}.
	\item \label{step:lam-c} Consider any pair of cuts $C_i,C_j\in
	  \C$ belonging to terminals $i\ne j$ that cross each other,
	  such that $r_i,r_j\in C_i\cap C_j$, $i\in S_a$ and $j\in
	  S_b$ with $a\ne b$.
	  \begin{itemize}
	  \item Suppose that there is no $i'\in S_a\cap C_j$ with
	    $C_i\subset C_{i'}$. Then, reassign $C_i=C_i\cup C_j$ and
	    $C_j=C_i\cap C_j$; return to Step~\ref{step:lam-iter}.
	    Conversely, if there is no $j'\in S_b\cap C_i$ with
	    $C_j\subset C_{j'}$. Then, reassign $C_j=C_i\cup C_j$ and
	    $C_i=C_i\cap C_j$; return to
	    Step~\ref{step:lam-iter}. (This transformation is similar
	    to Step~\ref{step:lam-b}.)
	  \item If neither of those cases hold, let $i_0=i$, and let
	    $i_1,\cdots,i_x$ denote the terminals in $S_a\cap C_j$
	    with $C_i\subset C_{i_1}\subset C_{i_2} \subset \cdots
	    \subset C_{i_x}$. For $x'\le x-2$, reassign $C_{i_{x'}} =
	    (C_{i_{x'+1}}\setminus C_j) \cup C_{i_{x'}}$,
	    $C_{i_{x-1}}= C_{i_x}\cup C_j$, and $C_{i_x} = C_{i_x}
	    \cap C_j\setminus C_{i_{x-1}}$. Reassign cuts to $j$ and
	    terminals in $S_b\cap C_i$ likewise. Return to
	    Step~\ref{step:lam-iter}.
	  \end{itemize}
	\item If none of the above rules match, then go to
	  Step~\ref{step:lam-d-1}.
      \end{enumerate}
    \item \label{step:lam-d-1} Let $\G$ be a directed graph on the
      vertex set $\cup_a S_a$, with edges colored red or blue, defined
      as follows: for terminals $i\ne j$, $\G$ contains a red edge
      from $i$ to $j$ if and only if $C_j\subset C_i$, and contains a
      blue edge from $i$ to $j$ if and only if $r_j\in C_i$,
      $r_i\not\in C_j$, and $C_j\setminus C_i\ne \emptyset$. We note
      that since no pair of terminals $i$ and $j$ matches the rules in
      Step~\ref{step:lam-iter}, whenever $C_i$ and $C_j$ intersect
      $\G$ contains an edge between $i$ and $j$.

      While there is a directed blue cycle in $\G$, consider the
      shortest such cycle $i_1\rightarrow i_2\rightarrow \cdots
      \rightarrow i_x\rightarrow i_1$. For $x'\le x$, $x'\ne 1$,
      assign to $i_{x'}$ the cut $C_{i_{x'}}\cap C_{i_{x'-1}}$, and
      assign to $i_1$ the cut $C_{i_1}\cap C_{i_x}$.
    \item \label{step:lam-d-2} We show in
      Lemma~\ref{lem:lam2-laminarity} that at this step $\G$ is
      acyclic. For every connected component in $\G$ do:
      \begin{enumerate}
      \item Let $T$ be the set of terminals in the component and $A$
	be the set of corresponding cuts. Assign capacities $p_e =
	2\ld{A}$ to edges in $G$. Let $G_p$ be the graph obtained by
	merging all pairs of vertices that have an edge $e$ with
	$p_e=0$ between them. We call the vertices of $G_p$
	``meta-nodes'' (note that these are sets of vertices in the
	original graph). At any point of time, let $R_i$ denote the
	meta-node at which a terminal $i$ resides.
      \item \label{step:lam-d-2b} While there are terminals in $T$,
	pick any ``leaf'' terminal $i$ (that is, a terminal with no
	outgoing red or blue edges in $\G$). Reassign to $i$ the cut
	$R_i$. Reduce the capacity of every edge $e\in\delta(R_i)$ by
	$1$. Remove $i$ from $T$; remove $i$ and all edges incident on
	it from $\G$. Recompute the graph $G_p$ based on the new
	capacities.
      \end{enumerate}
  \end{enumerate}
  \rule[0.025in]{\textwidth}{0.01in}
  \caption{Algorithm {\em Integer-Lam-2}---Algorithm to convert an
  integral family of multiway cuts into a laminar one}
  \label{fig:Int-Lam2}
  \end{small}
\end{figure*}

\begin{figure*}[t]
\begin{center}
\epsfig{file = 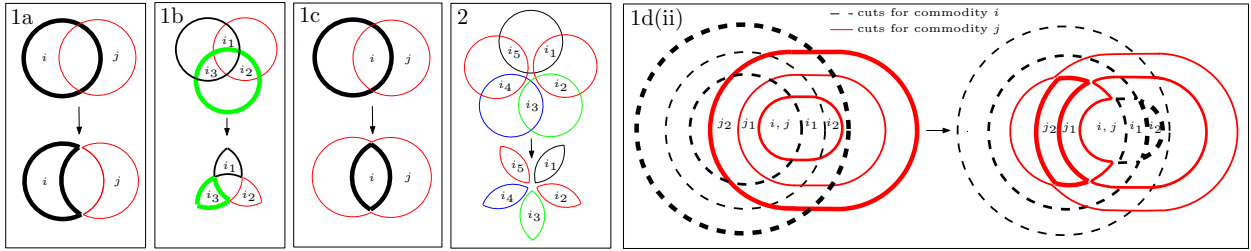, height = 1.3in}
\caption{Some simple rules for resolving crossing cuts. See algorithm
{\em Integer-Lam-2} in Figure~\ref{fig:Int-Lam2} for formal
descriptions.}
\label{fig:int-lam-cases}
\end{center}
\end{figure*}

As in the common sink case, the algorithm starts by applying a series
of simple rules to pairs of crossing cuts while maintaining the
invariant that pairs of terminals belonging to the same commodity are
always separated by at least one of the two cuts assigned to
them. Certain kinds of crossings of cuts are easy to resolve while
maintaining this invariant (Step~\ref{step:lam-iter} of the algorithm
resolves these crossings; see also Figure~\ref{fig:int-lam-cases}). In
Steps~\ref{step:lam-d-1} and \ref{step:lam-d-2}, we ignore the
commodities that each terminal belongs to, and assign new laminar cuts
to terminals while ensuring that the new cut of each terminal lies
within its previous cut (and therefore, separation continues to be
maintained). These steps incur a penalty of $2$ in edge loads.

The rough idea behind Steps~\ref{step:lam-d-1} and \ref{step:lam-d-2}
is to consider the set of all ``conflicting'' terminals, call it
$F$. Then we can assign to each terminal $i\in F$ the cut $\cap_{j\in
F} \hat{C}_j$ where $\hat{C}_j$ is either the cut of terminal $j$ or
its complement depending on which of the two contains $r_i$. These
intersections are clearly laminar, and are subsets of the original
cuts assigned to terminals. Furthermore, if each terminal gets a
unique intersection, then edge loads increase by a factor of at most
$2$. Unfortunately, some groups of terminals may share the same
intersections. In order to get around this, we assign cuts to
terminals in a particular order suggested by the structure of the
conflict graph on terminals (graph $\G$ in the algorithm) and assign
appropriate intersections to them while explicitly ensuring that edge
loads increase by a factor of no more than $2$.

Throughout the algorithm, every terminal in $\cup_a S_a$ has an
integral cut assigned to it. The proof of Lemma~\ref{lem:int-lam2} is
established in three parts:  Lemma~\ref{lem:lam2-laminarity}
establishes the laminarity of the output cut family,
Lemma~\ref{lem:lam2-separation} argues separation, and
Lemma~\ref{lem:lam2-load} analyzes edge loads.

\begin{lemma}
\label{lem:lam2-laminarity}
Algorithm {\em Integer-Lam-2} runs in polynomial time and produces a
laminar cut collection.
\end{lemma}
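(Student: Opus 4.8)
The plan is to show two things: (i) each iteration of the algorithm strictly decreases a suitable nonnegative integer potential, so the algorithm terminates in polynomially many iterations; and (ii) when it terminates, the cut family is laminar. For (ii) I would observe that termination of Step~\ref{step:lam-d-2b} is reached only when every terminal has been reassigned its meta-node $R_i$, and by construction of $G_p$ the meta-nodes of a connected component form a laminar family (they are nested intervals arising from edge contractions); moreover reassigning $i$ the cut $R_i$ keeps $R_i \subseteq$ (previous cut of $i$), and since the $R_i$ are built by successive merges they are pairwise laminar. Across distinct connected components of $\G$ the cuts are disjoint by definition of the components, so the whole output family is laminar. The one place this needs care is that Step~\ref{step:lam-d-2} presupposes $\G$ is acyclic after Step~\ref{step:lam-d-1}; I would prove this by arguing that Step~\ref{step:lam-d-1} removes all \emph{blue} cycles (each iteration replaces the cuts on a shortest blue cycle by their consecutive intersections, which strictly shrinks cuts and destroys that cycle without creating new blue edges into the cycle), and that a red cycle $C_{i_1}\supsetneq C_{i_2}\supsetneq\cdots\supsetneq C_{i_1}$ is impossible, and that a mixed red/blue cycle can be shown (using the definitions of the two edge colors) to contain a shorter monochromatic blue cycle or a red cycle. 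Hence after Step~\ref{step:lam-d-1} $\G$ is acyclic.

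For the termination bound (i) I would treat the three phases separately. For Step~\ref{step:lam-iter}: each of the rules \ref{step:lam-a}--\ref{step:lam-c} either strictly decreases the number of crossing pairs, or — where it does not obviously do so, as in the chain reassignment inside Step~\ref{step:lam-c} — strictly decreases the total size $\sum_i |C_i|$ of the cuts (every reassignment there replaces a cut by a subset or leaves it the same, and at least one cut strictly shrinks), and $\sum_i |C_i|$ never increases under \emph{any} of these rules. So the lexicographic potential $(\sum_i |C_i|,\ \#\text{crossing pairs})$ strictly decreases, giving a polynomial bound since $\sum_i|C_i|\le |V|\cdot|\cup_a S_a|$. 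For Step~\ref{step:lam-d-1}: each shortest-blue-cycle contraction strictly decreases $\sum_i |C_i|$ and, as noted, never turns acyclic-in-blue into cyclic-in-blue for the vertices involved, so it runs at most $|V|\cdot|\cup_a S_a|$ times. Step~\ref{step:lam-d-2b} clearly runs once per terminal per component, hence at most $|\cup_a S_a|$ times total. Each individual iteration is polynomial-time (finding crossing pairs, shortest cycles, leaves, and recomputing $G_p$ are all polynomial), so the whole algorithm is polynomial.

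The main obstacle I anticipate is the acyclicity claim for $\G$ entering Step~\ref{step:lam-d-2}, and more specifically ruling out cycles that mix red and blue edges: the red edges encode strict containment $C_j\subsetneq C_i$ while blue edges encode a ``one-sided overlap'' $r_j\in C_i$, $r_i\notin C_j$, $C_j\setminus C_i\neq\emptyset$, and one has to check that the \emph{absence} of matches to the Step~\ref{step:lam-iter} rules forces enough structure that such a mixed cycle collapses. I would handle this by a careful case analysis on consecutive edge pairs around the cycle: given that none of Steps~\ref{step:lam-a}--\ref{step:lam-c} apply, for any intersecting pair exactly one of the patterns ``red $i\to j$'', ``red $j\to i$'', ``blue $i\to j$'', ``blue $j\to i$'' holds, and composing a blue edge with an incoming red edge can be shown to yield another blue edge (shortcut), which lets one contract any mixed cycle down to a pure blue cycle — contradicting that Step~\ref{step:lam-d-1} has already removed all blue cycles. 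This shortcutting lemma is the real technical content; once it is in hand, laminarity and the polynomial running time follow as sketched.
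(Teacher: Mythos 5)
Your high-level architecture---a potential argument for termination, acyclicity of $\G$ from the red partial order plus a shortcutting composition of red and blue edges, and laminarity from the meta-node structure---matches the paper, and your acyclicity argument is essentially the one given there. But there are two genuine gaps. The first is in the termination argument: your fallback potential for the chain rule rests on a false premise. In the second bullet of Step~\ref{step:lam-c} the cuts are \emph{not} replaced by subsets: the reassignments $C_{i_{x'}} = (C_{i_{x'+1}}\setminus C_j)\cup C_{i_{x'}}$ and $C_{i_{x-1}} = C_{i_x}\cup C_j$ are supersets of the old cuts (likewise $C_j = C_i\cup C_j$ in Step~\ref{step:lam-b}), so $\sum_i |C_i|$ can strictly increase and the lexicographic potential $\bigl(\sum_i|C_i|,\ \text{number of crossing pairs}\bigr)$ need not decrease. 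The paper's argument is that the crossing number itself strictly decreases under \emph{every} rule of Steps~\ref{step:lam-iter} and \ref{step:lam-d-1}---no new crossings are created and the crossings among the participating cuts are resolved---and the chain rule, which is exactly the case you declined to verify, is where this has to be checked; your substitute does not repair it.

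The second gap is the claim that the reassigned cut $R_i$ satisfies $R_i\subseteq C_i$ ``by construction.'' It does not: this holds for the initial graph $G_p$ (every edge of $\delta(C_i)$ starts with capacity $2\ell_e^{A}\ge 2$), but in Step~\ref{step:lam-d-2b} earlier assignments within the same component decrement capacities and trigger further contractions, so the meta-node of $r_i$ could a priori grow across $\delta(C_i)$ before $i$ is processed. Ruling this out is precisely Lemma~\ref{lem:lam2-inclusion}, which the paper proves by a separate inductive charging argument along a shortest path in $G_p$ (each previously assigned terminal inside $C_i$ that loads the path consumes at most two of its at least $2|S_P|+2$ initial units of length). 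The paper's laminarity proof deliberately avoids needing that lemma for the cross-component case: terminals in different components of $\G$ have disjoint cuts (intersecting cuts always induce an edge of $\G$), hence the cut of any terminal outside the component being processed lies inside a single meta-node of $G_{p^T}$, while every cut assigned while processing $T$ is a union of such meta-nodes, which gives laminarity across components directly. If you want to keep your route, you must prove the inclusion statement first rather than assert it.
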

\begin{proof}
As in the previous section define the crossing number of a family of
cuts to be the number of pairs of cuts that cross each other. We first
note that in every iteration of Steps~\ref{step:lam-iter} and
\ref{step:lam-d-1} of the algorithm, the crossing number of the cut
family $\C$ strictly decreases: no new crossings are created in these
steps, while the crossings of the two or more cuts involved in each
transformation are resolved (see
Figure~\ref{fig:int-lam-cases}). Therefore, after a polynomial number
of steps, we exit Steps~\ref{step:lam-iter} and \ref{step:lam-d-1} and
go to Step~\ref{step:lam-d-2}.

Next, we claim that during Step~\ref{step:lam-d-2} of the algorithm
the graph $\G$ is acyclic. This implies that while $\G$ is non-empty,
we can always find a leaf terminal in Step~\ref{step:lam-d-2};
therefore every terminal in $\G$ gets assigned a new cut. It is
immediate that the graph does not contain any directed blue cycles or
any directed red cycles (the latter follows because red edges define a
partial order over terminals). Suppose the graph contains three
terminals $i_1$, $i_2$ and $i_3$ with a red edge from $i_1$ to $i_2$,
and a red or blue edge from $i_2$ to $i_3$, then it is easy to see
that there must be a red or blue edge from $i_1$ to $i_3$. Therefore,
any multi-colored directed cycle must reduce to either a smaller blue
cycle or a cycle of length $2$. Neither of these cases is possible
(the latter is ruled out by definition), and therefore the graph
cannot contain any multi-colored cycles.

Now consider cuts assigned during Step~\ref{step:lam-d-2}. Let $T$ be
the set of terminals corresponding to some component in $\G$ and
$j\not\in T$. Then before $T$ is processed, $j$'s cut is laminar with
respect to all the cuts in $A_T$, and is therefore a subset of some
meta-node in $G_{p^T}$. So the new cuts assigned to terminals in $T$
are also laminar with respect to $j$'s cut.

Finally, consider any two cuts assigned during Step~\ref{step:lam-d-2}
of the algorithm and belonging to two terminals in the same component
of $\G$. Consider the set of all meta-nodes created during this
iteration of Step~\ref{step:lam-d-2}. This set is laminar, and the
cuts assigned during this iteration are a subset of this laminar
family. Therefore, they are laminar.
\end{proof}

\begin{lemma}
\label{lem:lam2-inclusion}
For a commodity $i$ assigned a cut in Step~\ref{step:lam-d-2} of
algorithm {\em Integer-Lam-2}, let $C^1_i$ be its cut before this
step, and $C^2_i$ be the new cut assigned to it. Then $C^2_i\subseteq
C^1_i$.
\end{lemma}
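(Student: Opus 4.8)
The plan is to obtain this as an immediate corollary of a stronger structural invariant that Step~\ref{step:lam-d-2} maintains. Fix the connected component being processed, with terminal set $T$ and cut family $A=\{C^1_j\}_{j\in T}$. I claim property $(\star)$: at every moment during the execution of Step~\ref{step:lam-d-2b}, each meta-node of the current graph $G_p$ is, with respect to the cut $C^1_j$ of every \emph{not-yet-processed} terminal $j\in T$, either entirely contained in $C^1_j$ or entirely disjoint from it. Granting $(\star)$, the lemma follows at once: when terminal $i$ is picked in Step~\ref{step:lam-d-2b}, its meta-node $R_i$ is read off while $i$ is still in $T$, so $(\star)$ applied with $j=i$ gives either $R_i\subseteq C^1_i$ or $R_i\cap C^1_i=\emptyset$; since $r_i\in R_i\cap C^1_i$, the latter is impossible, so $C^2_i=R_i\subseteq C^1_i$.

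I would prove $(\star)$ by induction over the iterations of Step~\ref{step:lam-d-2b}. For the base case, $p_e=2\ld{A}$, so $p_e=0$ exactly when no cut of $A$ crosses $e$; consequently no cut of $A$ separates the endpoints of a zero-capacity edge, hence no cut of $A$ separates any two vertices of an initial meta-node, and so each initial meta-node lies inside or outside every $C^1_j$. For the inductive step, processing terminal $i$ only merges meta-nodes, and it does so precisely along the edges whose residual capacity drops to $0$ in this iteration --- i.e.\ edges that were in $\delta(R_i)$ and had residual capacity exactly $1$, hence had already been ``used'' $2\ld{A}-1$ times. Since the old meta-nodes satisfy $(\star)$, a merged meta-node can violate $(\star)$ for some still-unprocessed $j'\ne i$ only if some edge $e$ contracted in this iteration is itself crossed by $C^1_{j'}$. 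So the whole inductive step reduces to one claim: \emph{whenever an edge $e$ is contracted, every terminal whose $C^1$-cut crosses $e$ has already been processed} (with the current terminal $i$ counting as processed).

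Establishing that last claim is the technical heart --- it is a tight capacity-accounting statement per edge, and I expect it to be the main obstacle. The ingredients are: (i) the present lemma applied inductively to the terminals processed before $i$, so that each such $j$ uses $e$ only through $C^2_j\subseteq C^1_j$, which either crosses $e$ genuinely ($e\in\delta(C^1_j)$) or is an ``internal splitter'' (both ends of $e$ lie in $C^1_j$ but on opposite sides of $C^2_j$); (ii) the leaf rule, which processes terminals in reverse topological order of $\G$; and (iii) the fact, implicit in the analysis of Step~\ref{step:lam-d-1}, that the crossers of $e$ lying on a fixed side of $e$ have pairwise-intersecting cuts and hence --- $\G$ being a DAG with an edge between any two intersecting cuts --- are totally ordered by $\G$, so they are processed strictly innermost-to-outermost. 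Using these I would distribute the $2\ld{A}$ capacity units of $e$ among its $\ld{A}$ crossers, two per crosser (one paying for that crosser's own possible use of $e$, one paying for a use by an internal splitter whose cut strictly contains that crosser's cut while the crosser is still unprocessed), and then argue that every actual use of $e$ is paid by a crosser that is either $i$ or still unprocessed at the time of the use; hence $e$ cannot reach capacity $0$ while an unprocessed crosser other than $i$ survives, which is exactly the claim. Making the charging injective across the two sides of $e$, and handling internal splitters whose cut sits between two nested crossers, requires some care, but these are bookkeeping details rather than new ideas.
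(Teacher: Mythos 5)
Your reduction of the lemma to the invariant $(\star)$ is logically fine, but $(\star)$ itself is false, and so is the per-edge claim you reduce it to (``whenever an edge is contracted, every terminal whose $C^1$-cut crosses it has already been processed''). The ``bookkeeping details'' you defer are therefore not details: no charging scheme can establish a false statement. Here is a concrete configuration that is a legal state at Step~\ref{step:lam-d-2} (it matches none of the rules of Step~\ref{step:lam-iter} and has no blue cycle, so the algorithm reaches Step~\ref{step:lam-d-2} with it unchanged). Take three terminals $j,m_1,m_2$ from three distinct commodities, vertices $r_j,r_{m_1},r_{m_2},u,v,w$, cuts $C_{m_1}=\{r_{m_1},r_{m_2},u,v\}$, $C_{m_2}=C_{m_1}\cup\{w\}$, $C_j=\{r_j,r_{m_1},r_{m_2},u\}$, and edges $(r_{m_1},u)$, $(r_{m_2},u)$, $(u,v)$, $(v,w)$, $(r_j,r_{m_1})$. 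The only edges of $\G$ are blue edges $j\rightarrow m_1$, $j\rightarrow m_2$ and a red edge $m_2\rightarrow m_1$, so the leaf order is forced: $m_1$, then $m_2$, then $j$. The edge $e=(u,v)$ is crossed only by $C_j$, so $p_e=2$. Initially $\{r_{m_1},r_{m_2},u\}$ is one meta-node (the edges $(r_{m_1},u)$ and $(r_{m_2},u)$ have zero capacity). Processing $m_1$ and then $m_2$ assigns this meta-node to both (consistent with the lemma, since it lies in $C_{m_1}$ and $C_{m_2}$) and decrements $p_e$ twice, so $e$ is contracted and the meta-node $\{r_{m_1},r_{m_2},u,v\}$ crosses $C_j$ while $j$ is still unprocessed. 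This kills $(\star)$, your contraction claim, and also the specific charging idea: the two uses of $e$ are by ``internal splitters'' whose cuts contain both endpoints of $e$ but do not contain any crosser of $e$ strictly inside them, and the unique crosser $j$ never uses $e$ at all, so its full budget of $2$ is legitimately consumed before $j$ is processed. (The lemma still holds here: when $j$ is finally processed its meta-node is just $\{r_j\}$.)

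The true statement is genuinely local to $r_i$, and that is how the paper argues it: meta-nodes elsewhere may well cross cuts of unprocessed terminals, but the meta-node containing $r_i$ at the moment $i$ is processed cannot leave $C^1_i$. The paper fixes $v\notin C^1_i$, takes a shortest simple path $P$ from $r_i$ to $v$ in $G_p$ chosen to cross each previously assigned new cut at most twice, and does a capacity count on $P$ alone: previously processed terminals outside $C^1_i$ have cuts disjoint from $C^1_i$ (no $\G$-edge to $i$ plus leaf order), so they cannot load $P$; each previously processed terminal inside $C^1_i$ that loads $P$ must have its \emph{original} cut crossing $P$ (if it contained all of $P$ one would get a red edge to $i$ or an unresolved Step~\ref{step:lam-c} crossing), contributing $2$ units of initial capacity, against at most $2$ units consumed; and $C^1_i$ itself contributes $2$ more. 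Hence the residual distance from $r_i$ to $v$ stays at least $2$. If you want to salvage your plan, you would have to weaken $(\star)$ to a statement about the meta-node of $r_i$ only, at which point you are essentially forced into this kind of path-accounting argument rather than a global per-edge invariant.
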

\begin{proof}
We assume without loss of generality that prior to
Step~\ref{step:lam-d-2} each edge load is at most one; this can be
achieved by splitting a multiply-loaded edge into many edges. We focus
on the behavior of the algorithm for a single component $T$ of $\G$
and prove the lemma by induction over time.

Consider an iteration of Step~\ref{step:lam-d-2b} during which some
terminal $i\in T$ is assigned and let $C_i$ be its original
cut. Consider any vertex $v\not\in C_i$ and let $P$ be a shortest
simple path from $r_i$ to $v$ in $G_{p^T}$ (where the length of an
edge $e$ is given by $p^T_e$ just prior to when $i$ is assigned a new
cut). It is easy to see that there is one such shortest path that
crosses each new cut assigned prior to this iteration in
Step~\ref{step:lam-d-2b} at most twice -- suppose there are multiple
entries and exits for some cut, we can ``short-cut'' the path by
connecting the first point on the path inside the cut to the last
point on the path inside the cut via a simple path of length $0$ lying
entirely inside the cut. We pick $P$ to be such a path. We will prove
that $P$'s length is at least $2$. So the meta-node containing $i$
must lie inside the cut $C_i$, and the lemma holds.

Let $T_1$ (resp. $T_2$) be the set of terminals in $T\setminus C_i$
(resp. $T\cap C_i$) that are assigned new cuts before $i$ in this
iteration. We first note that for any $j$ in $T_1$, prior to this
step, there is no edge from $j$ to $i$ (as $j$ is assigned before
$i$), so $r_i\not\in C_j$, and this along with $r_j\not\in C_i$
implies that $C_i$ and $C_j$ are disjoint. This implies that the new
cut of $j$ (which is a subset of $C_j$ by induction) is also disjoint
from $C_i$, and therefore cannot load any edge with an end-point in
$C_i$. So the only new cuts assigned this far in
Step~\ref{step:lam-d-2b} that load edges in $P$ belong to terminals in
$T_2$.

Now we will analyze $P$'s length by accounting for all the newly
assigned cuts that load its edges. Let $S_P$ be the set of terminals
in $T_2$ that load an edge in $P$, and $j\in S_P$. Since the new cut
of $j$ intersects $P$, by the induction hypothesis, $C_j$ should
either intersect $P$ or contain the entire path inside it. If $C_j$
contains $P$ entirely, then $C_j\setminus C_i\ne\emptyset$, and
furthermore $r_i,r_j\in C_i\cap C_j$. This implies that either
$C_i\subset C_j$ and there is a directed red edge from $j$ to $i$, or
$C_i\setminus C_j\ne \emptyset$, that is, $C_i$ and $C_j$ cross and
should have matched the rule in Step~\ref{step:lam-c} of the
algorithm. Both possibilities lead to a contradiction. Therefore,
$C_j$ must intersect $P$.

Finally, the original total length of the path is at least $2|S_P|+2$,
because each terminal in $S_P$ contributes two units towards its
length, and another two units is contributed by $C_i$. Out of these up
to $2|S_P|$ units of length is consumed by terminals in
$S_P$. Therefore, at the time that $i$ is assigned a cut, at least $2$
units remain.
\end{proof}

\begin{lemma}
\label{lem:lam2-separation}
When algorithm {\em Integer-Lam-2} terminates, for every $a\in [k]$
and $i\ne j\in S_a$, either $C_i$ or $C_j$ separates $i$ from $j$.
\end{lemma}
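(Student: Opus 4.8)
\noindent\emph{Proof proposal.}
The plan is to isolate an invariant maintained by every cut reassignment that \emph{Integer-Lam-2} performs, and which at termination is exactly the claimed property. Call it the \emph{separation invariant}: for every $a\in[k]$ and every pair $i\ne j\in S_a$ we have $r_i\in C_i$ and $r_j\in C_j$, and at least one of $C_i$, $C_j$ separates $i$ from $j$ (equivalently, $r_j\notin C_i$ or $r_i\notin C_j$). At the start this holds in the strong form ``$C_i$ separates $i$ from every other terminal of its commodity'', because by the hypothesis of Lemma~\ref{lem:int-lam2} the cut $C^1_i$ contains $i$ but no other $j\in S_a$. So only the inductive step needs work, and I would proceed reassignment by reassignment.

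First I would dispatch the operations that only \emph{shrink} cuts. These are Steps~\ref{step:lam-a} and \ref{step:lam-a'} of the loop, the blue-cycle contraction of Step~\ref{step:lam-d-1} (which replaces $C_{i_{x'}}$ by $C_{i_{x'}}\cap C_{i_{x'-1}}$), and the final leaf assignment of Step~\ref{step:lam-d-2} (which, by Lemma~\ref{lem:lam2-inclusion}, replaces $C^1_i$ by some $C^2_i\subseteq C^1_i$). In each case a one-line check shows $r_i$ stays inside its own cut, and shrinking cannot destroy an existing separator: if $C_i$ excluded $r_j$ before, it still does. Hence all of these preserve the separation invariant.

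The substance of the argument --- and the step I expect to be the main obstacle --- is the two operations that can \emph{enlarge} a cut, Steps~\ref{step:lam-b} and \ref{step:lam-c}. In Step~\ref{step:lam-b} ($i,j\in S_a$, $r_i\in C_i\cap C_j$, $r_j\in C_j\setminus C_i$) the new $C_i=C_i\cap C_j$ shrinks and now separates $i$ from $j$, while the new $C_j=C_i\cup C_j$ grows; the only pairs whose separator could be lost are pairs $(j,k)$ with $k\in S_a$, $k\ne i,j$. I would rule this out by locating $r_k$: if $r_k\in C_j$ already, then the old separator of $(j,k)$ was necessarily $C_k$, which is untouched, so $(j,k)$ stays separated; and if $r_k\in C_i\setminus C_j$ while $C_k$ also fails to separate $(j,k)$ (so $r_j\in C_k$, which forces $r_i\notin C_k$ by the invariant on $(i,k)$ since $r_k\in C_i$), then the three terminals $i,j,k$ with their cuts $C_i,C_j,C_k$ match exactly the pattern of Step~\ref{step:lam-a'} --- impossible, since Step~\ref{step:lam-a'} is attempted before Step~\ref{step:lam-b}. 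The analysis for Step~\ref{step:lam-c} is analogous but splits into its subcases: there $i\in S_a$, $j\in S_b$ with $a\ne b$, so only same-commodity pairs $(i,k)$ (with $k\in S_a$) are at risk and only in the subcase that grows $C_i$; a lost separator would force $C_i,C_k$ to have matched Step~\ref{step:lam-b}, or would exhibit the nested chain $C_i\subset C_{i_1}\subset\cdots$ of terminals in $S_a\cap C_j$, which is precisely what the explicit precondition and the second bullet of Step~\ref{step:lam-c} forbid or handle. In both enlarging rules the leverage is the priority ordering (when a rule fires, every earlier-listed rule is inapplicable) together with the side conditions written into Step~\ref{step:lam-c}.

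Since every reassignment preserves the separation invariant and the algorithm performs only finitely many of them (Lemma~\ref{lem:lam2-laminarity}), the invariant holds at termination; restricted to same-commodity pairs this is exactly the assertion that for every $a$ and $i\ne j\in S_a$, either $C_i$ or $C_j$ separates $i$ from $j$. The bottleneck is entirely the configuration enumeration for the two cut-enlarging rules; once those are pinned down, Steps~\ref{step:lam-d-1} and \ref{step:lam-d-2} come for free from Lemma~\ref{lem:lam2-inclusion}.
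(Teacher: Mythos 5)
Your proof is correct and is essentially the paper's own argument: the same invariant (for every same-commodity pair, at most one of $r_i,r_j$ lies in $C_i\cap C_j$, i.e.\ one of the two cuts separates), the same induction over reassignments with all shrinking steps --- including Steps~\ref{step:lam-d-1} and \ref{step:lam-d-2} via Lemma~\ref{lem:lam2-inclusion} --- dismissed at once, and the same handling of the two enlarging rules, where Step~\ref{step:lam-b} is contradicted through the priority of Step~\ref{step:lam-a'} and the first bullet of Step~\ref{step:lam-c} through the priority of Step~\ref{step:lam-b} together with its explicit precondition. The only place the paper does slightly more is the second bullet of Step~\ref{step:lam-c}, whose chain reassignments also enlarge cuts and receive a short explicit check there (each new $C_{i_{x'}}$ stays inside the old $C_{i_{x'+1}}$ and excludes $r_{i_{x'+1}}$); your ``handle'' would need that one verification spelled out, but it is the same idea.
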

\begin{proof}
We claim that for every $a\in [k]$ and $i\ne j\in S_a$, at every time
step during the execution of the algorithm, $|C_i\cap C_j
\cap\{r_i,r_j\}|\le 1$. Then since by Lemma~\ref{lem:lam2-laminarity}
the final solution is laminar, the lemma follows. We prove this claim
by induction over time. First, if during any iteration of the
algorithm, we ``shrink'' the cut of any terminal (that is, reassign to
the terminal a cut that is a strict subset of its original cut), then
the claim continues to hold for that terminal, because intersections
of the terminal's cut only shrink in that step. Note that cuts of
terminals expand only in Steps~\ref{step:lam-b} and \ref{step:lam-c}
of the algorithm (by construction and by
Lemma~\ref{lem:lam2-inclusion}).

Suppose that during some iteration we apply the transformation in
Step~\ref{step:lam-b} to terminals $i$ and $j$, reassigning
$C_j=C_i\cup C_j$, and the claim fails to hold for terminal
$j$. Specifically, suppose that for some $j'\in S_a$, after the
iteration we have $r_j,r_{j'}\in C_j\cap C_{j'}$. Then, $r_j\in
C_{j'}$, and therefore $C_{j'}$ intersected $C_j$ prior to the
iteration, and by the induction hypothesis $r_{j'}\in C_i\setminus
C_j$ prior to the iteration. If $r_i\in C_{j'}$, then prior to the
iteration, $i$ and $j'$ contradicted the induction
hypothesis. Otherwise, $i$, $j$ and $j'$ satisfy the conditions in
Step~\ref{step:lam-a'} of the algorithm, and this contradicts the fact
that we apply the transformation in Step~\ref{step:lam-b} at this
iteration.

Next suppose that during some iteration we apply the transformation in
the first part of Step~\ref{step:lam-c} to terminals $i$ and $j$,
reassigning $C_j=C_i\cup C_j$, and the claim fails to hold for
terminal $j$; in particular, for some $j'\in S_a$, after the iteration
we have $r_j,r_{j'}\in C_j\cap C_{j'}$. Then, since $r_j\in C_{j'}$
and the pair of terminals did not match the criteria in
Step~\ref{step:lam-b}, it must be the case that $C_j\subset C_{j'}$
prior to the iteration. Furthermore, $r_{j'}\in C_i$ prior to the
iteration and this contradicts the fact that we applied the
transformation in the first part of Step~\ref{step:lam-c}.

Finally, suppose that during some iteration we apply the
transformation in the second part of Step~\ref{step:lam-c}. Then the
cut assigned to every $i_{x'}$ for $x'\le x-2$ is a subset of the
previous cut of $i_{x'+1}$, but does not contain the latter terminal,
and so by the arguments presented for the previous cases, once again
the induction hypothesis continues to hold for those
terminals. Furthermore, the cut assigned to $i_x$ is a subset of its
original cut and $i_{x-1}$ does not belong to any of the new cuts
except its own. The same argument holds for the $j_{y'}$ terminals.
\end{proof}

\begin{lemma}
\label{lem:lam2-load}
For the cut collection produced by algorithm {\em Integer-Lam-2} the
load on every edge is no more than twice the load of the integral
family of cuts input to the algorithm.
\end{lemma}
\begin{proof}
We first claim that edge loads are preserved throughout
Steps~\ref{step:lam-iter} and \ref{step:lam-d-1} of the
algorithm. This can be established via a case-by-case analysis by
noting that in every transformation of these steps, the number of new
cuts that an edge crosses is no more than the number of old cuts that
the edge crosses prior to the transformation. It remains to analyze
Step~\ref{step:lam-d-2} of the algorithm. We claim that we only lose a
factor of $2$ in edge loads during this step of the algorithm. This is
easy to see.  Note that for every edge $e$, $\sum_{T} p^T_e\le
2\ld{\C_{\cup T}}$, where $\C_{\cup T}$ is the family of cuts
belonging to terminals in any non-singleton component of $\G$ prior to
Step~\ref{step:lam-d-2}. Moreover, in each iteration of the step, we
only load an edge $e$ to the extent of $p^T_e$. Therefore the lemma
follows.
\end{proof}

\begin{proofof}{Lemma~\ref{lem:int-lam2}}
The proof follows immediately from Lemmas~\ref{lem:lam2-laminarity},
\ref{lem:lam2-separation} and \ref{lem:lam2-load}.
\end{proofof}

Given this lemma, algorithm {\em Lam-2} in Figure~\ref{fig:Lam2}
converts an arbitrary feasible solution for \ref{eqn:LP} into a
feasible fractional laminar family.\\

\begin{figure*}[t]
  \begin{small} 
  \rule[0.025in]{\textwidth}{0.01in}
  {\bf Input:} Graph $G=(V,E)$ with edge capacities $c_e$, commodities
  $S_1,\cdots,S_k$, a feasible solution $d$ to the
  program~\ref{eqn:LP}.\\
  {\bf Output:} A fractional laminar family of cuts $\C$ that is
  feasible for $G$ with edge capacities $8c_e+o(1)$.\\
  \rule[0.025in]{\textwidth}{0.01in}

  \begin{enumerate}
    \item \label{step:lam2-metric} For every $a\in [k]$ and every
      terminal $i\in S_a$ do the following: Order the vertices in $G$
      in increasing order of their distance under $d_a$ from
      $r_i$. Let this ordering be $v_0=r_i, v_1, \cdots, v_n$. Let
      $\C^1_i$ be the collection of cuts $\{v_0, v_1, \cdots, v_b\}$,
      one for each $b\in [n]$ with $d_a(r_i,v_b)< 0.5$, with weights
      $w^1(\{v_0, \cdots, v_b\}) =
      2(\min\{d_a(r_i,v_{b+1}),0.5\}-d_a(r_i,v_b))$. Let $\C^1$ denote
      the collection $\{\C^1_i\}_{i\in \cup_a S_a}$.
    \item Let $N=n\sum_a |S_a|$. Round up the weights of all the cuts
      in $\C^1$ to multiples of $1/N^2$, and truncate the collection
      so that the total weight of every sub-collection $\C^1_i$ is
      exactly $1$. Furthermore, split every cut with weight more than
      $1/N^2$ into multiple cuts of weight exactly $1/N^2$ each,
      assigned to the same commodity. Call this new collection $\C^2$
      with weight function $w^2$. Note that every cut in this
      collection has weight exactly $1/N^2$.
    \item Construct a new instance of MCP in the same graph $G$ as
      follows. For each $a\in [k]$, construct $N^2$ new commodities
      with terminal sets identical to that of $S_a$ (that is the
      terminals reside at the same nodes). For every new terminal
      corresponding to an older terminal $i$, assign to the new
      terminal a unique cut from $\C^2_i$ with weight $1$. Call this
      new collection $\C^3$, and the new instance $I$.
    \item Apply algorithm {\em Integer-Lam-2} from
      Figure~\ref{fig:Int-Lam2} to the family $\C^3$ to obtain family
      $\C^4$.
    \item For every $a\in [k]$ and every $i\in S_a$, let $\C^5_i$ be
      the set of $N^2/2$ innermost cuts in $\C^4$ assigned to
      terminals in the new instance $I$ that correspond to terminal
      $i$. (Note that these cuts are concentric as they belong to a
      laminar family and all contain $r_i$. Therefore ``innermost''
      cuts are well defined.) Assign a weight of $2/N^2$ to every cut
      in this set. Output the collection $\C^5$.
  \end{enumerate}

  \rule[0.025in]{\textwidth}{0.01in}
  \caption{Algorithm {\em Lam-2}---Algorithm to convert an LP solution
  into a feasible fractional laminar family}
  \label{fig:Lam2}
  \end{small}
\end{figure*}

\setcounter{theorem}{2}
\addtocounter{theorem}{-1}
\begin{lemma}
Consider an instance of the MCP with graph $G=(V,E)$, edge capacities
$c_e$, and commodities $S_1,\cdots,S_k$. Given a feasible solution $d$
to \ref{eqn:LP}, algorithm {\em Lam-2} produces a fractional laminar
cut family $\C$ that is feasible for the MCP on $G$ with edge
capacities $8c_e+o(1)$.
\end{lemma}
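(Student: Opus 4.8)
The plan is to push a feasible fractional laminar family through the five steps of {\em Lam-2}, showing that the factor $8$ is a product of three $2$'s: one from the capped‑ball construction of Step~\ref{step:lam2-metric}, one from the invocation of Lemma~\ref{lem:int-lam2}, and one from the reweighting in the last step. First I would bound $\ld{\C^1}$. Fix an edge $e=(u,v)$ and a commodity $a$; for $i\in S_a$ write $\phi_i(w)=\min\{d_a(r_i,w),0.5\}$ and assume $\phi_i(u)\le\phi_i(v)$. A telescoping computation with the weights $w^1$ shows that the total weight of cuts of $\C^1_i$ crossing $e$ is exactly $2(\phi_i(v)-\phi_i(u))$, hence $0$ unless $d_a(r_i,u)<0.5$. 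The key claim is that the sum of these quantities over $i\in S_a$ is at most $2d_a(e)$: since $d_a(r_i,r_j)\ge1$ for distinct $i,j\in S_a$, at most two terminals of $S_a$ lie within $d_a$-distance $0.5$ of an endpoint of $e$ (two cannot be near the same endpoint), so all but at most two contribute $0$; if exactly two contribute --- one within $0.5$ of $u$, the other within $0.5$ of $v$, and neither within $0.5$ of both endpoints --- their contributions are $2(0.5-d_a(r_i,u))$ and $2(0.5-d_a(r_j,v))$, whose sum is $2(1-d_a(r_i,u)-d_a(r_j,v))\le2d_a(e)$ by the triangle inequality $d_a(r_i,u)+d_a(r_j,v)\ge d_a(r_i,r_j)-d_a(e)\ge1-d_a(e)$; if only one terminal contributes, its contribution is $\le 2|\phi_i(u)-\phi_i(v)|\le2d_a(e)$. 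Summing over $a$ and using $\sum_a d_a(e)\le c_e$ yields $\ld{\C^1}\le2c_e$. Step~2 rounds up the weights of at most $N$ distinct cuts by less than $1/N^2$ each (and then only shrinks weights and splits cuts), so $\ld{\C^2}\le2c_e+1/N$; and since $\C^3$ merely reweights each of the $N^2$ equal pieces of every $\C^2_i$ to weight $1$ and hands it to a distinct copy of the corresponding terminal, $\ld{\C^3}=N^2\,\ld{\C^2}\le 2N^2c_e+o(N^2)$.

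Next I would invoke Lemma~\ref{lem:int-lam2} on $\C^3$. Its hypothesis holds because in the new instance the cut given to the copy of $i$ in commodity $a^{(\ell)}$ is a ball $\{v:d_a(r_i,v)\le\rho\}$ with $\rho<0.5$, which contains $r_i$ but, as $d_a(r_i,r_j)\ge1$, no other terminal of that commodity. Lemma~\ref{lem:int-lam2} (with Lemma~\ref{lem:lam2-load} for the load direction) then yields a laminar family $\C^4$ with: (i) for every new commodity $a^{(\ell)}$ and distinct copies of $i,j\in S_a$, at least one of the two cuts assigned to them separates the two copies; and (ii) $\ld{\C^4}\le2\,\ld{\C^3}\le4N^2c_e+o(N^2)$. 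I would also note that the transformations of {\em Integer-Lam-2} keep $r_i$ inside the cut owned by $i$, so every cut of $\C^4$ belonging to a copy of $i$ contains $r_i$.

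Finally I would check that $\C^5$ is the required fractional laminar family and is feasible with capacities $8c_e+o(1)$. Laminarity is inherited from $\C^4$. For each original terminal $i$ the $N^2$ cuts of $\C^4$ belonging to its copies all contain $r_i$, hence (being laminar) form a chain $D_i^1\subseteq\cdots\subseteq D_i^{N^2}$; $\C^5_i$ is the sub-chain $D_i^1,\dots,D_i^{N^2/2}$ with weight $2/N^2$ on each, so every cut of $\C^5_i$ contains $r_i$, $\sum_{C\in\C^5_i}w(C)=1$, and $\bigcup_{C\in\C^5_i}C=D_i^{N^2/2}$; this gives the fractional‑laminar‑family conditions and condition~(1) of feasibility. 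The load bound is immediate from $\C^5\subseteq\C^4$ and the weight $2/N^2$: $\ld{\C^5}\le(2/N^2)\,\ld{\C^4}\le8c_e+o(1)$. For condition~(2), fix distinct $i,j\in S_a$. By (i), for every $\ell\in[N^2]$ either $r_j\notin C^4_{i^{(\ell)}}$ or $r_i\notin C^4_{j^{(\ell)}}$, so at least one of these holds for at least $N^2/2$ values of $\ell$; say the first. Since the $D_i^\ell$ form a chain, the indices $\ell$ with $r_j\in D_i^\ell$ form a suffix, and there are at most $N^2/2$ of them, hence they all lie in $\{N^2/2+1,\dots,N^2\}$; therefore $r_j\notin D_i^{N^2/2}=\bigcup_{C\in\C^5_i}C$, which is exactly condition~(2).

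The hard part will be the refined bound of Step~\ref{step:lam2-metric} --- getting the per‑commodity contribution down to $2d_a(e)$ instead of something growing with $|S_a|$, which relies on the pairwise $d_a$-separation of a commodity's terminals --- together with the feasibility argument of the last paragraph: Lemma~\ref{lem:int-lam2} only promises that \emph{one} cut of each same‑commodity pair separates it, and retaining only the innermost half of each terminal's cuts (at doubled weight) is precisely the device that converts this one‑sided guarantee into the two‑sided condition~(2) while costing only the extra factor $2$ in edge load.
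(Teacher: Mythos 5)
Your proposal is correct and follows essentially the same route as the paper: bound $\ld{\C^1}\le 2\sum_a d_a(e)\le 2c_e$, absorb the $1/N$ rounding loss, scale to the integral instance $\C^3$, apply Lemma~\ref{lem:int-lam2} for the factor $2$, and recover two-sided separation by keeping only the innermost $N^2/2$ cuts per terminal at weight $2/N^2$, giving $8c_e+o(1)$. The only differences are that you spell out details the paper leaves implicit (the telescoping/triangle-inequality argument behind the $2\sum_a d_a(e)$ bound and the chain/suffix argument for condition~(2) of Definition~\ref{def:feas}), which is consistent with the paper's proof.
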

\setcounter{theorem}{18}

\begin{proof}
Note first that the cut collection $\C^1$ satisfies the following
properties: (1) For every $a\in [k]$ and $i\in S_a$, every cut in
$\C^1_i$ contains $r_i$, but not $r_j$ for $j\in S_a$, $j\ne i$; (2)
The total weight of cuts in $C^1_i$ is $1$; (3) For every edge $e$,
$\ld{\C^1} \le 2\sum_{a} d_a(e)\le 2c_e$. The family $\C^2$ also
satisfies the first two properties, however loads the edges slightly
more than $\C^1$. Any edge belongs to at most $N$ cuts, and therefore
the load on the edge goes up by an additive amount of at most
$1/N$. Therefore, for every $e$, $\ld{\C^2}\le 2c_e+1/N$. Next, the
collection $\C^3$ is a feasible integral family of cuts for the new
instance $I$ with $\ld{\C^3}=N^2\ld{\C^2}$. Therefore, applying
Lemma~\ref{lem:int-lam2}, we get that $\C^4$ is a feasible laminar
integral family of cuts for $I$ with $\ld{\C^4}\le
2N^2(2c_e+1/N)$. Finally, in family $\C^5$, every terminal $i\in S_a$
gets assigned $N^2/2$ fractional cuts, each with weight
$2/N^2$. Therefore, the total weight of cuts in $\C^5_i$ is $1$. Now
consider any two terminals $i,j\in S_a$ with $i\ne j$. Then, in all
the $N^2$ commodities corresponding to $S_a$ in instance $I$, either
the cut assigned to $i$'s counterpart, or that assigned to $j$'s
counterpart separates $i$ from $j$. Say that among at least $N^2/2$ of
the commodities in $I'$, the cut assigned to $i$'s counterpart
separates $i$ from $j$. Then, the innermost $N^2/2$ cuts assigned to
$i$ in $\C^5$ separate $i$ from $j$. Therefore, the family $\C^5$
satisfies the first two conditions of feasibility as given in
Definition~\ref{def:feas}. Finally, it is easy to see that on every
edge $e$, $\ld{\C^5}\le 2/N^2\ld{\C^4}\le 4(2c_e+1/N)$.
\end{proof}

\section{NP-Hardness}
\label{sec:NP-hard}

We will now prove that CSCP and MCP are NP-hard. Since edge loads for
any feasible solution to these problems are integral, the result of
Theorem~\ref{thm:main} is optimal for the CSCP assuming P$\ne$NP. The
reduction in this theorem also gives us an integrality gap instance
for the CSCP.

\begin{theorem}
CSCP and MCP are NP-hard. Furthermore the integrality gap of
\ref{eqn:LP} is at least $2$ for both the problems. 
\end{theorem}
\begin{proof}
We reduce independent set to CSCP. In particular, given a graph $G$
and a target $k$, we produce an instance of CSCP such that the load on
every edge is at most $1$ if and only if $G$ contains an independent
set of size at least $k$. Let $n$ be the number of vertices in $G$. We
construct $G'$ by adding a chain of $n-k+1$ new vertices to $G$. Let
the first vertex in this chain be $t$ (the common sink) and the last
be $v$. We connect every vertex of $G$ to the new vertex $v$, and
place a terminal $i$ at every vertex $r_i$ in $G$ (therefore, there
are a total of $n$ sources). We claim that there is a collection of
$n$ edge-disjoint $r_i-t$ cuts in this new graph $G'$ if and only if
$G$ contains an independent set of size $k$.

One direction of the proof is straightforward: if $G$ contains an
independent set of size $k$, say $S$, then for each vertex $r_i\in S$,
consider the cut $\{r_i\}$, and for each of the $n-k$ source not in
$S$, consider the cuts obtained by removing one of the $n-k$ chain
edges in $G'$. Then all of these $n$ cuts are edge-disjoint.

Next suppose that $G'$ contains a collection of edge-disjoint cuts
$C_i$, with $r_i\in C_i$ and $t\not\in C_i$ for all $i$. Note that the
number of cuts $C_i$ containing any chain vertex is at most $n-k$
because each of them cuts at least one chain edge. Next consider the
cuts that do not contain any chain vertex, specifically $v$, and let
$T'$ be the collection of terminals for such cuts. These are at least
$k$ in number. Note that any cut $C_i$, $i\in T'$, cuts the edges
$(u,v)$ for $u\in C_i$. Therefore, in order for these cuts to be
edge-disjoint, it must be the case that $C_i\cap C_j = \emptyset$ for
$i,j\in T'$, $i\ne j$. Finally, for two such cuts $C_i$ and $C_j$,
edge-disjointness again implies that $r_i$ and $r_j$ are not
connected. Therefore the vertices $r_i$ for $i\in T'$ form an
independent set in $G$ of size at least $k$.

For the integrality gap, let $G$ be the complete graph and $k$ be
$n/2$. Then, there is no integral solution with load $1$ in
$G'$. However, the following fractional solution is feasible and has a
load of $1$: let the chain of vertices added to $G$ be $v=v_1, v_2,
\cdots, v_{n/2+1}=t$; assign to every terminal $i$, $i\in [n]$, the
cut $\{r_i\}$ with weight $1/2$, and the cut $V\cup
\{v_0,\cdots,v_{\lfloor i/2\rfloor}\}$ with weight $1/2$.
\end{proof}

\section{Concluding Remarks}
\label{sec:lam-gap}

Given that our algorithms rely heavily on the existence of good
laminar solutions, a natural question is whether every feasible
solution to the MCP can be converted into a laminar one with the same
load. Figure~\ref{fig:lam-gap} shows that this is not true. The figure
displays one integral solution to the MCP where the solid edges
represent the cut for commodity $a$, and the dotted edges represent
the cut for commodity $b$. It is easy to see that this instance admits
no fractional laminar solution with load $1$ on every edge.

Is the ``laminarity gap'' small for the more general set multiway cut
packing and multicut packing problems as well? We believe that this is
not the case and there exist instances for both of those problems with
a non-constant laminarity gap.

\begin{figure}[]
\begin{center}
\epsfig{file = 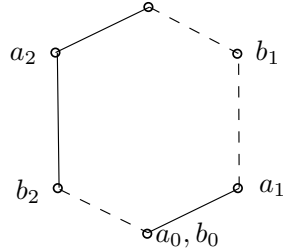, height = 1.3in}
\caption{Each edge has capacity $1$. There are two commodities with
terminal sets $\{a_0, a_1, a_2\}$ and $\{b_0, b_1, b_2\}$.}
\label{fig:lam-gap}
\end{center}
\end{figure}



\begin{small}
\bibliographystyle{plain}
\bibliography{cuts}
\end{small}

\end{document}